\documentclass[10pt]{article}

\usepackage{lmodern}
\usepackage{textcomp}
\usepackage[margin=1in]{geometry}
\usepackage{amsmath,amssymb,amsthm,mathtools}
\usepackage{booktabs}
\usepackage[numbers,sort&compress]{natbib}
\usepackage{graphicx}
\usepackage{subcaption}
\usepackage{xcolor}
\usepackage{tikz}
\usepackage{hyperref}
\usetikzlibrary{arrows.meta,positioning}
\graphicspath{{figures/}}
\newcommand{\figdir}{figures}

\newcommand{\eg}{e.g.,\ }
\newcommand{\ie}{i.e.,\ }

\theoremstyle{plain}
\newtheorem{proposition}{Proposition}
\newtheorem{lemma}{Lemma}
\newtheorem{definition}{Definition}

\theoremstyle{remark}
\newtheorem{remark}{Remark}

\title{Autodeleveraging as Online Learning}
\author{Tarun Chitra \and Nagu Thogiti \and Mauricio Jean Pieer Trujillo Ramirez \and Victor Xu}
\date{February 2026}

\begin{document}
\maketitle

\begin{abstract}
Autodeleveraging (ADL) is a last-resort loss socialization mechanism used by perpetual futures venues when liquidation and insurance buffers are insufficient to restore solvency.
Despite the scale of perpetual futures markets, ADL has received limited formal treatment as a sequential control problem.
This paper provides a concise formalization of ADL as online learning on a PNL-haircut domain: at each round, the venue selects a solvency budget and a set of profitable trader accounts.
The profitable accounts are liquidated to cover shortfalls up to the solvency budget, with the aim of recovering exchange-wide solvency.
In this model, ADL haircuts apply to positive PNL (unrealized gains), not to posted collateral principal.
Using our online learning model, we provide robustness results and theoretical upper bounds on how poorly a mechanism can perform at recovering solvency.
We apply our model to the October 10, 2025 Hyperliquid stress episode.
The regret caused by Hyperliquid's production ADL queue is about 50\% of an upper bound on regret, calibrated to this event, while our optimized algorithm achieves about 2.6\% of the same bound.
In dollar terms, the production ADL model over liquidates trader profits by up to \$51.7M.
We also counterfactually evaluated algorithms inspired by our online learning framework that perform better and found that the best algorithm reduces overshoot to \$3M.
Our results provide simple, implementable mechanisms for improving ADL in live perpetuals exchanges.
\end{abstract}

\section{Introduction}
Perpetual futures (or simply, perpetuals) are expiryless derivatives that provide linear, delta-one exposure to an underlying asset with margin instead of full notional outlay.
Economically, they are close to contracts for difference (CFDs): participants exchange value as marks move, rather than exchanging the underlying asset itself.
This design is why perpetuals are both simple to reason about and capital-efficient for directional risk transfer, and it explains their dominance in crypto derivatives trading~\citep{AngerisChitra2023PerpsSIAM}.
In 2024, centralized exchanges processed roughly~\$58.5T in perpetuals notional versus roughly~\$17.4T in spot volume, a ratio of about 3.3x in favor of perpetuals~\citep{CoinGecko2025Perps,CoinGecko2025Q1}.

Perpetuals are popular because traders can scale exposure linearly with leverage while avoiding duration risk (e.g.~contract roll schedules).
But the absence of expiry does not remove risk; it changes how risk must be managed.
In particular, perpetuals replace expiry settlement with continuous margining and funding-rate transfers, so both traders and the venue must manage solvency continuously rather than only at settlement checkpoints.

\paragraph{Risk Management.}
Compared to dated futures, collateral management in perpetuals is more path-dependent.
Dated futures clear at expiry and have standard variation margin cycles, whereas perpetuals remain open indefinitely and use funding payments to tether perpetual future prices to spot.
A positive funding rate transfers value from longs to shorts, while a negative rate reverses the direction.
This creates a carry cost (or carry income) that forces users to actively manage leverage, holding horizon, and basis exposure over time~\citep{AngerisChitra2023PerpsSIAM}.

Risk management needs to be viewed from both the microeconomic (trader) perspective and the macroeconomic perspective of the exchange.
At the trader level, solvency means a position's assets exceed its liabilities, which is enforced by margin requirements, liquidations, and funding transfers.
At the venue level, solvency means aggregate trader claims remain supportable at the exchange level.
We formalize this in~\S\ref{sec:background}.

\paragraph{Residual Loss Management.}
It is possible for venue-level risk management to fail, leading to a net insolvency.
When an exchange is in such an insolvent state, traders cannot withdraw their full collateral and earned profits.
Define the exchange's shortfall as the positive gap between exchange liabilities and exchange assets.
Residual loss management in the face of insolvencies involves two options for resolving how much traders can withdraw from the exchange: how much of that shortfall does the exchange cover immediately, and which traders bears those losses.
This separation is useful because two venues can cover the same shortfall (in dollar terms) while inducing very different fairness and incentive outcomes depending on how the burden is allocated.

Traditional finance has analogous machinery in the form of central clearing parties (CCPs). 
CCP utilize techniques such as default waterfalls and variation-margin gains haircutting (VMGH) to allocate residual losses across surviving members under predefined rules~\citep{DuffieZhu2011,CPMI_IOSCO_2014}.
Some allocations are approximately proportional to exposure or gains, while others are layered by membership class, collateral seniority, or administrative uplift rules, which is explicitly non-pro-rata in burden~\citep{ERCOT2022CRR}.
Perpetuals venues face the same design problem under faster time constraints and thinner legal settlement buffers due to the bearer asset nature of cryptocurrencies. 

\paragraph{Autodeleveraging.}
Autodeleveraging (ADL) is an algorithmic form of residual loss management: the venue enforces a deterministic rule to reduce profits from winning positions, utilizing these seized profits to close a shortfall in real time.
Operationally, this is a positive-PNL haircut: a winner can lose part of unrealized gains while posted collateral principal remains protected.
In most traditional finance settings, residual losses are reconciled asynchronously through legal and institutional processes after the event.
For bearer, self-custodied crypto systems, delayed discretionary recovery is often infeasible and solvency restoration must be executable by rule at event time.

ADL directly implements both parts of residual-loss design: \emph{severity} (how much shortfall is covered now) and \emph{allocation} (which winners are deleveraged first and by how much).
While there are numerous allocation rules that can be used to implement ADL, two rules dominate: the rules used by the exchanges Binance and Hyperliquid.
Binance-style ADL uses a queue priority based on profitability and leverage to rank who is deleveraged first~\citep{BinanceADL}.
Hyperliquid publishes deterministic liquidation and ADL logic in venue documentation and executes these transitions on a publicly readable state machine~\citep{HyperliquidDocsLiquidations}.
When residual losses occur repeatedly, usage of ADL can naturally be formulated as an online decision problem: each round reveals realized shortfall and market state, a policy chooses an allocation, and performance is measured by cumulative regret against benchmark policies.
To our knowledge, formal analysis of this problem is still sparse, with the paper~\citep{ChitraEtAl2026ADLPreprint} providing the most complete treatment to date giving an ADL trilemma and a broader mechanism space than production queue rules.

\paragraph{October 10, 2025.}
The October 10--11, 2025 stress episode is the key empirical motivation for this paper.
Contemporaneous reporting describes one of the largest liquidation cascades in crypto derivatives and broad ADL activation across major venues, including Hyperliquid and Binance~\citep{CoinDesk2025LargestLiquidations,BinanceADL}.
The critical difference for measurement is observability: Binance is centralized, so researchers do not get a fully auditable public state for each ADL decision, while Hyperliquid exposes enough public state to reconstruct event-time mechanics in detail.

Using the public Hyperliquid reconstruction artifacts, the event spans 21:16--21:27 UTC on October 10, 2025, with \$2{,}103{,}111{,}431 in observed liquidations.
One can view roughly 16 iterative instances of ADL usage over that duration.
ADL, historically, was designed as a last resort mechanism that is only used a once to recover solvency as opposed to being used sequentially.
A natural series of questions arise: How do we analyze the impact of such repeated usage of ADL? Does the mechanism increase or reduce solvency upon iterated usage? What is the optimal policy for repeated ADL usage?

\paragraph{This paper.}
We formalize repeated ADL as an online decision problem over rounds and evaluate a two-term objective that can be measured from public replay data: tracking error and burden concentration.
We then analyze this objective theoretically and empirically with dynamic regret.
Relative to~\citep{ChitraEtAl2026ADLPreprint}, our contributions are:
\begin{itemize}
  \item an explicit public-data observation model and clear replay assumptions for policy comparison,
  \item regret-based evaluation focused on deployable policy classes (with ex post reference policies reported separately),
  \item queue-stability diagnostics linked to convex extreme-point instability (monotonicity violations and adjacent-round rank stability),
  \item an execution-price estimation extension that separates ex ante and ex post benchmarks and yields a bound on cumulative execution-induced failure \(V_T\).
\end{itemize}

Formally, we show that fixed-priority queues can incur linear regret, while adaptive controllers for the same objective can achieve sublinear regret.
We also analyze robustness under execution-price uncertainty.
Relative to~\citep{ChitraEtAl2026ADLPreprint}, our key theoretical addition is an explicit severity-variation term \(P_T^\theta\), which links regret to ex ante/ex post execution mismatch and closeout-liquidity instability.
This directly addresses critique points about benchmark interpretation, timing of available information, and execution sensitivity raised in~\citep{Thogiti2025ADL653M,Thogiti2025TrilemmaCritique,Thogiti2026ADLExplainer}.

On the empirical side, under an explicit public-data replay model for October 10, 2025, we find that Hyperliquid's production ADL policy overshoots by between \$45.0M and \$51.7M.
We compare observed performance to an instance-calibrated regret bound from Proposition~\ref{prop:severity-explicit}.
Using this calibrated scale, production reaches about \(50.0\%\) of the bound.
Our no-regret online algorithm is vector mirror descent (using only start-of-round information), which reaches about \(3.4\%\) of the same bound.
Our analysis improves the empirical analysis of~\citep{ChitraEtAl2026ADLPreprint}, where the authors did not explicitly compare empirical performance to theory.

\section{Background}\label{sec:background}

\subsection{Perpetuals exchanges and funding mechanics}
A \emph{perpetuals futures exchange} is a derivatives venue that allows traders to take leveraged long or short exposure to a continuous, expiryless futures contract.
Traders post collateral (margin) to keep positions solvent, and the venue uses a funding-rate transfer between longs and shorts to keep futures prices aligned with spot.

Each trader $i$ has an account and we denote the set of active accounts at time $t$ by $\mathcal{I}_t$.
For each account \(i\), let their signed position size be \(q_{i,t}\).
An account $i \in \mathcal{I}_t$ if $q_{i,t} \neq 0$.
Similarly, let the mark price be \(m_t\), and index/spot reference be \(s_t\) at time $t$.
The distinction between \(m_t\) and \(s_t\) is operationally important.
The index/spot reference \(s_t\) is an external anchor (typically a composite spot index), while the mark \(m_t\) is the venue's internal fair-price input for margin and liquidation logic.
Funding is designed to control the basis \(m_t-s_t\), but the two prices can diverge during stress.
Ignoring fees for notation, one-step price PNL is \(\Delta \mathrm{PNL}_{i,t}^{\mathrm{price}} = q_{i,t}(m_t-m_{t-1})\).
Perpetuals additionally include funding transfers to keep futures and spot aligned. A reduced-form update is \(\Delta \mathrm{PNL}_{i,t}^{\mathrm{fund}} = -f_t\, q_{i,t}\, m_t\),
where \(f_t\) is signed funding (positive values transfer from longs to shorts).
The total PNL increment is therefore
\begin{equation}
\Delta \mathrm{PNL}_{i,t}
=\Delta \mathrm{PNL}_{i,t}^{\mathrm{price}}+\Delta \mathrm{PNL}_{i,t}^{\mathrm{fund}},
\qquad
\mathrm{PNL}_{i,t}:=\sum_{\tau\le t}\Delta \mathrm{PNL}_{i,\tau}.
\end{equation}
We note that this view of PNL accounting follows~\citep{AngerisChitra2023PerpsSIAM,he2022fundamentals}.

\subsection{Assets, liabilities, and solvency}
For each position \(\mathfrak p_{i,t}\), we use the same balance-sheet convention as the full model: trader assets are posted collateral and trader liabilities are the negative of PNL. Formally,
\(A^{\mathrm{tr}}_{i,t}=c_{i,t}\) and \(L^{\mathrm{tr}}_{i,t}=-\mathrm{PNL}_{i,t}\),
so trader equity is
\(e_{i,t}=A^{\mathrm{tr}}_{i,t}-L^{\mathrm{tr}}_{i,t}=c_{i,t}+\mathrm{PNL}_{i,t}\).
This matches the standard interpretation: if \(\mathrm{PNL}_{i,t}<0\), the account owes the venue; if \(\mathrm{PNL}_{i,t}>0\), the venue owes the account.

At the venue level, exchange assets and liabilities are aggregates of trader-side quantities:
\(A_t^{x}=\sum_{i\in\mathcal I_t}A_{i,t}^{\mathrm{tr}}\) and \(L_t^{x}=\sum_{i\in\mathcal I_t}L_{i,t}^{\mathrm{tr}}\).
Therefore exchange solvency is equivalent to
\begin{equation}
A_t^{x}\ge L_t^{x}
\quad\Longleftrightarrow\quad
\mathsf{Solv}_t(\mathcal P_t)=A_t^x-L_t^x\ge 0.
\end{equation}
Using the position-level equity definition above, this becomes
\begin{equation}
\mathsf{Solv}_t(\mathcal P_t)=\sum_{\mathfrak p\in \mathcal P_t} e_t(\mathfrak p)
=\sum_{i\in\mathcal I_t} e_{i,t}.
\end{equation}
An exchange is insolvent when \(\mathsf{Solv}_t(\mathcal P_t)\le 0\). Equivalently, the exchange-level residual shortfall is
So \(S_t=\big(-\mathsf{Solv}_t(\mathcal P_t)\big)_+\).
When \(S_t>0\), the venue must either socialize losses (ADL), inject external capital, or remain undercollateralized.

\subsection{Liquidations}
When a user's equity approaches zero, the venue liquidates some or all of the position.
We denote by $\Delta q_{i, t}$ the amount of trader $i$'s position that is liquidated at time $t$, with $\Delta q_{i,t} \leq q_{i,t}$.
For this paper, only three prices matter:
\begin{itemize}
  \item \(p^{\mathrm{bk}}(\mathfrak p_{i,t})\): bankruptcy price, defined as the mark price $m_t$ such that \(e_{i,t}=0\).
  \item \(p^{\mathrm{liq}}(\mathfrak p_{i,t})\): liquidation trigger price, typically tied to maintenance margin.
  \item \(p^{\mathrm{liq,exec}}(\mathfrak p_{i,t},\Delta q_{i,t})\): realized order-book execution price for the liquidation trade.
\end{itemize}
The precise algorithms for liquidation vary from venue to venue.
These details are elided here as we focus on what happens when these algorithms fail.
We say that a liquidation fails and creates bad debt when execution is worse than bankruptcy:
\begin{itemize}
  \item Long closeout: shortfall if \(p^{\mathrm{liq,exec}}<p^{\mathrm{bk}}\).
  \item Short closeout: shortfall if \(p^{\mathrm{liq,exec}}>p^{\mathrm{bk}}\).
\end{itemize}
We note that liquidations may be full (\(\Delta q_{i,t}=q_{i,t}\)) or partial (\(\Delta q_{i,t}<q_{i,t}\)).

\subsection{Exchange risk management}
In practice, solvency enforcement is complex because the exchange generally cannot always liquidate positions at the mark price $m_t$.
For instance, if the order book on an exchange has less liquidity than $\Delta q_{i,t}$, then the liquidation cannot be fully executed.
The transaction costs of risk reduction are nondeterministic and depend on factors such as liquidity depth, market impact, order book elasticity, and the execution strategy employed.

When approaching the solvency boundary (i.e. when bad debt occurs), the key region is where positions appear solvent at mark price but not at expected liquidation execution price.
A standard response is partial liquidation up to the bankruptcy boundary, with ADL used for any remaining deficit.

\paragraph{Why real-time deterministic socialization is common in crypto venues.}
Compared with traditional clearing ecosystems, crypto perpetual venues operate with 24/7 markets, global pseudonymous membership, rapid withdrawal optionality, and limited ex post legal netting capacity.
When a stress event occurs, delayed discretionary recovery can trigger immediate run risk because users can attempt to exit collateral before losses are allocated.
Deterministic rule-based ADL therefore acts as an operational commitment device: losses are allocated at event time under public rules instead of relying on subsequent legal reconciliation.

\subsection{Positions and exchange position set}
An exchange is defined by a set of \emph{positions} that represent user accounts.
We define a per user account position object, \(\mathfrak{p}_{i,t}=\big(q_{i,t},\bar p_{i,t},c_{i,t},\mathrm{PNL}_{i,t},e_{i,t}\big)\):
where \(q_{i,t}\) is signed contract size, \(\bar p_{i,t}\) is average entry basis, \(c_{i,t}\) is posted collateral, \(\mathrm{PNL}_{i,t}\) is mark-to-market profit and loss, and \(e_{i,t}\) is account equity.
The set of active positions at round \(t\) is \(\mathcal{P}_t=\{\mathfrak{p}_{i,t}: i\in\mathcal{I}_t\}\).

\subsection{Insolvency and deficits}
Let \(L_t\subseteq\mathcal{I}_t\) be accounts with negative expected equity following liquidation.
The deficit at round $t$ is
\begin{equation}
D_t=\sum_{j\in L_t}(-e_{j,t}(p^{\mathrm{liq,exec}}(\mathfrak{p}_{j, t}, q_{j,t})))_+.
\end{equation}
We abuse notation and refer to $e_{j,t}(p^{\mathrm{liq,exec}}(\mathfrak{p}_{j, t}, q_{j,t}))$ as the trader's equity if their PNL was computed with the mark price $m_t = p^{\mathrm{liq,exec}}$.
The deficit represents the total negative equity in the system and represents the notional value of positions that need to be closed to preserve solvency.

\noindent Next, define the winner set \(W_t=\{i\in\mathcal{I}_t:\mathfrak{p}_{i,t}\in\mathcal{P}_t,\ \mathrm{PNL}_{i,t}>0\}\), with PNL haircut capacity \(u_{i,t}=(\mathrm{PNL}_{i,t})_+\) and \(U_t=\sum_{i\in W_t}u_{i,t}\).
The capacity $U_t$ is the maximum amount of profit that can be haircut to cover the deficit $D_t$.
This paper uses strict PNL-only haircuts: ADL reallocates positive PNL and does not haircut protected collateral.
In particular, each account-level seizure satisfies \(0\le x_{i,t}\le u_{i,t}\), so aggregate budget can only come from positive PNL capacity.
If \(D_t>0\) and $S_t < 0$, liquidation alone cannot restore solvency and the venue must socialize losses, inject outside capital, or remain undercollateralized.

\section{Autodeleveraging}\label{sec:adl}
We formulate autodeleveraging (ADL) as a sequential online decision problem: each round reveals a state, the venue chooses a feasible action, losses are realized, and the system transitions under unknown dynamics.
This follows canonical online learning and reinforcement learning formulations~\citep{Zinkevich2003,Hazan2016,SuttonBarto2018}.

\subsection{ADL Rounds}
Operationally, upon realizing a reference price $s_t$ that causes $D_t > 0$, the venue executes a sequence of operations: attempt liquidation, absorb residual bad debt with insurance when available, and finally, invoke ADL on winners if deficits remain.
In this paper, a \emph{round} is a defined as a contiguous sequence of order book fills created by an ADL algorithm.
That is, a round consists of the set of negative equity positions that are closed by matching with order book positions under a single execution of an ADL algorithm.
While ADL was initially constructed on the assumption that it would be used for a single round, October 10, 2025 demonstrated that this is not true in extreme stress scenarios.

\subsection{Lifecycle of an ADL Round}
A single ADL round has an execution lifecycle that separates severity choice (i.e. how much of the negative equity should be socialized) from allocation and matching (i.e. who should cover the negative equity positions).
First, the venue measures residual loser-side deficit \(D_t\) and chooses a budget to socialized, \(B_t\) (or equivalently chooses \(\theta_t \in (0,1)\) and sets $B_t = \theta_t D_t$).
Determining severity is the core of the ADL problem because required ADL transfer sizes \(q_k^{\mathrm{ADL}}\) are endogenous to liquidation outcomes: negative equity positions that are closed cross the order book at uncertain liquidation prices \(p_k^{\mathrm{liq,exec}}\).
These prices are unknown ex ante and must be estimated in live ADL policies.
Second, the ADL policy chooses winner-side reductions \(x_{i,t}\) (e.g. queue, pro-rata, or another policy).
Third, ADL matches winners (positive equity positions) and losers (negative equity positions to be closed) at the bankruptcy transfer price \(p_k^{\mathrm{bk}}\) of the loser (deterministic given the marked state).
Figure~\ref{fig:adl-lifecycle} summarizes this lifecycle and where the benchmark \(B_t^{\mathrm{needed}}\) is measured.

\subsection{Haircut Benchmark}
For an ADL transfer \(k\) in round \(t\), let \(p_k^{\mathrm{liq,exec}}\) be realized liquidation execution price, \(p_k^{\mathrm{bk}}\) the bankruptcy transfer price used by ADL matching, and \(q_k^{\mathrm{ADL}}\) the signed ADL transfer quantity.
In live operation, liquidation execution prices are not known ex ante, so the venue uses an estimator \(\hat p_k^{\mathrm{liq,exec}}\) and an implied estimated quantity \(\hat q_k^{\mathrm{ADL}}\).
These quantities are used to estimate what fraction of the deficit will be covered by ADL (i.e. severity).
A natural benchmark for estimating the shortfall covered is the expected budget to socialize via ADL:
\begin{equation}
\widehat B_t^{\mathrm{needed}}
=
\sum_{k\in t}\left|\hat p_k^{\mathrm{liq,exec}}-p_k^{\mathrm{bk}}\right|\,|\hat q_k^{\mathrm{ADL}}|.
\end{equation}
This can be interpreted as the ADL algorithm's ex ante estimate of the shortfall covered in one ADL round.
Note that because it can only realize $p^{\mathrm{liq, exec}}$ ex post, we also need to compare to the realized, ex post benchmark:
\begin{equation}
B_t^{\mathrm{needed}}
=
\sum_{k\in t}\left|p_k^{\mathrm{liq,exec}}-p_k^{\mathrm{bk}}\right|\,|q_k^{\mathrm{ADL}}|.
\end{equation}
But note that live, online ADL algorithms cannot directly use $B_t^{\mathrm{needed}}$; it will only be a benchmark for theoretical analysis.

\paragraph{Units and interpretation.}
Most quantities in this paper are dollar-valued (for example \(D_t\), \(B_t^{\mathrm{needed}}\), and \(H_t\)), so reported losses and totals are in dollars.
The main unitless quantity is severity \(\theta_t\in[0,1]\), defined by \(B_t=\theta_t D_t\), which is the fraction of the observed deficit socialized in round \(t\).
Over- and under-socialization are then dollar gaps between \(H_t\) and \(B_t^{\mathrm{needed}}\).

\subsection{State, action, and policy spaces}
Let rounds be indexed by \(t=1,\dots,T\).
Using the position-set notation from~\S\ref{sec:background}, define state space \(\mathcal{S}\) and action space \(\mathcal{A}\) as follows.

\paragraph{State.}
Round state is \(s_t=\big(\mathcal{P}_t,D_t,W_t,u_t,\zeta_t\big)\in\mathcal{S}\), where \(\mathcal{P}_t\) is the exchange position set, \(D_t\) is residual deficit, \(W_t\) is the winner index set, and \(u_t=(u_{i,t})_{i\in W_t}\) are winner capacities.
Here \(\zeta_t\) is an auxiliary vector of round-start, policy-observable market signals (e.g., price and volatility snapshots, spread/depth summaries, and recent liquidation-flow aggregates).

\paragraph{Action.}
An ADL action is \(a_t=(B_t,x_t)\in\mathcal{A}(s_t)\), with aggregate budget \(B_t\) and allocation vector \(x_t=(x_{i,t})_{i\in W_t}\).
The feasible action set is
\begin{equation}
\mathcal{A}(s_t)=
\left\{(B,x):\ 0\le B\le U_t,\ 0\le x_i\le u_{i,t},\ \sum_{i\in W_t}x_i=B\right\}.
\end{equation}
Equivalent parameterization uses severity \(\theta_t\in[0,1]\) and haircut fractions \(h_{i,t}\in[0,1]\): \(B_t=\theta_t D_t,\ x_{i,t}=h_{i,t}(u_{i,t}+\varepsilon)\).
Here \(\varepsilon>0\) is a small numerical regularizer (in the same units as \(u_{i,t}\) and \(D_t\)) used only to avoid divide-by-zero or near-zero denominators in normalized ratios (e.g., \(x_{i,t}/(u_{i,t}+\varepsilon)\) and \(\theta_t^{\mathrm{needed}}=B_t^{\mathrm{needed}}/(D_t+\varepsilon)\)).

\paragraph{Policy.}
An ADL policy is a history-dependent map \(\pi_t:\mathcal{H}_t\to\mathcal{A}(s_t)\), with \(\mathcal{H}_t=(s_1,a_1,\dots,s_{t-1},a_{t-1},s_t)\).
State transitions follow unknown stochastic dynamics \(s_{t+1}=F_t(s_t,a_t,\omega_t)\) with shock \(\omega_t\).

\subsection{Canonical policy examples}
There are two high-level ADL policy families used in practice and analysis: queue-based policies and partial haircut policies.

\paragraph{Queue-based Policies.}
Queueing policies first choose a score \(s_{i,t}\) for each winner, then order positions by score, then close positions in that order.
Formally, if \(\sigma_t\) is a permutation that orders the scores, i.e.
\(
s_{\sigma_t(1),t}\ge\cdots\ge s_{\sigma_t(|W_t|),t},
\)
then queue allocation greedily fills budget \(B_t\): top-ranked accounts are fully closed, with \(x_{\sigma_t(j),t}=u_{\sigma_t(j),t}\).
That is, if $i^{\star} = \min \{ j : \sum_{i=1}^j x_{\sigma_t(i), t} \geq B_t \}$, then positions $\sigma_t(1), \ldots, \sigma_t(i^*-1)$ are fully closed while $\sigma_t(i^*)$ is partially closed.
This structure can produce \(100\%\) haircuts for early queue positions.

\paragraph{Partial Haircut Policies.}
Partial haircut policies distribute round budget across winners so no touched account is fully closed during ADL.
Writing \(h_{i,t}=x_{i,t}/(u_{i,t}+\varepsilon)\), these policies enforce \(h_{i,t}<1\) for participating winners (equivalently, a hard cap \(h_{i,t}\le \bar h<1\)).
This avoids queue-style full closeout of a position and smooths the burden across accounts due to the budget constraint. \\

\noindent \emph{Pro-Rata Policy.}
The quintessential partial haircut policy is pro-rata.
This policy allocates proportionally to positive-PNL capacity: \(x^{\mathrm{PR}}_{i,t} = \frac{u_{i,t}}{U_t}\,B_t\).
This equalizes haircut fraction \(x_{i,t}/u_{i,t}\) across winners and is the canonical fairness benchmark.
In the full model, this benchmark is additionally motivated by axiomatic fairness: under Sybil resistance, scale invariance, and monotonicity, capped pro-rata is the unique stable allocation family \citep{ChitraEtAl2026ADLPreprint}. 
Moreover, multiple live protocols utilize this rule in practice (e.g. Drift and Paradex)~\citep{ChitraEtAl2026ADLPreprint}.\\

\noindent \emph{Min-max Integer Linear Program (ILP).}
In practice, perpetuals exchanges represent contracts (\eg~user positions) as integer counts.
For instance, a single contract might represent a position of size 0.1 BTC.
To ensure that such integrality constraints are met, the pro-rata policy is modified to a rounding integer linear program.
This program minimizes worst-account burden subject to exact budget and lot-feasible execution:
\begin{equation}\label{eq:ilp}
\min_{x,z}\ z
\quad\text{s.t.}\quad
\sum_{i\in W_t}x_i=B_t,\ \
0\le x_i\le u_{i,t},\ \
\frac{x_i}{u_{i,t}+\varepsilon}\le z,\ \
x_i\in\mathcal G_{i,t},
\end{equation}
where \(\mathcal G_{i,t}\) is the contract integral constant for account \(i\).
We note that in~\cite{ChitraEtAl2026ADLPreprint}, it was shown that the empirical rounding error from using the ILP (versus pro-rata) can add up to millions of dollars in practice.

\subsection{Per-round optimization objective}
Each round requires two linked choices: severity (how much budget \(B_t\) to raise) and allocation (how to split \(B_t\) across winners).
As such, objective should penalize both solvency-tracking error and how concentrated haircuts are.
A natural asymmetric round loss is
\begin{equation}\label{eq:loss}
\tilde\ell_t(x_t)=
\lambda_{\mathrm{under}}[B_t^{\mathrm{needed}}-H_t]_+
+\lambda_{\mathrm{over}}[H_t-B_t^{\mathrm{needed}}]_+
+\lambda_{\mathrm{fair}}\,\Gamma_t(x_t,u_t),
\end{equation}
where \(H_t=\sum_i x_{i,t}\) and \(\Gamma_t\) is a concentration functional.
The undershoot and overshoot terms encode solvency-tracking asymmetry, while \(\Gamma_t\) captures how fairly losses are divided amongst users.

For online control and empirical comparability under partial observability, we deploy the convex surrogate
\begin{equation}
\ell_t(x_t)=
\lambda_{\mathrm{track}}\left|H_t-B_t^{\mathrm{needed}}\right|
+\lambda_{\mathrm{fair}}\max_{i\in W_t}\frac{x_{i,t}}{u_{i,t}+\varepsilon}.
\end{equation}
The fairness proxy is the largest fraction of available positive PNL seized from any winner, which matches the burden-concentration diagnostic used in policy evaluation.
Equivalently, writing \(h_{i,t}=x_{i,t}/(u_{i,t}+\varepsilon)\), the objective penalizes \(\max_i h_{i,t}\): this is a worst-hit-participant criterion and corresponds to the min--max burden objective used in our ILP benchmark (Appendix~\ref{app:duality-queue-position}).
As both terms are convex on the feasible polytope (see Appendix~\ref{app:duality-queue-position}), we can solve
\begin{equation}
x_t\in\arg\min_{x\in\mathcal{X}_t}\ell_t(x),
\qquad
\mathcal{X}_t=\{x:\exists B_t\ \text{s.t.}\ (B_t,x)\in\mathcal{A}(s_t)\}.
\end{equation}
\emph{Tracking-versus-allocation decomposition.}
Under exact budget execution, feasibility enforces \(H_t=\sum_i x_{i,t}=B_t\), so \(\left|H_t-B_t^{\mathrm{needed}}\right|=\left|B_t-B_t^{\mathrm{needed}}\right|\).
However, this implies that if there is positive tracking error, then it is because the severity (\ie~the budget) was chosen incorrectly.
We note that the tracking error is primarily driven by the choice of severity, whereas queue/pro-rata choice primarily affects burden allocation (the fairness term).
In practice, additional tracking error can still arise from integral constraints, partial fills/latency, and ex-ante execution-price estimation error.

\subsection{Static and Dynamic Regret}
We use standard online-learning definitions of static and dynamic regret~\citep{Zinkevich2003,Hazan2016,ShalevShwartzBenDavid2014,Besbes2015}.
We connect these to instance-dependent guarantees that scale with realized gradient magnitude or loss curvature rather than horizon alone~\citep{Duchi2011,Gaillard2014SecondOrder}.
For a policy sequence \(\pi=\{x_t\}_{t=1}^T\), static regret is
\begin{equation}
\mathrm{Reg}^{\mathrm{static}}_T(\pi)=
\sum_{t=1}^T\ell_t(x_t)-\min_{x\in\mathcal{X}}\sum_{t=1}^T\ell_t(x),
\end{equation}
and dynamic regret against comparator sequence \(\{x_t^\star\}\) is
\begin{equation}
\mathrm{Reg}^{\mathrm{dyn}}_T(\pi)=
\sum_{t=1}^T\ell_t(x_t)-\sum_{t=1}^T\ell_t(x_t^\star).
\end{equation}
For empirical claims, we make the comparator class explicit.
Let \(\mathcal P\) be the implementable policy library used in replay.
Policy-class regret is
\begin{equation}
\mathrm{Reg}^{\mathcal P}_T(\pi)=
\sum_{t=1}^T\ell_t(x_t^\pi)-\min_{\pi'\in\mathcal P}\sum_{t=1}^T\ell_t(x_t^{\pi'}).
\end{equation}
We also isolate scalar tracking regret:
\begin{equation}
\mathrm{Reg}^{\mathrm{track}}_T(\pi)=
\sum_{t=1}^T\left|H_t^\pi-B_t^{\mathrm{needed}}\right|
-\min_{\pi'\in\mathcal P}\sum_{t=1}^T\left|H_t^{\pi'}-B_t^{\mathrm{needed}}\right|.
\end{equation}
Terminology is fixed throughout:
\emph{tracking error} means \(\left|H_t-B_t^{\mathrm{needed}}\right|\),
\emph{overshoot} means \([H_t-B_t^{\mathrm{needed}}]_+\),
and \emph{undershoot} means \([B_t^{\mathrm{needed}}-H_t]_+\).
Finally, for objective diagnostics we use
\begin{equation}\label{eq:regret-decomp}
\mathcal{R}_{t}^{\mathrm{track}}=\lambda_{\mathrm{track}}\left|H_t-B_t^{\mathrm{needed}}\right|,\quad
\mathcal{R}_{t}^{\mathrm{fairness}}=\lambda_{\mathrm{fair}}\max_{i\in W_t}\frac{x_{i,t}}{u_{i,t}+\varepsilon},\quad
\mathcal{R}_{t}^{\mathrm{total}}=\mathcal{R}_{t}^{\mathrm{track}}+\mathcal{R}_{t}^{\mathrm{fairness}}.
\end{equation}

\subsection{Basic regret bound for the objective}
For the objective~\eqref{eq:loss} in this paper, the tightest and most interpretable bound is the one-dimensional severity result below.
This result upper bounds the regret in terms of the deficit and the variation of the severity.

\begin{proposition}[Deficit-weighted severity bound with explicit constants]\label{prop:severity-explicit}
Consider the one-dimensional severity parameterization \(B_t=\theta_t D_t\), \(\theta_t\in[0,1]\), with loss $\ell_t^{\theta}(\theta_t)=D_t\left|\theta_t-\theta_t^{\mathrm{needed}}\right|,
, \theta_t^{\mathrm{needed}}=\min\!\left\{1,\frac{B_t^{\mathrm{needed}}}{D_t+\varepsilon}\right\}$
Let comparator path variation be $P_T^{\theta}=\sum_{t=2}^{T}\left|\theta_t^\star-\theta_{t-1}^\star\right|$.
Projected OGD on \([0,1]\) with step \(\eta>0\) satisfies
$\mathrm{Reg}^{\mathrm{dyn},\theta}_T
\le
\frac{1+2P_T^{\theta}}{2\eta}
\;+\;
\frac{\eta}{2}\sum_{t=1}^{T}D_t^2,$
since \(|\partial \ell_t^\theta|\le D_t\).
With
\(
\eta^\star=\sqrt{(1+2P_T^\theta)/\sum_t D_t^2}
\),
\begin{equation}
\mathrm{Reg}^{\mathrm{dyn},\theta}_T
\le
\sqrt{(1+2P_T^\theta)\sum_{t=1}^{T}D_t^2}.
\end{equation}
\end{proposition}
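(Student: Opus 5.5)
This is the standard dynamic-regret analysis of projected online gradient descent (Zinkevich 2003), specialized to the interval \([0,1]\) with round-dependent Lipschitz constants. Let the iterates be \(\theta_{t+1}=\Pi_{[0,1]}(\theta_t-\eta g_t)\), where \(g_t\in\partial\ell_t^\theta(\theta_t)\). Since \(\ell_t^\theta(\theta)=D_t|\theta-\theta_t^{\mathrm{needed}}|\), every subgradient satisfies \(g_t\in D_t\,[-1,1]\), so \(|g_t|\le D_t\). Fix an arbitrary comparator path \(\theta_t^\star\in[0,1]\). Convexity gives the linearization
\[
\ell_t^\theta(\theta_t)-\ell_t^\theta(\theta_t^\star)\le g_t(\theta_t-\theta_t^\star),
\]
so it suffices to bound \(\sum_t g_t(\theta_t-\theta_t^\star)\).

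The per-round step uses non-expansiveness of projection onto the convex set \([0,1]\). Because \(\theta_t^\star\in[0,1]\),
\[
(\theta_{t+1}-\theta_t^\star)^2\le(\theta_t-\theta_t^\star)^2-2\eta g_t(\theta_t-\theta_t^\star)+\eta^2 g_t^2 .
\]
Rearranging yields
\[
g_t(\theta_t-\theta_t^\star)\le\frac{(\theta_t-\theta_t^\star)^2-(\theta_{t+1}-\theta_t^\star)^2}{2\eta}+\frac{\eta}{2}D_t^2 .
\]
Summed over \(t\), the \(\eta\)-terms give \(\frac{\eta}{2}\sum_t D_t^2\) directly.

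The main obstacle is the telescoping, because the comparator moves. Following the usual trick, I will rewrite
\[
(\theta_{t+1}-\theta_t^\star)^2=(\theta_{t+1}-\theta_{t+1}^\star)^2+(\theta_t^\star-\theta_{t+1}^\star)\bigl(2\theta_{t+1}-\theta_t^\star-\theta_{t+1}^\star\bigr).
\]
Every point involved lies in \([0,1]\), so the last factor has absolute value at most \(2\). Hence
\[
-(\theta_{t+1}-\theta_t^\star)^2\le-(\theta_{t+1}-\theta_{t+1}^\star)^2+2|\theta_{t+1}^\star-\theta_t^\star| .
\]
Summing the distance terms then gives:
\begin{itemize}
\item \((\theta_1-\theta_1^\star)^2\le 1\), since the diameter of \([0,1]\) is \(1\);
\item \(2\sum_{t=1}^{T-1}|\theta_{t+1}^\star-\theta_t^\star|=2P_T^\theta\) from the comparator drift;
\item a final negative term \(-(\theta_{T+1}-\theta_T^\star)^2\), which is dropped.
\end{itemize}
Dividing by \(2\eta\) yields \(\frac{1+2P_T^\theta}{2\eta}\). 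One could sharpen the constant, since \(|2\theta_{t+1}-\theta_t^\star-\theta_{t+1}^\star|\le 2\) is crude, but the stated constant \(1+2P_T^\theta\) is exactly what this bound gives, so no refinement is needed.

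Finally, minimize \(\frac{a}{2\eta}+\frac{\eta b}{2}\) over \(\eta>0\), with \(a=1+2P_T^\theta\) and \(b=\sum_t D_t^2\). The minimizer is \(\eta^\star=\sqrt{a/b}\), and substituting it gives the bound \(\sqrt{ab}\), which is the displayed inequality. This tuning presumes \(P_T^\theta\) and \(\sum_t D_t^2\) are known in advance (as an oracle or calibrated step). Otherwise one can invoke a doubling trick or a meta-aggregation over step sizes at a constant-factor cost, but that lies outside the stated claim. The clipping \(\theta_t^{\mathrm{needed}}\in[0,1]\) plays no role in the argument beyond keeping the loss well defined. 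The argument is comparator-agnostic, so the bound holds for every path \(\{\theta_t^\star\}\) in \([0,1]\), in particular for \(\theta_t^\star=\theta_t^{\mathrm{needed}}\).
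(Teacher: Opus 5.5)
Your proof is correct and takes essentially the paper's route: the paper uses the same one-dimensional projected-OGD inequality together with the comparator-drift bound $\Delta_t=(\theta_t-\theta_t^\star)^2-(\theta_t-\theta_{t-1}^\star)^2\le 2|\theta_t^\star-\theta_{t-1}^\star|$ (your step with the index shifted by one), $(\theta_1-\theta_1^\star)^2\le 1$, and $|g_t|\le D_t$, and then tunes $\eta$. One harmless slip is that in your difference-of-squares identity the drift factor should be $(\theta_{t+1}^\star-\theta_t^\star)$ rather than $(\theta_t^\star-\theta_{t+1}^\star)$; this does not affect the result, because you immediately bound that term in absolute value.
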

This bound has an intuitive two-factor form.
\(\sum_t D_t^2\) captures event scale: larger deficits make the episode harder to reduce for any policy.
\(P_T^\theta\) captures how quickly target severity moves across rounds.
In ADL, this variation is larger when exchanges mispredict ex post liquidation execution prices (forcing repeated severity corrections) and when closeout liquidity is poor (execution outcomes become more unstable round to round).
So the same mechanism can have low regret in calm, predictable episodes and much higher regret in bursty, illiquid cascades even at similar notional scale.
Relative to prior ADL analysis in~\citep{ChitraEtAl2026ADLPreprint}, this explicit \(P_T^\theta\) term makes the execution and liquidity costs visible in the bound itself.
In this sense, the bound is instance-calibrated: since the per-round slope is \(|\partial \ell_t^\theta|\le D_t\), the rate depends on \(\sqrt{\sum_t D_t^2}\), analogous to adaptive/second-order OCO bounds~\citep{Duchi2011,Gaillard2014SecondOrder}.
This is exactly the lens used in~\S\ref{sec:empirical} to interpret October 10.
Full proofs are in Appendix~\ref{app:regret-bounds-proofs}.
\section{Regret Minimization}
This section analyzes regret minimization for the ADL objective~\eqref{eq:loss}.
We show why queue-based policies are structurally high-regret, characterize trade-offs among no-regret controllers, and connect these predictions to empirical diagnostics.

\subsection{Why queue policies are structurally high-regret}
Queue mechanisms impose a fixed ranking map and then allocate budget greedily.
In nonstationary episodes, ADL regret has two main channels: \emph{tracking error} from severity mismatch (\(B_t\) vs.\ \(B_t^{\mathrm{needed}}\)) and \emph{fairness error} from burden concentration.
The fixed queue map mainly drives the fairness error by repeatedly overloading top-ranked winners, while severity mistakes are amplified by rounding, partial fills, and execution uncertainty.
Hence queue policies are expected to accumulate regret quickly in clustered-stress episodes, as in October 10, 2025.
We first illustrate this explicitly with a single example.

\paragraph{Fixed queue policies can incur linear regret.}\label{ex:queue-linear-regret}
Consider two winning accounts \(i\in\{1,2\}\), an exact round budget \(B_t=1\), a feasible set
$\mathcal{X}_t=\{x\in\mathbb{R}_+^2:\ x_1+x_2=1,\ 0\le x_i\le u_{i,t}\}$,
and consider the fairness-only loss $\ell_t(x)=\lambda_{\mathrm{fair}}\max_i \frac{x_i}{u_{i,t}}$.
Let capacities alternate with \(M>1\):
\[
(u_{1,t},u_{2,t})=
\begin{cases}
(1,M), & t\ \text{odd},\\
(M,1), & t\ \text{even}.
\end{cases}
\]
Now impose a fixed queue that always serves account \(1\) first and greedily fills the full budget before considering account \(2\). Since \(u_{1,t}\ge 1\) in every round, this queue always outputs $x_t^{\mathrm{queue}}=(1,0)$.
Its per-round loss is then:
\[
\ell_t(x_t^{\mathrm{queue}})=
\begin{cases}
\lambda_{\mathrm{fair}}\max\{1,\ 0/M\}=\lambda_{\mathrm{fair}}, & t\ \text{odd},\\
\lambda_{\mathrm{fair}}\max\{1/M,\ 0\}=\lambda_{\mathrm{fair}}/M, & t\ \text{even}.
\end{cases}
\]
A dynamic comparator that allocates to the higher-capacity account each round chooses
\[
x_t^\star=
\begin{cases}
(0,1), & t\ \text{odd},\\
(1,0), & t\ \text{even},
\end{cases}
\]
which yields \(\ell_t(x_t^\star)=\lambda_{\mathrm{fair}}/M\) in every round. Therefore, for even \(T\),
\begin{equation}
\label{eq:queue-linear-omega}
\mathrm{Reg}^{\mathrm{dyn}}_T(\mathrm{queue})
=\sum_{t=1}^{T}\left[\ell_t(x_t^{\mathrm{queue}})-\ell_t(x_t^\star)\right]
=\frac{T}{2}\lambda_{\mathrm{fair}}\!\left(1-\frac{1}{M}\right)=\Omega(T).
\end{equation}
So even with exact budget tracking (\(x_{1,t}+x_{2,t}=B_t\) each round), a fixed queue can accumulate linear fairness regret under alternating round geometry.

\subsection{Estimation of $p^{\mathrm{exec}}$: tracking failure beyond regret}
\label{sec:pexec-estimation}

In practice, regret is a coarse measurement that doesn't account for how execution costs impact tracking error.
Execution costs can make realized severity too low to cover deficits.
In this section, we adapt our regret bounds to include execution-based failure modes.

\paragraph{Why execution-price estimation enters tracking.}
As feasibility enforces \(H_t=\sum_{i\in W_t}x_{i,t}=B_t\), within-round allocation does not by itself create tracking error.
Tracking error is mainly due to the choice of severity $\theta_t$, which chooses $B_t = \theta_t D_t$.
Severity depends on ADL quantities \(q_k^{\mathrm{ADL}}\), which are only known ex post because they are induced by uncertain liquidation execution at price \(p_k^{\mathrm{liq,exec}}\).

\paragraph{Estimated versus ex post benchmarks.}
Let \(\Delta p_{k,t}(q) := |p^{\mathrm{exec,liq}}_{k,t}(q) - p^{\mathrm{bk}}_{k,t}|\), where \(p^{\mathrm{bk}}\) is the bankruptcy transfer price
used for matching. The ex post (policy-comparable) needed benchmark is
\begin{equation}
\label{eq:needed-expost}
B_t^{\mathrm{needed}} := \sum_{k\in t} \Delta p_{k,t}(q^{ADL}_{k,t})\, |q_{k,t}^{\mathrm{ADL}}|.
\end{equation}
In deployment, the venue does not know \(q_{k,t}^{\mathrm{ADL}}\) when choosing severity and therefore uses \(\hat q_{k,t}^{\mathrm{ADL}}\) from \(\hat p^{\mathrm{liq,exec}}_{k,t}(\cdot)\).
We analogously define \(\Delta \hat{p}_{k,t}(q):=|\hat p^{\mathrm{exec,liq}}_{k,t}(q)-p^{\mathrm{bk}}_{k,t}|\) and
\begin{equation}
\label{eq:needed-exante}
\widehat B_t^{\mathrm{needed}} := \sum_{k\in t} \Delta \hat{p}_{k,t}(\hat q^{ADL}_{k,t})\, |\hat q_{k,t}^{\mathrm{ADL}}|.
\end{equation}
An ADL mechanism aims to target \(H_t\approx \widehat B_t^{\mathrm{needed}}\) to reduce severity tracking error.

\paragraph{What is new in this separation.}
Relative to~\citep{ChitraEtAl2026ADLPreprint}, we explicitly separate the ex ante quantity used to make decisions (\(\widehat B_t^{\mathrm{needed}}\)) from the ex post replay benchmark used for evaluation (\(B_t^{\mathrm{needed}}\)).
This yields a sharper interpretation: regret against the ex ante target measures online control quality, while the ex ante/ex post gap isolates execution-model error and liquidity-driven mismatch.
This is intended as a direct correction to known measurement ambiguities discussed in public critiques~\citep{Thogiti2025ADL653M,Thogiti2026ADLExplainer} of the first paper to formalize ADL~\cite{ChitraEtAl2026ADLPreprint}.

\paragraph{Ex post severity error.}
When execution is underestimated, \(H_t\) can fall short of \(B_t^{\mathrm{needed}}\), so we track this gap directly via
\begin{equation}
\label{eq:VT-failure}
V_T \;:=\; \sum_{t=1}^T \big[B_t^{\mathrm{needed}} - H_t\big]_+,
\end{equation}
which can be large even when regret is small.

\paragraph{Total Tracking Error Minimization.}
We now consider a notion of total tracking error, which is the sum of regret and ex post severity error.
Let \(\ell_t(x;b)\) denote the deployed convex surrogate with tracking target \(b\):
\[
\ell_t(x;b) := \lambda_{\mathrm{track}}\,\big|{\bf 1}^\top x - b\big|
+ \lambda_{\mathrm{fair}}\max_{i\in W_t}\frac{x_{i,t}}{u_{i,t}+\varepsilon}.
\]
In Appendix~\ref{app:pexec-estimation}, we show that for any policy sequence \(\pi=\{x_t\}\) and comparator class \(\mathcal{P}\):
\begin{equation}
\label{eq:total-error-decomp}
\sum_{t=1}^T \ell_t(x_t)
\;\le\;
\min_{\pi'\in\mathcal{P}}\sum_{t=1}^T \ell_t(x^{\pi'}_t)
\;+\;
\underbrace{\mathrm{Reg}^{\mathcal{P}}_{T}(\pi;\hat\ell)}_{\text{online control / optimization error}}
\;+\;
\underbrace{2\lambda_{\mathrm{track}}\sum_{t=1}^T \big|B_t^{\mathrm{needed}}-\widehat B_t^{\mathrm{needed}}\big|}_{\text{execution-estimation error}}.
\end{equation}
When \(H_t\approx \widehat B_t^{\mathrm{needed}}\), ex post severity error scales with optimistic estimation error:
\(V_T\lesssim\sum_{t=1}^T [B_t^{\mathrm{needed}}-\widehat B_t^{\mathrm{needed}}]_+\).
Appendix~\ref{app:pexec-estimation} gives the formal statement and proof of~\eqref{eq:total-error-decomp}.

\paragraph{Linear Impact Model.}
To illustrate how queues can underperform other ADL mechanisms, we consider a simple linear price impact model for execution cost.
This model is stylized, but it clearly shows how estimation error can shift performance from sublinear to linear.

Our linear model assumes that at each time $t$, \(p^{\mathrm{liq,exec}}_t(q)=p^{\mathrm{mark}}\mp \alpha_t q\), for an unknown local slope \(\alpha_t\).
Note that the sign of $\alpha_t$ is direction-dependent: use \(-\) for liquidating long inventory (sell pressure lowers execution price) and \(+\) for liquidating short inventory (buy pressure raises execution price).
We assume that ex ante, prior to executing an ADL mechanism, the designer estimates a slope $\hat{\alpha}_t$ and sets $\hat{p}^{\mathrm{exec,liq}}_t(q) = p^{\mathrm{mark}}_t \mp \hat{\alpha}_t q$.
Using equations~\eqref{eq:needed-exante} and~\eqref{eq:needed-expost}, this implies that
\[
B_t^{\mathrm{needed}} - \hat{B}_t^{\mathrm{needed}} \leq \sum_{k \in t} |p^{\mathrm{exec,liq}}_{k,t}(q_{k,t}^{ADL})-\hat{p}^{\mathrm{exec,liq}}_{k,t}(q_{k,t}^{ADL})| |q_{k,t}^{ADL}| = |\alpha_t - \hat{\alpha}_t| \left(\sum_{k \in t} |q_{k,t}^{ADL}|^2\right)
\]
If for some \(Q > 0\), we have \(q_{k,t}^{ADL} \leq Q\), then \(\left|B_t^{\mathrm{needed}}-\hat{B}_t^{\mathrm{needed}}\right| \leq Q^2 |\alpha_t - \hat{\alpha}_t|\) round-wise.
This implies that, for example, if the mechanism estimates $\hat{\alpha}_t$ such that $|\hat{\alpha}_t - \alpha_t| = O(\frac{1}{\sqrt{T}})$, then $V_T = O(\sqrt{T})$.
On the other hand, if for some $C > 0$, $|\alpha_t - \hat{\alpha}_t| \geq C$, then $V_T = \Omega(T)$.
In Appendix~\ref{app:pexec-estimation}, we formalize this in Proposition~\ref{prop:VT-linear-model} and show that if the total estimation error $\sum_{t} |\alpha_t - \hat{\alpha}_t|$ is sufficiently smooth then we have sublinear regret.
Note that one can interpret this result as saying that an ADL mechanism can handle a constant number (in $T$) of large misestimations and still have low severity tracking error.

\subsection{Queue Instability}
Given tracking errors, a natural question to ask is how robust ADL mechanisms are to perturbations.
We study this theoretically using convex analysis (Appendix~\ref{app:queue-instability}) and a linear impact feedback model (Appendix~\ref{app:pexec-estimation}).
The main critique is not that queue allocation must always break immediate solvency restoration.
When budget execution is exact (\(H_t=B_t\)) and matching is fixed, immediate tracking is primarily a severity-choice question.
Queue failure instead appears through mechanism quality: discontinuity of the allocation map, rank instability under small perturbations, and strict dominance gaps under min--max burden objectives.
These are the channels used below and in Section~\ref{sec:empirical} to evaluate queue performance.

\paragraph{Linear Impact Model.}
Under the linear impact model, Appendix~\ref{app:pexec-estimation} (Proposition~\ref{prop:queue-churn-omegaT}) shows that a fixed queue ordering can create \(\Omega(T)\) variation in effective liquidation conditions, even when accounts do not return once closed.
On the same sequence, smooth mixing policies keep this variation bounded.
The effect is indirect: allocation changes which winners are closed each round, which changes next-round state composition and can change later liquidation slopes.
In Section~\ref{sec:empirical}, replay keeps the realized market path fixed, so we do not estimate the size of this feedback effect; we leave that quantification to future work.

\paragraph{Convex Analysis.}
In Appendix~\ref{app:queue-instability}, we use convex geometry to demonstrate a similar lack of queue robustness.
We define a feasibility polytope that represents the set of feasible severities and allocations that can be chosen.
We show that queue mechanisms correspond to extreme points of this polytope.
This implies, in particular, that arbitrarily small changes in scores can lead to large changes in ADL allocation. 
The empirical instability diagnostics in Section~\ref{sec:empirical} (monotonicity violations and adjacent-round rank stability) are intended to measure this convex geometric instability.

\section{Empirical Analysis}\label{sec:empirical}
We study the October 10, 2025 Hyperliquid stress event using public replay artifacts from~\citep{HyperMultiAssetedADL}, canonical event data from~\citep{HyperReplayCanonicalData, PluriholonomicADLRepo}.
The 21:16--21:27 UTC window contains about \$2.103B in liquidations and decomposes into \(T=16\) ADL rounds using the methodology of~\cite{ChitraEtAl2026ADLPreprint}.
This clustered-round structure is the empirical motivation for online evaluation.
Unless stated otherwise, empirical counts are reported at the canonical ADL-event level rather than per-user or per-fill derived rows.

\paragraph{Observation model and replay invariants.}
Claims are made under a public-data observation model, not full internal-ledger reconstruction.
Across counterfactual policies, replay holds fixed round boundaries, loser deficits \(D_t\), winner sets \(W_t\), winner capacities \(u_{i,t}\), the observable round-start context \(\zeta_t\), bankruptcy transfer prices \(p_k^{\mathrm{bk}}\), and the realized market price path.
Policy differences therefore enter through severity and allocation, not through re-labeling the round state.

\paragraph{What is not modeled.}
We do not model policy feedback into liquidation behavior, order-book resilience, HLP/market-maker participation, strategic order placement, or withdrawals during the event.
Results should therefore be interpreted as partial-equilibrium policy comparisons on a fixed realized path, not as a full market-equilibrium simulation.

\paragraph{Markout horizon and economic interpretation.}
As in~\cite{ChitraEtAl2026ADLPreprint}, we use markout to estimate the counterfactual value of a position closed by ADL.
The horizon parameter \(\Delta\) is swept from immediate to short-horizon evaluation windows.
Economically, this captures plausible unwind-value uncertainty in stressed books rather than a single point estimate.

\paragraph{Policy classes and information sets.}
We evaluate production Hyperliquid queue, integer pro-rata, continuous pro-rata, vector mirror descent, and min-max ILP under the same replay rounds.
Deployable policies (production queue, integer pro-rata, vector mirror descent) use only round-start inputs.
Ex post reference policies (continuous pro-rata, min-max ILP) use realized replay quantities unavailable at decision time and are reported only as reference lower bounds.

For policy \(\pi\), define the round-level concentration ratio \(m_t^\pi := \max_{i\in W_t}\frac{x_{i,t}^{\pi}}{u_{i,t}+\varepsilon}\).
In the empirical decomposition, we report
\begin{equation}
\ell_{t,\mathrm{emp}}^{\lambda}(\pi)
=
\left|H_t^{\pi}-B_t^{\mathrm{needed}}\right|
\;+\;
\lambda\,B_t^{\mathrm{needed}}\left|m_t^\pi-m_t^{\mathrm{ILP}}\right|,
\end{equation}
where \(m_t^{\mathrm{ILP}}\) is the min--max ILP reference concentration on the same replay round.

\paragraph{Metric glossary.}
We report cumulative objective value \(L^\lambda(\pi)=\sum_t \ell_{t,\mathrm{emp}}^\lambda(\pi)\), its tracking and fairness components \(T(\pi)\) and \(F(\pi)\), and policy-class regret only when explicitly labeled.

\subsection{Empirical Objective and Regret Diagnostics}
The falsifiable prediction is that, on fixed replay rounds with fixed state trajectory, adaptive/smoother policies achieve lower cumulative tracking and fairness components than queue baselines.
We tested this on the October 10 replay data, and the results ranked policies as predicted, with Hyperliquid's production queue the worst performer (Table~\ref{tab:regret-totals}, Figures~\ref{fig:total-regret-short},~\ref{fig:overshoot-regret-short}, and~\ref{fig:fairness-short}).

\paragraph{Main findings.}
When \(\lambda=1\), production queue has \(T(\pi)=\$53.78\)M tracking component and \(F(\pi)=\$11.08\)M fairness component, for \(L^\lambda(\pi)=\$64.86\)M total (Table~\ref{tab:regret-totals}, Figure~\ref{fig:total-regret-short}).
At \(\Delta=0\), production overshoot \([H_t-B_t^{\mathrm{needed}}]_+\) is about \$45.0M; under short-horizon markout sweep it remains \$45.0M--\$51.7M (Figure~\ref{fig:horizon-short}).
Decompositions are shown in Figures~\ref{fig:per-round-short},~\ref{fig:overshoot-regret-short}, and~\ref{fig:cum-regret-short}.

\paragraph{Deployable subset (deployment-relevant).}
Within deployable policies, production cumulative objective value is \$64.86M, versus \$3.40M (integer pro-rata) and \$4.41M (vector mirror descent), a reduction of roughly one order of magnitude.
This ordering is visible in Table~\ref{tab:regret-totals} and Figures~\ref{fig:total-regret-short} and~\ref{fig:cum-regret-short}.
Ex post reference policies are lower still, but these require information that is unavailable when the decision is made.

\paragraph{Interpretation.}
The key empirical quantity is the gap between executed severity \(H_t\) and the replay transfer benchmark \(B_t^{\mathrm{needed}}\).
In this event, production queue both overshoots this benchmark and concentrates burden on a smaller set of winners than the best deployable alternatives (Figures~\ref{fig:horizon-short},~\ref{fig:cum-overshoot-short}, and~\ref{fig:fairness-short}).

\subsection{Queue Instability}
Queue robustness is evaluated with two diagnostics: monotonicity violations (adjacent inversions after ADL) and adjacent-round rank stability of normalized burden.
Results are consistent with Appendix~\ref{app:queue-instability}: production queue has much higher inversion rates (about \(11.4\%\)) than vector mirror descent (about \(0.64\%\)) and pro-rata (approximately \(0\%\)), and lower rank stability (about \(0.34\) vs \(0.72\) for vector and \(0.86\) for pro-rata).
Definitions and full plots are in Appendix~\ref{app:queue-instability} and Figures~\ref{fig:mono-viol-short} and~\ref{fig:rank-stability-short}.

\subsection{Instance-Calibrated Upper Envelope}\label{eq:emp-regret-bound}
From Proposition~\ref{prop:severity-explicit},
\(\mathrm{Reg}^{\mathrm{dyn},\theta}_T
\le
\sqrt{(1+2P_T^\theta)\sum_{t=1}^{T}D_t^2}.\)
Using replay-derived estimates, \(\widehat P_T^\theta=7.06\) for \(T=16\), define the instance-calibrated upper envelope
\[
\mathcal{B}_{\mathrm{inst}}
:=
\sqrt{(1+2\widehat P_T^\theta)\sum_{t=1}^{T}D_t^2}
\approx \$129.7\text{M}.
\]
This is not an information-theoretic worst-case statement; it is an episode-calibrated envelope obtained by plugging realized deficits and estimated variation into Proposition~\ref{prop:severity-explicit}.
Production cumulative objective value is \$64.86M, i.e. about \(50.0\%\) of \(\mathcal{B}_{\mathrm{inst}}\).
The best baseline that uses only start-of-round information has \$3.40M (about \(2.6\%\) of \(\mathcal{B}_{\mathrm{inst}}\)), so roughly \(47.4\%\) of \(\mathcal{B}_{\mathrm{inst}}\) reflects avoidable production loss on this path.
Figures~\ref{fig:regret-to-bound-ratio-short} and~\ref{fig:cum-bound-vs-regret-short} plot total objective against this bound.

\section{Conclusion}
This paper studied the theoretical and empirical properties of autodeleveraging policies under repeated use.
We formulated a compact online-learning model and used it to evaluate ADL performance on October 10, 2025.
A key clarification is the distinction between exchange deficit \(D_t\), replay transfer benchmark \(B_t^{\mathrm{needed}}\), and realized winner haircut \(H_t\).
Under this accounting, production queue exhibits large over-socialization relative to \(B_t^{\mathrm{needed}}\), while online-feasible alternatives materially reduce tracking component and total objective.
Replay-oracle baselines (i.e., policies that use ex post replay quantities unavailable at round start) are useful upper bounds, but are not interpreted as deployment-feasible event-time controllers.
Our queue-instability analysis gives a churn-robust structural critique: discontinuous extreme-point selection can induce high effective nonstationarity under repeated rounds, though the empirical magnitude of this feedback channel remains to be identified.
Our results complement those of~\cite{ChitraEtAl2026ADLPreprint}, in that we provide a positive result relative to that paper's negative result of the so-called ADL trilemma.

% This paper reformulates ADL as an online learning problem with a PNL-haircut state space.
% That reformulation yields a compact evaluation framework: budget-tracking regret, fairness regret, and total regret over stress rounds.

% In the October 10, 2025 case study, production queue ADL has high cumulative regret relative to transparent alternatives.
% The implication is practical: ADL mechanism choice materially changes trader harm and long-run venue quality, even when solvency targets are fixed.

% In this paper, we focus on the components directly estimable from public data (tracking and fairness) under an explicit observation model.
% The broader trilemma result from \citep{ChitraEtAl2026ADLPreprint} remains: solvency, fairness, and revenue are jointly constrained, but that constraint does not force high-regret queue rules.
% Online optimization provides a concrete path to lower-regret ADL design with implementable policy classes.

{\small
\bibliographystyle{abbrvnat}
\bibliography{refs}
}

\appendix
\clearpage
\section{Figures and Tables}\label{app:figures-tables}

\subsection{Empirical workflow}
\begin{figure}[ph!]
\centering
\IfFileExists{\figdir/fig1-obs-model-adl-paper.png}{%
\includegraphics[width=0.90\linewidth]{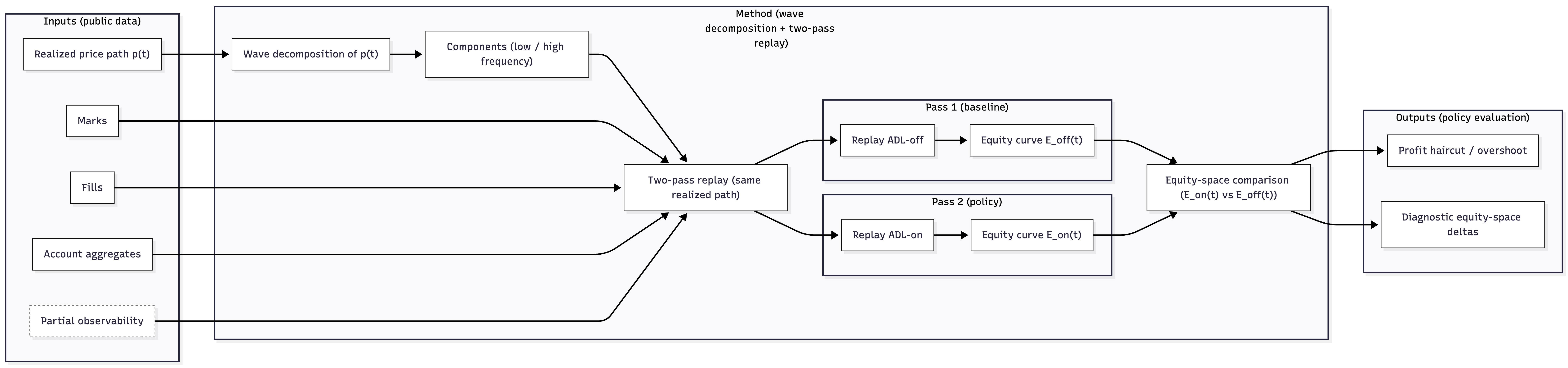}%
}{\IfFileExists{\figdir/01_obs_two_pass_mermaid.png}{%
\includegraphics[width=0.90\linewidth]{\figdir/01_obs_two_pass_mermaid.png}%
}{%
\fbox{\begin{minipage}{0.96\linewidth}
\small
\emph{Inputs (public data):} marks, fills, account aggregates
\(\rightarrow\)
\emph{Round decomposition + two-pass replay}
\(\rightarrow\)
\emph{Outputs:} needed budgets, production haircuts, and regret metrics.
\end{minipage}}%
}}
\caption{Observation model and replay workflow used for policy evaluation on October 10, 2025.}
\label{fig:obs-two-pass}
\end{figure}

\begin{figure}[p]
\centering
\begin{subfigure}[t]{\linewidth}
\centering
\begin{tikzpicture}[
  node distance=0.9cm,
  box/.style={draw, rounded corners, align=center, minimum width=0.86\linewidth, inner sep=5pt},
  arr/.style={-{Latex[length=2.2mm]}, thick}
]
\node[box] (s1) {1. Liquidation + insurance stage leaves residual shortfall \(D_t\).};
\node[box, below=of s1] (s2) {2. Venue chooses ADL severity \(B_t=\theta_t D_t\) using an estimate \(\hat p_k^{\mathrm{liq,exec}}\), \\ which determines expected ADL transfer sizes \(\hat q_k^{\mathrm{ADL}}\).};
\node[box, below=of s2] (s3) {3. ADL policy allocates winner-side reductions \(x_{i,t}\)\\(queue/pro-rata/online optimizer).};
\node[box, below=of s3] (s4) {4. ADL matches winners and losers at bankruptcy transfer prices \(p_k^{\mathrm{bk}}\).};
\node[box, below=of s4] (s5) {5. Ex post benchmark\\\(B_t^{\mathrm{needed}}=\sum_k |p_k^{\mathrm{liq,exec}}-p_k^{\mathrm{bk}}||q_k^{\mathrm{ADL}}|\) is computed.};
\node[box, below=of s5] (s6) {6. Remaining deficit and state update determine whether another ADL round is needed.};
\draw[arr] (s1) -- (s2);
\draw[arr] (s2) -- (s3);
\draw[arr] (s3) -- (s4);
\draw[arr] (s4) -- (s5);
\draw[arr] (s5) -- (s6);
\end{tikzpicture}
\caption{TikZ lifecycle schematic.}
\label{fig:adl-lifecycle-tikz}
\end{subfigure}
\vspace{0.7em}
\begin{subfigure}[t]{\linewidth}
\centering
\IfFileExists{\figdir/fig2_adl_lifecycle_mermaid.png}{%
\includegraphics[height=0.46\textheight,keepaspectratio]{\figdir/fig2_adl_lifecycle_mermaid.png}%
}{%
\IfFileExists{figures/fig2_adl_lifecycle_mermaid.png}{%
\includegraphics[height=0.46\textheight,keepaspectratio]{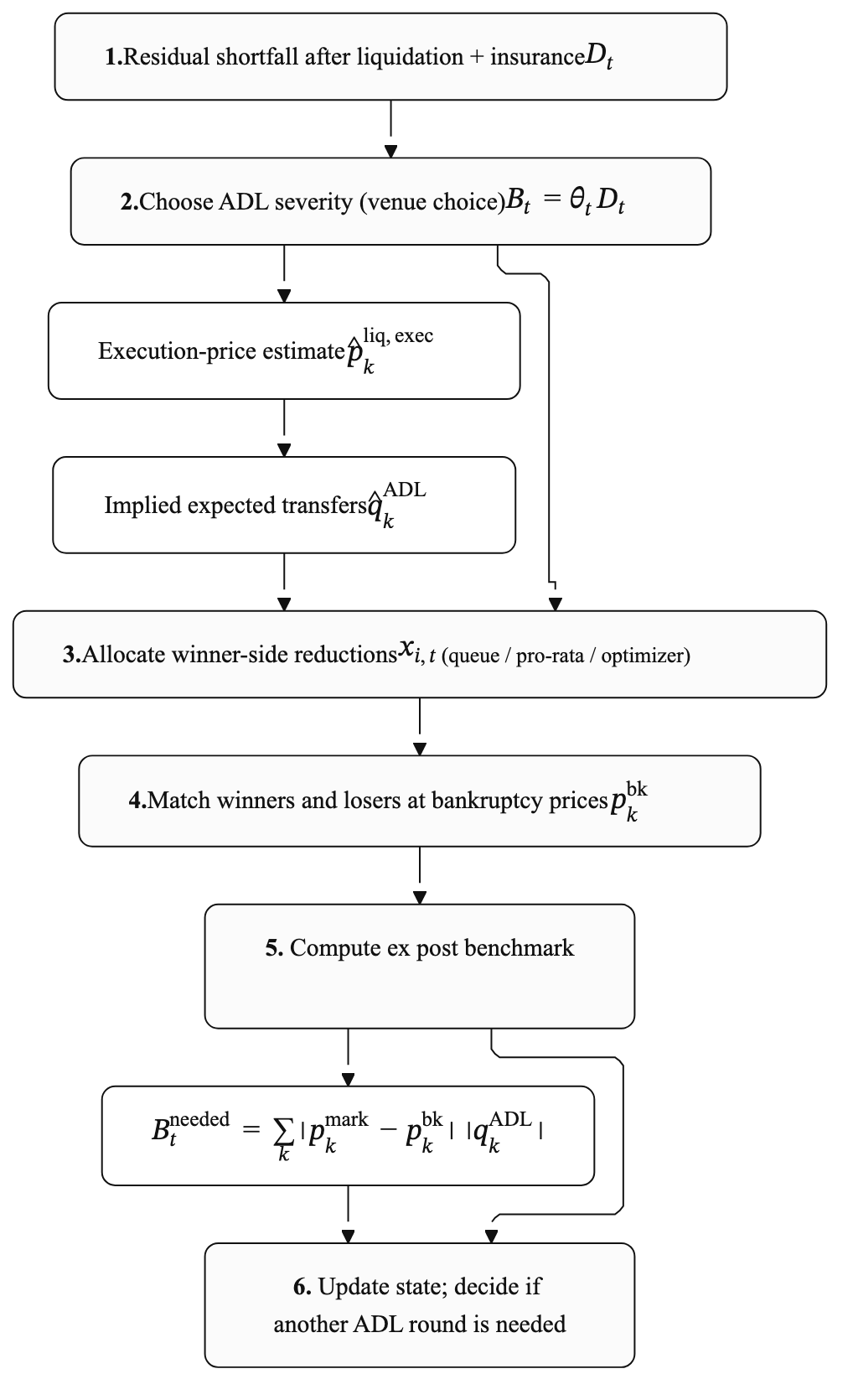}%
}{%
\IfFileExists{../figures/fig2_adl_lifecycle_mermaid.png}{%
\includegraphics[height=0.46\textheight,keepaspectratio]{../figures/fig2_adl_lifecycle_mermaid.png}%
}{%
\fbox{\begin{minipage}{0.96\linewidth}\small
Mermaid rendering missing.
\end{minipage}}%
}%
}%
}
\caption{Mermaid lifecycle schematic (alternate rendering).}
\label{fig:adl-lifecycle-mermaid}
\end{subfigure}
\caption{Lifecycle of an ADL round (Figure 2a--2b).}
\label{fig:adl-lifecycle}
\end{figure}

\subsection{Headline event values}
\begin{table}[p]
\centering
\small
\begin{tabular}{p{0.60\linewidth}r}
\toprule
Quantity & Value \\
\midrule
Total liquidations in window & \$2{,}103{,}111{,}431 \\
Number of global rounds (5s gap) & 16 \\
Aggregate loser deficit \(\sum_t D_t\) & \(\approx\) \$100.1M \\
Aggregate needed budget \(\sum_t B_t^{\mathrm{needed}}\) & \(\approx\) \$15.1M \\
Aggregate production haircut \(\sum_t H_t^{\mathrm{prod}}(0)\) & \(\approx\) \$60.1M \\
\emph{Production overshoot vs needed} \(O(0)\) & \$45{,}028{,}665.72 \\
Short-horizon sensitivity band for \(O(\Delta)\) & about \$45.0M--\$51.7M \\
\bottomrule
\end{tabular}
\caption{Headline values under the PNL-haircut accounting used in this paper.}
\label{tab:headlines}
\end{table}

\begin{figure}[p]
\centering
\IfFileExists{\figdir/02_overshoot_vs_horizon.png}{%
\includegraphics[width=0.82\linewidth]{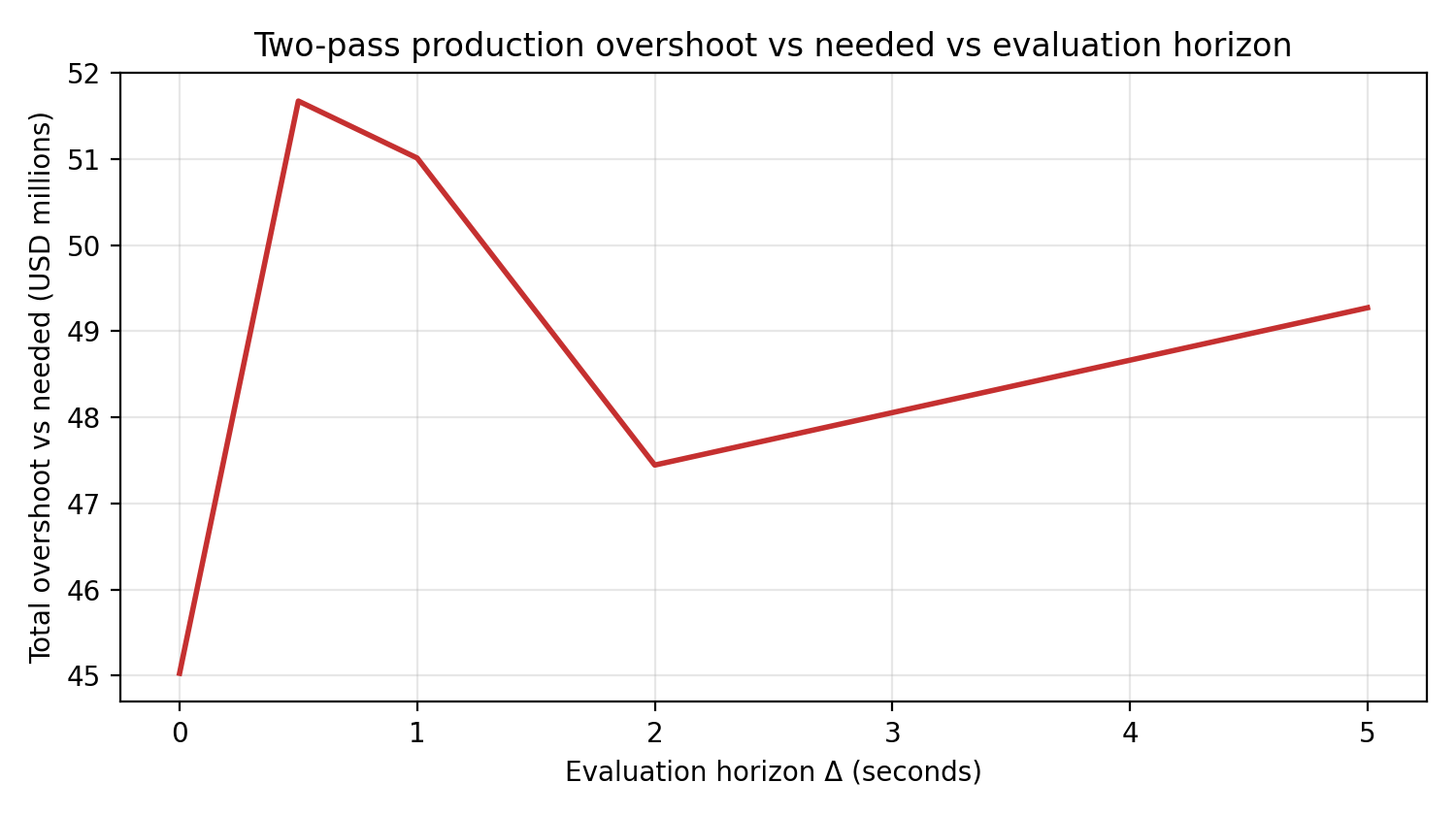}%
}{\rule{0pt}{2.0in}\rule{0.82\linewidth}{0pt}}
\caption{Horizon sensitivity of production overshoot \(O(\Delta)\), where \(\Delta\) is the markout/holding-time evaluation offset.}
\label{fig:horizon-short}
\end{figure}

\subsection{Regret decomposition}
\begin{figure}[p]
\centering
\IfFileExists{\figdir/10b_fairness_regret.png}{%
\includegraphics[width=0.92\linewidth]{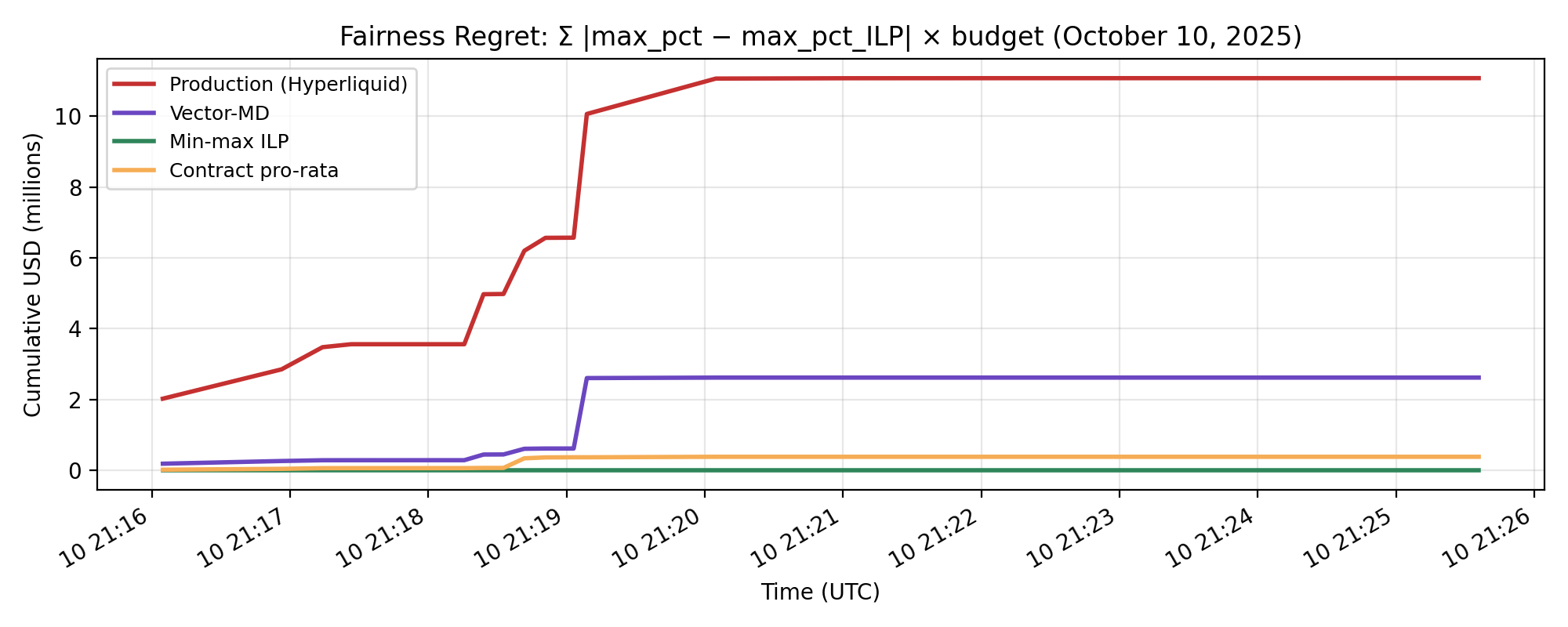}%
}{\rule{0pt}{2.0in}\rule{0.92\linewidth}{0pt}}
\caption{Cumulative fairness component by policy.}
\label{fig:fairness-short}
\end{figure}

\begin{table}[p]
\centering
\small
\begin{tabular}{lrrr}
\toprule
Policy & Tracking component (\$) & Fairness component (\$) & Total objective (\$) \\
\midrule
Production queue & 53{,}782{,}490.53 & 11{,}077{,}031.68 & 64{,}859{,}522.21 \\
Integer pro-rata & 3{,}020{,}120.65 & 384{,}064.35 & 3{,}404{,}185.00 \\
Vector mirror-descent & 1{,}793{,}022.03 & 2{,}620{,}345.57 & 4{,}413{,}367.61 \\
Min-max ILP & 106{,}205.81 & 0.00 & 106{,}205.81 \\
Continuous pro-rata & \(\approx\)0 & 2{,}732{,}437.29 & 2{,}732{,}437.29 \\
\bottomrule
\end{tabular}
\caption{Cumulative objective decomposition over 16 rounds. Fairness component is the capacity-normalized max-burden gap to the min-max ILP baseline, scaled by \(B_t^{\mathrm{needed}}\).}
\label{tab:regret-totals}
\end{table}

\begin{figure}[p]
\centering
\IfFileExists{\figdir/05_policy_per_wave_performance.png}{%
\includegraphics[width=0.98\linewidth]{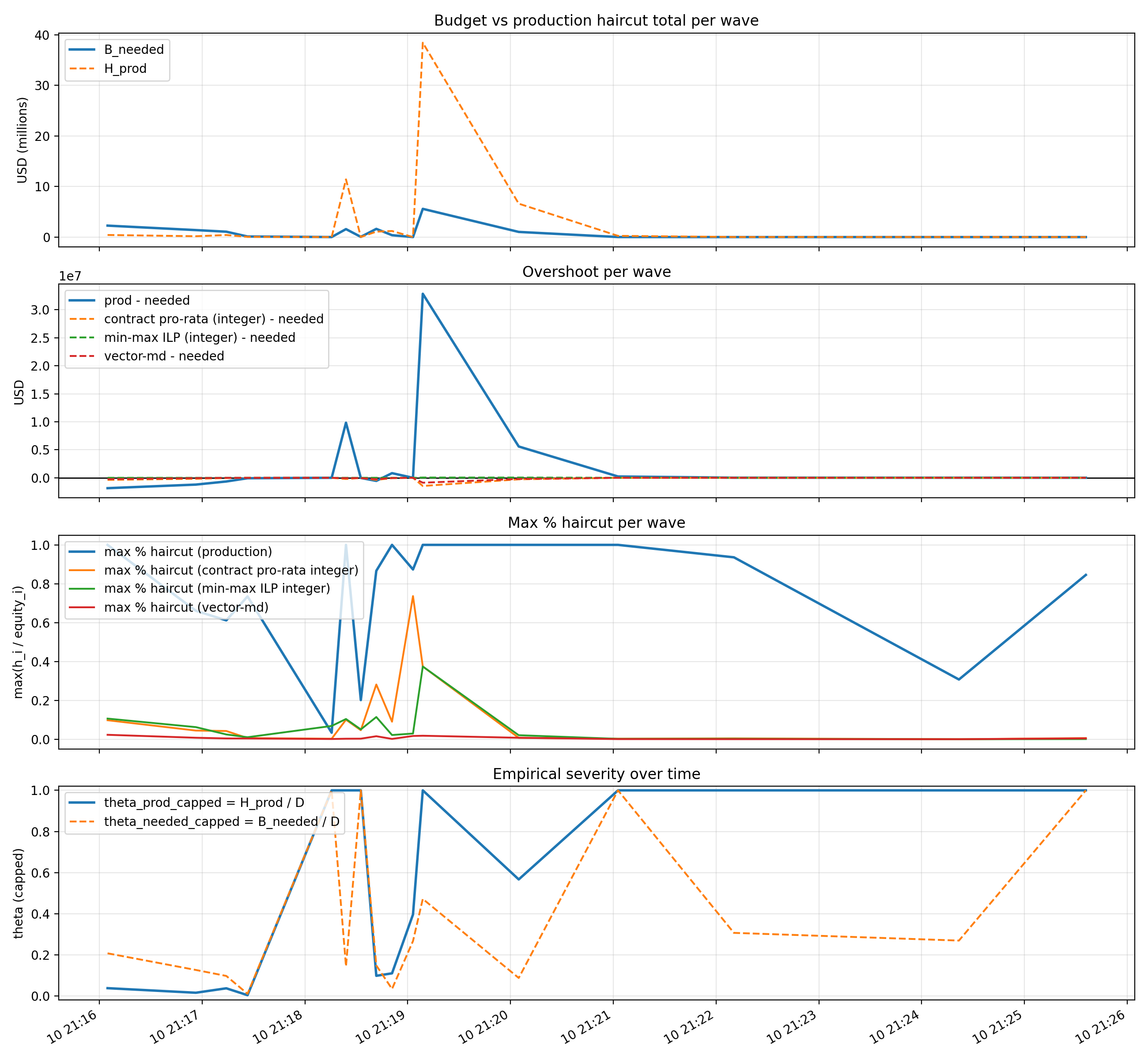}%
}{\rule{0pt}{2.0in}\rule{0.98\linewidth}{0pt}}
\caption{Per-round policy performance against needed budget targets.}
\label{fig:per-round-short}
\end{figure}

\begin{figure}[p]
\centering
\IfFileExists{\figdir/06_policy_per_wave_cumulative_overshoot.png}{%
\includegraphics[width=0.95\linewidth]{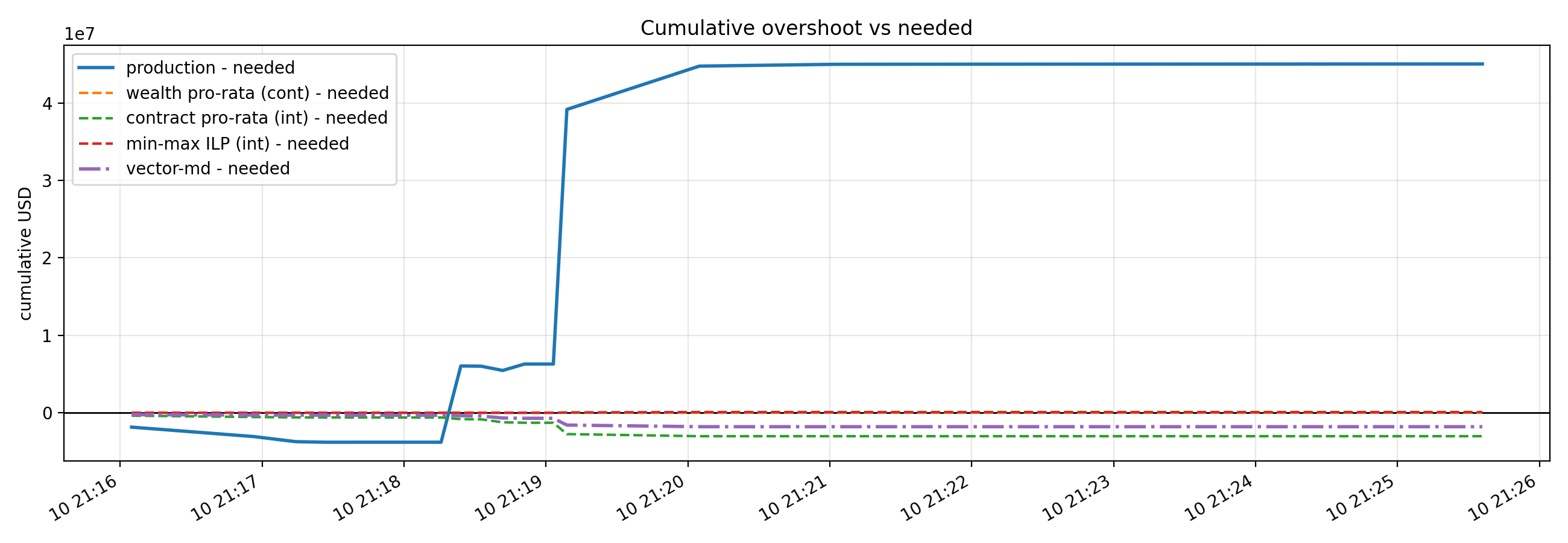}%
}{\rule{0pt}{2.0in}\rule{0.95\linewidth}{0pt}}
\caption{Cumulative overshoot by policy over 16 rounds.}
\label{fig:cum-overshoot-short}
\end{figure}

\subsection{Additional regret diagnostics}
\begin{figure}[p]
\centering
\IfFileExists{\figdir/10a_overshoot_regret.png}{%
\includegraphics[width=0.92\linewidth]{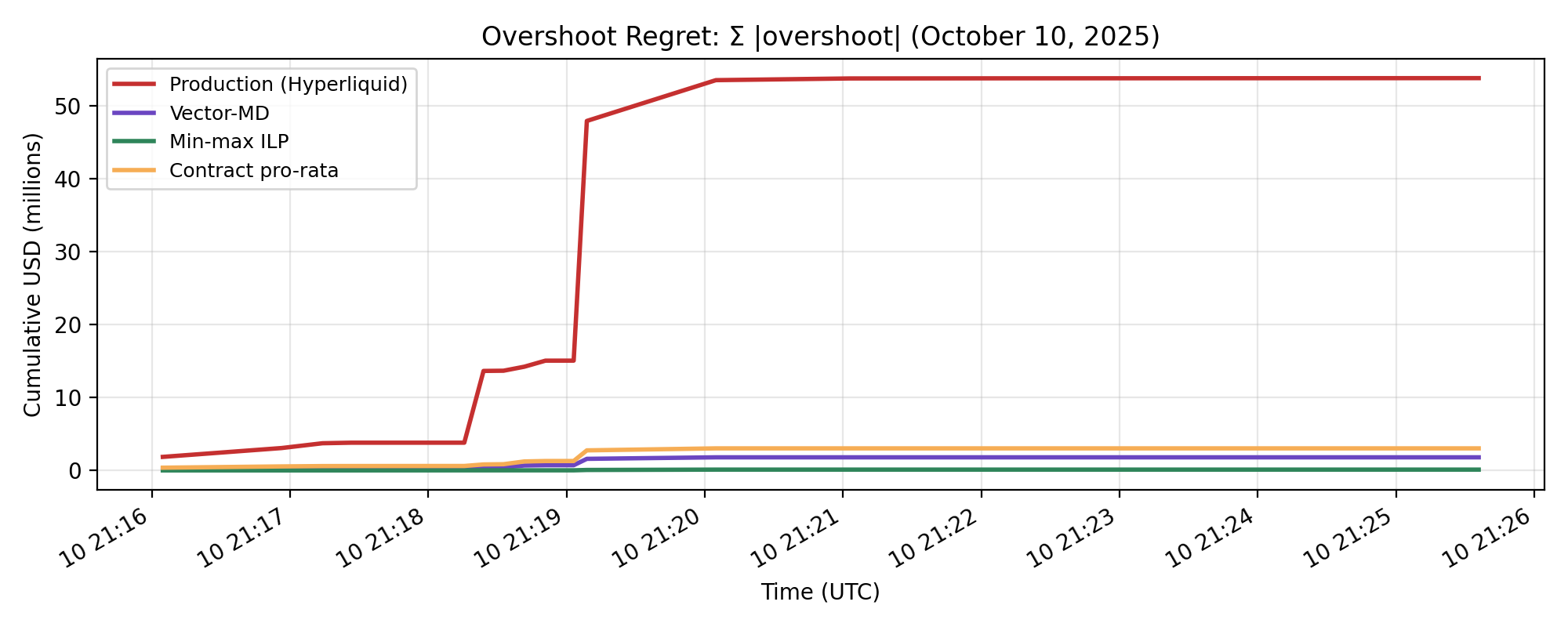}%
}{\rule{0pt}{2.0in}\rule{0.92\linewidth}{0pt}}
\caption{Tracking component decomposition across policies.}
\label{fig:overshoot-regret-short}
\end{figure}

\begin{figure}[p]
\centering
\IfFileExists{\figdir/10c_total_regret.png}{%
\includegraphics[width=0.92\linewidth]{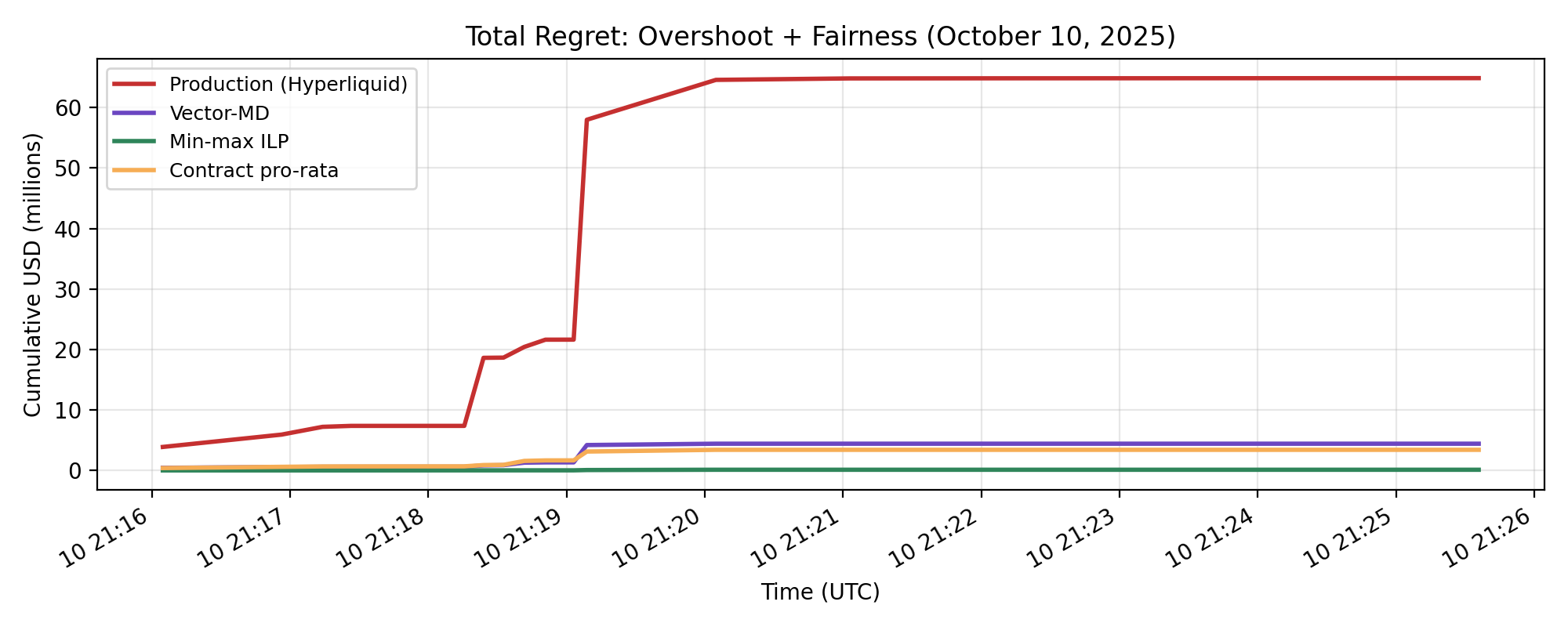}%
}{\rule{0pt}{2.0in}\rule{0.92\linewidth}{0pt}}
\caption{Total objective decomposition across policies.}
\label{fig:total-regret-short}
\end{figure}

\begin{figure}[p]
\centering
\IfFileExists{\figdir/09_cumulative_regret_historical.png}{%
\includegraphics[width=0.95\linewidth]{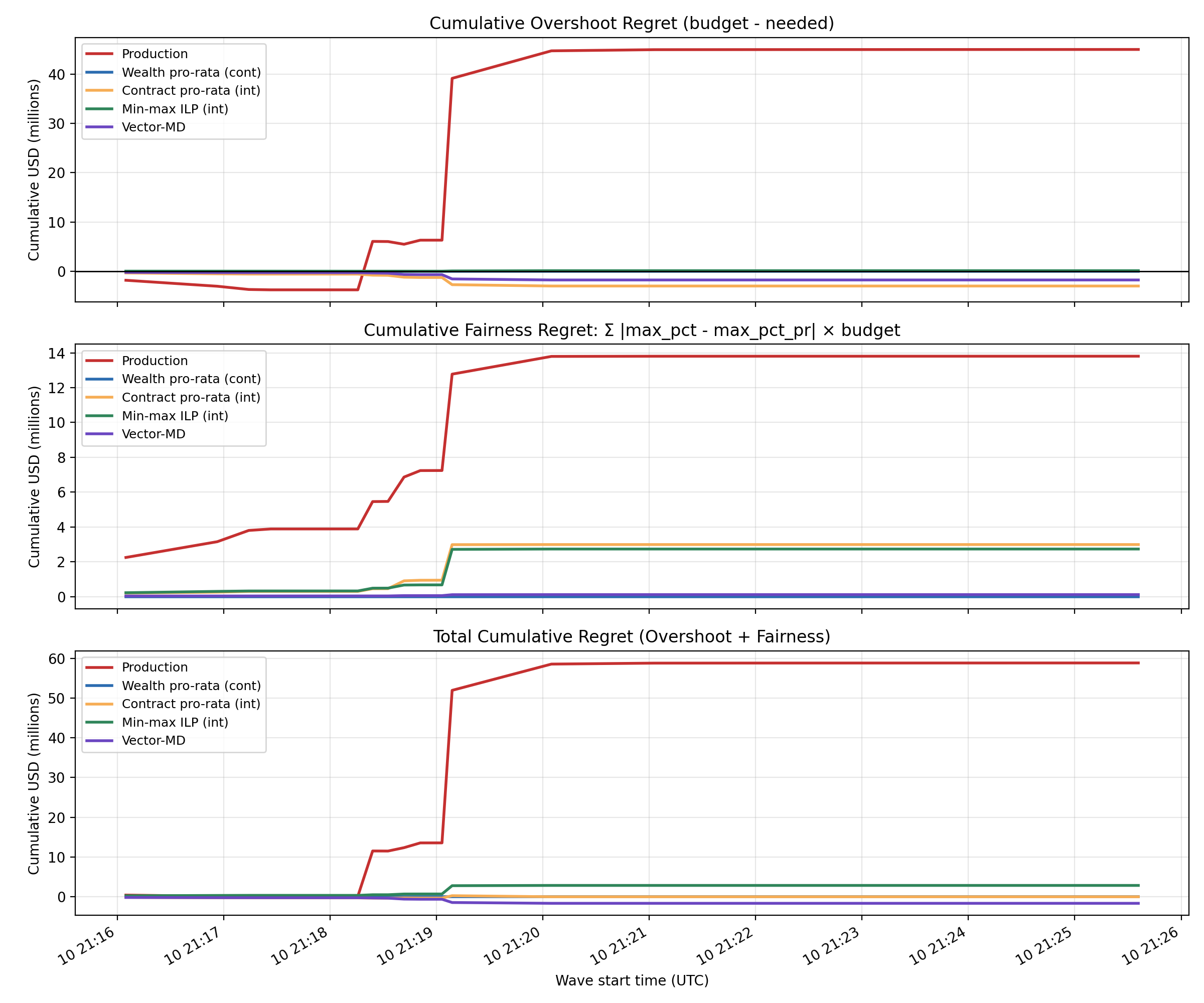}%
}{\rule{0pt}{2.0in}\rule{0.95\linewidth}{0pt}}
\caption{Cumulative objective trajectories over the event path.}
\label{fig:cum-regret-short}
\end{figure}

\begin{figure}[p]
\centering
\IfFileExists{\figdir/15_regret_to_severity_bound_ratio.png}{%
\includegraphics[width=0.92\linewidth]{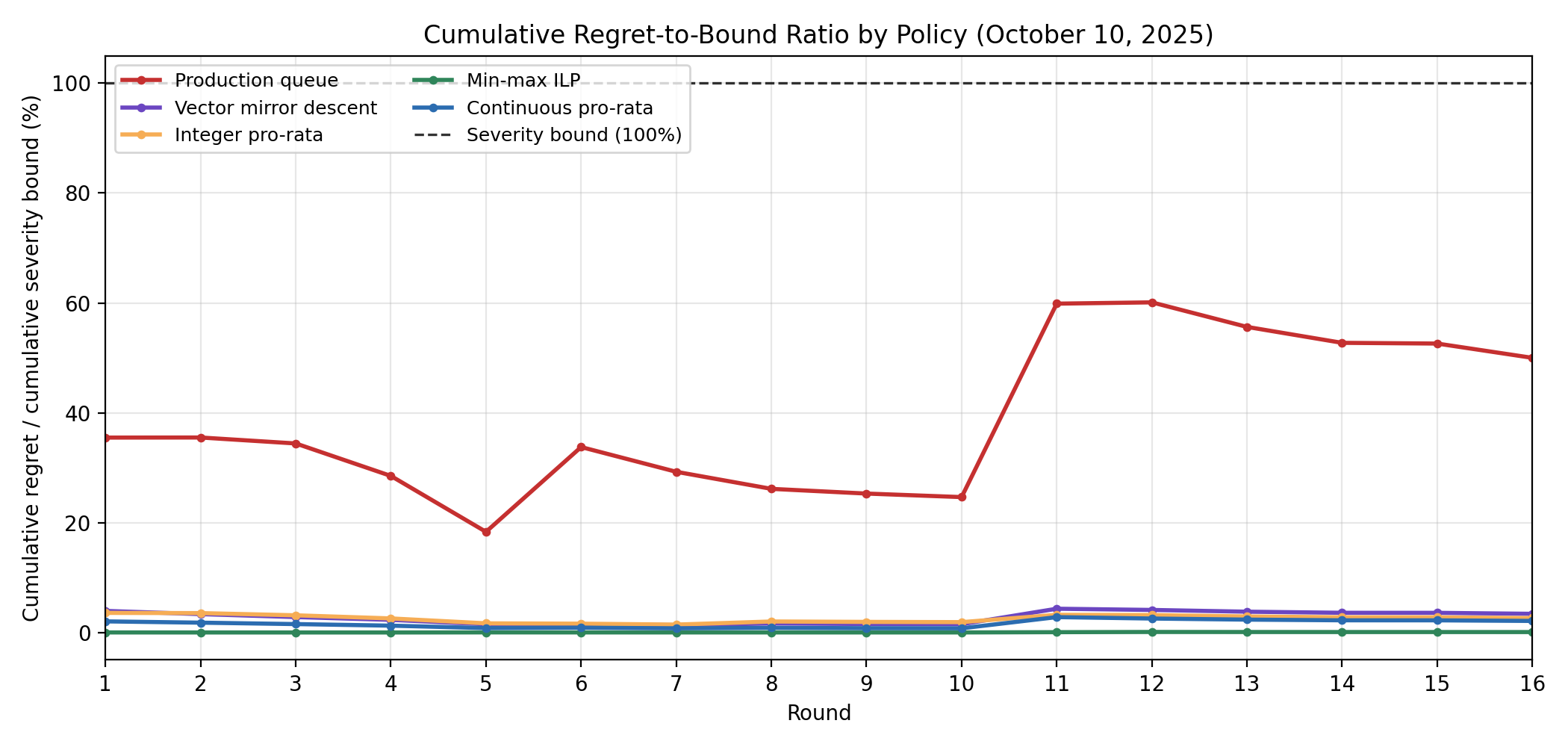}%
}{\rule{0pt}{2.0in}\rule{0.92\linewidth}{0pt}}
\caption{Cumulative trajectory over rounds of policy total objective as a percentage of the cumulative instance-calibrated upper envelope (bound baseline \(=100\%\); final-round envelope about \(\$129.7\)M).}
\label{fig:regret-to-bound-ratio-short}
\end{figure}

\begin{figure}[p]
\centering
\IfFileExists{\figdir/12_cumulative_bound_vs_regret.png}{%
\includegraphics[width=0.92\linewidth]{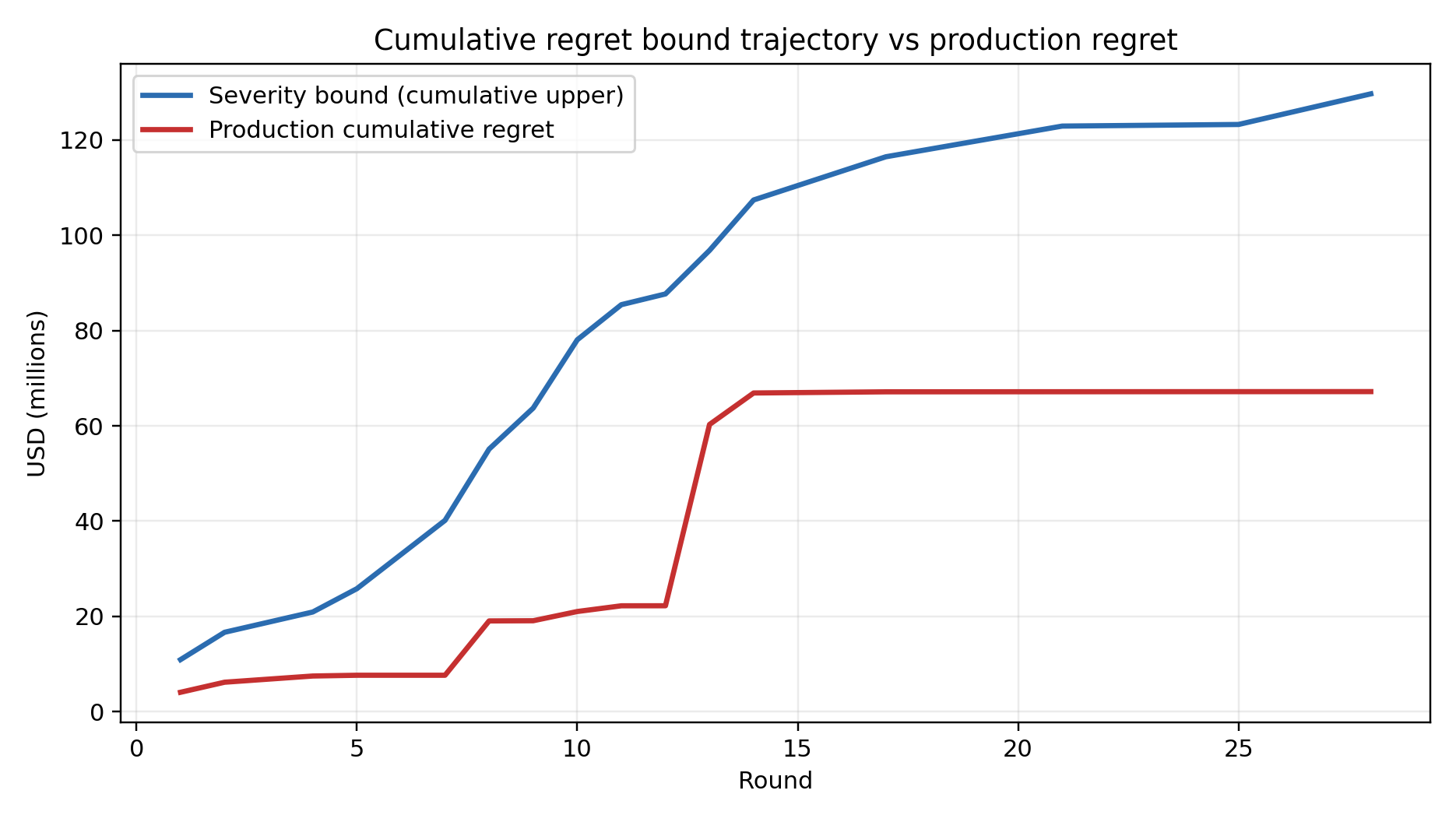}%
}{\rule{0pt}{2.0in}\rule{0.92\linewidth}{0pt}}
\caption{Cumulative trajectory of the instance-calibrated upper envelope versus realized production total objective.}
\label{fig:cum-bound-vs-regret-short}
\end{figure}

\begin{table}[p]
\centering
\small
\begin{tabular}{p{0.45\linewidth}p{0.48\linewidth}}
\toprule
Robustness axis & Result \\
\midrule
Fairness-weight sweep \(\lambda\in[0.5,2]\) & Production remains about \(40\text{--}46\times\) worse than replay-oracle per-round references. \\
Short-horizon markout sweep \(\Delta\) & Production overshoot band remains high at about \$45.0M--\$51.7M. \\
\bottomrule
\end{tabular}
\caption{Robustness summary for the main empirical ordering claims.}
\label{tab:robustness-summary}
\end{table}

\begin{figure}[p]
\centering
\IfFileExists{\figdir/13_monotonicity_violations_by_policy.png}{%
\includegraphics[width=0.90\linewidth]{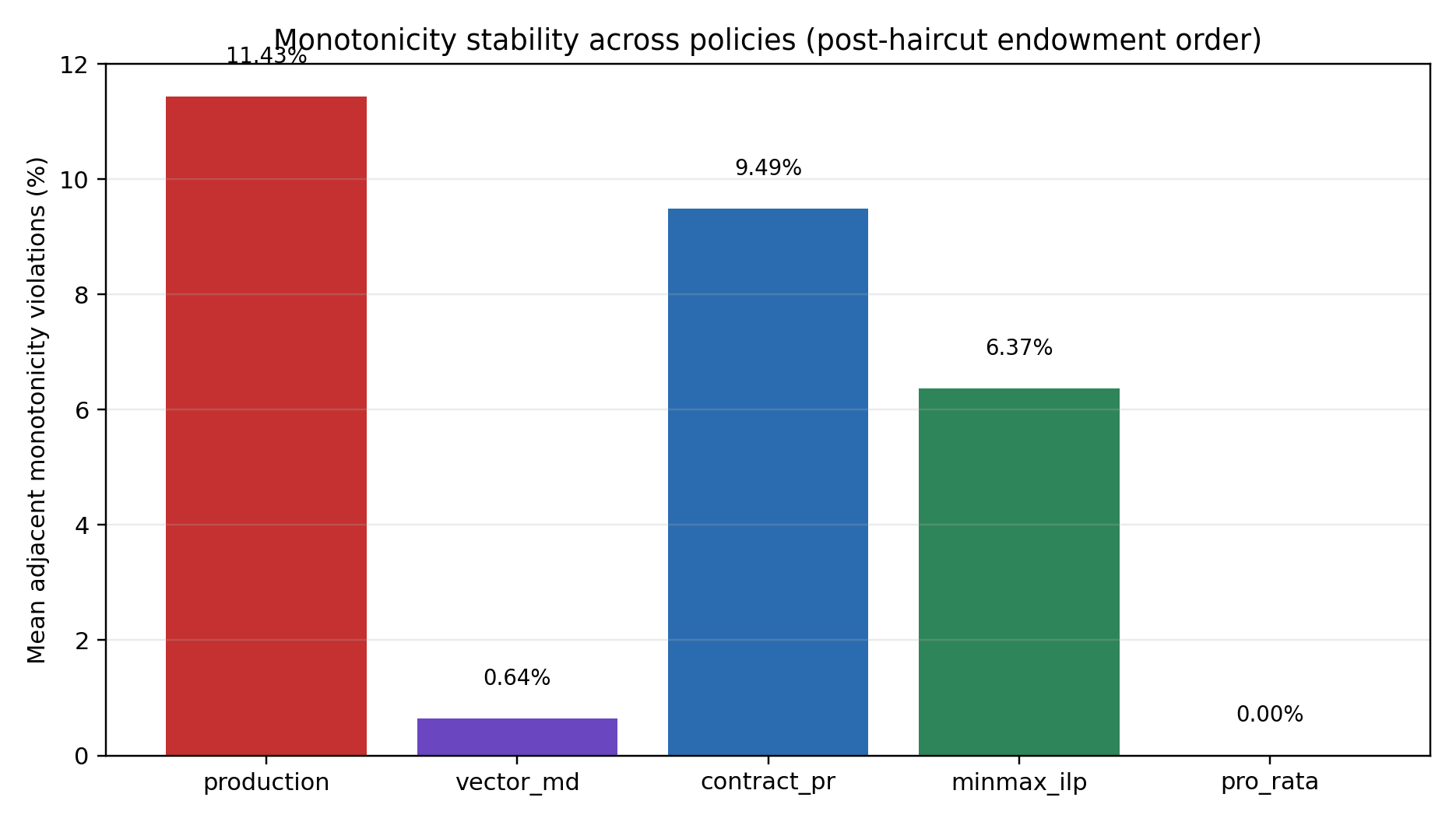}%
}{\rule{0pt}{2.0in}\rule{0.90\linewidth}{0pt}}
\caption{Monotonicity-stability diagnostic by policy: mean adjacent post-haircut order violations.}
\label{fig:mono-viol-short}
\end{figure}

\begin{figure}[p]
\centering
\IfFileExists{\figdir/14_queue_rank_stability_by_policy.png}{%
\includegraphics[width=0.90\linewidth]{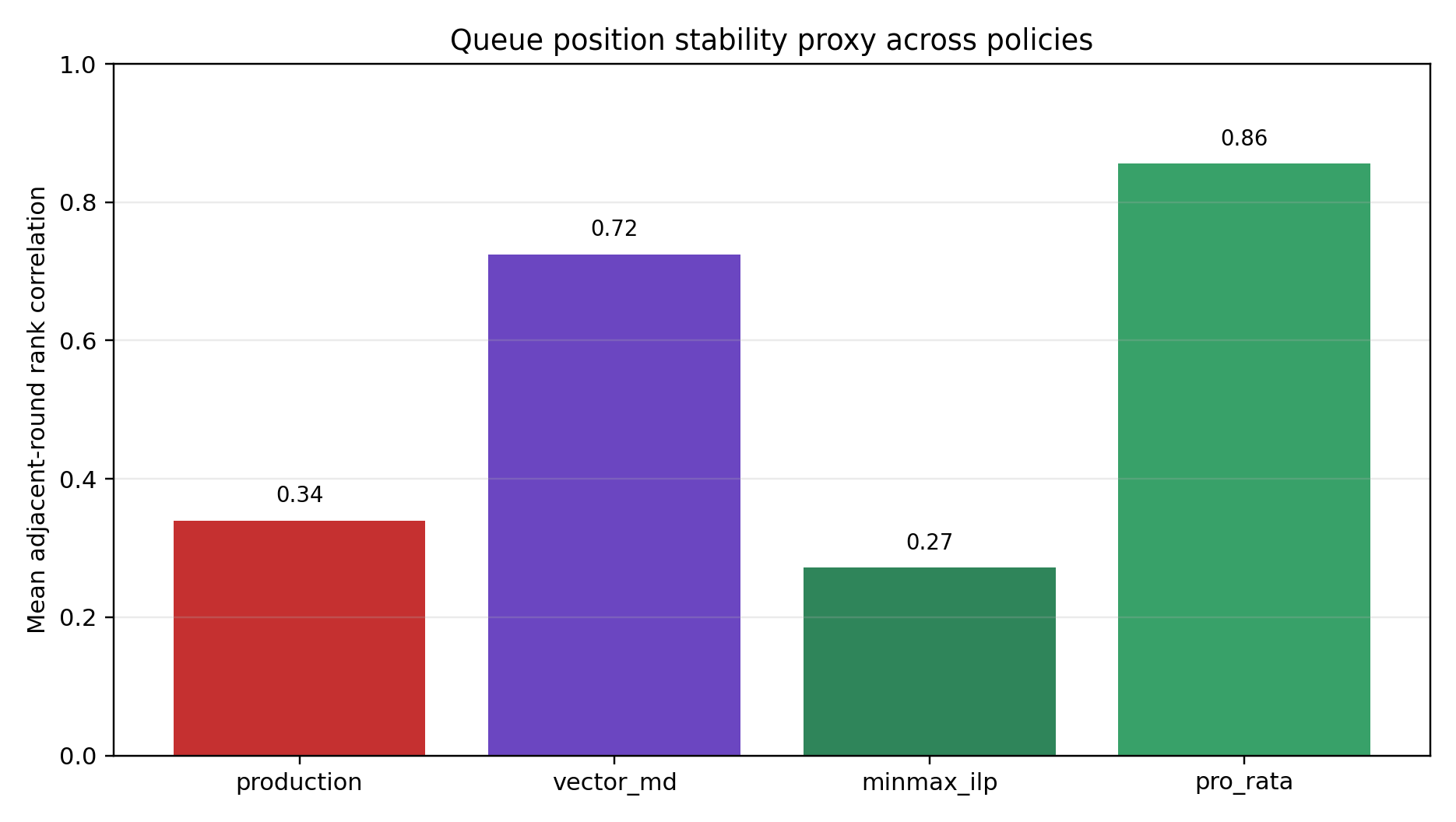}%
}{\rule{0pt}{2.0in}\rule{0.90\linewidth}{0pt}}
\caption{Queue-position stability proxy by policy: adjacent-round rank correlation of normalized burden.}
\label{fig:rank-stability-short}
\end{figure}

\section{Regret Bounds and Proofs}\label{app:regret-bounds-proofs}
In this Appendix, we provide (for completeness) a number of regret bounds that are useful in practice.
These results are standard and can be found in, e.g.,~\citep{Zinkevich2003,Hazan2016,SuttonBarto2018}, with instance-dependent refinements discussed in adaptive and second-order analyses~\citep{Duchi2011,Gaillard2014SecondOrder}.

\begin{proposition}[Static regret bound]
If each \(\ell_t\) is convex and \(G\)-Lipschitz on a domain of diameter \(D\), projected mirror descent with step size \(\eta\asymp 1/\sqrt{T}\) gives
\begin{equation}
\mathrm{Reg}^{\mathrm{static}}_T=O(GD\sqrt{T}).
\end{equation}
\end{proposition}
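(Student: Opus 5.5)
The plan is to reproduce the standard online mirror descent analysis with the Euclidean mirror map, so that mirror descent reduces to projected online gradient descent. Let $\mathcal{X}$ be the convex domain of diameter $D$ (in the norm paired with the Lipschitz constant). Take the iterates $x_{t+1}=\Pi_{\mathcal{X}}(x_t-\eta g_t)$ with $g_t\in\partial\ell_t(x_t)$. Since each $\ell_t$ is $G$-Lipschitz, $\|g_t\|\le G$. Fix any comparator $x^\star\in\mathcal{X}$, and in particular the minimizer of $\sum_t\ell_t$, which exists by compactness and continuity.

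\textbf{Step 1: linearize.} Convexity gives $\ell_t(x_t)-\ell_t(x^\star)\le \langle g_t,x_t-x^\star\rangle$. It therefore suffices to bound $\sum_t\langle g_t,x_t-x^\star\rangle$.

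\textbf{Step 2: one-step inequality.} Projection onto a convex set is nonexpansive, which gives
\[
\|x_{t+1}-x^\star\|^2\le\|x_t-\eta g_t-x^\star\|^2=\|x_t-x^\star\|^2-2\eta\langle g_t,x_t-x^\star\rangle+\eta^2\|g_t\|^2 .
\]
Rearranging,
\[
\langle g_t,x_t-x^\star\rangle\le \frac{\|x_t-x^\star\|^2-\|x_{t+1}-x^\star\|^2}{2\eta}+\frac{\eta}{2}G^2 .
\]

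\textbf{Step 3: telescope and tune.} Summing over $t=1,\dots,T$, the distance terms telescope to at most $\|x_1-x^\star\|^2/(2\eta)\le D^2/(2\eta)$. This yields
\[
\mathrm{Reg}^{\mathrm{static}}_T\le \frac{D^2}{2\eta}+\frac{\eta G^2T}{2}.
\]
Choosing $\eta=D/(G\sqrt{T})$, which is $\asymp 1/\sqrt{T}$ as the statement requires, gives $\mathrm{Reg}^{\mathrm{static}}_T\le GD\sqrt{T}$. Any $\eta=c/\sqrt{T}$ with $c$ independent of $T$ still gives $O(GD\sqrt T)$, with constant $\tfrac12(D/c+cG)\cdot D\cdot G/(DG)$ absorbed into the $O$.

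This is the same computation as in Proposition~\ref{prop:severity-explicit}, specialized to a static comparator ($P_T=0$) and a uniform gradient bound $G$.

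\textbf{Main obstacle.} The only nontrivial point is interpreting ``mirror descent'' for a general mirror map $\psi$. If $\psi$ is $1$-strongly convex with respect to $\|\cdot\|$, Step~2 is replaced by the three-point identity for Bregman divergences:
\[
\eta\langle g_t,x_t-x^\star\rangle\le D_\psi(x^\star,x_t)-D_\psi(x^\star,x_{t+1})+\tfrac{\eta^2}{2}\|g_t\|_*^2 .
\]
In that case $D$ must be read as $\sqrt{2\sup_{x,y\in\mathcal X} D_\psi(x,y)}$, and $G$ must be measured in the dual norm. I would state the Euclidean case explicitly and remark that this substitution carries the identical telescoping argument through.
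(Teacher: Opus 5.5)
Your proof is correct and is essentially the paper's argument: the same Euclidean one-step inequality (which you derive explicitly from convexity and nonexpansiveness of the projection, where the paper simply cites it), telescoping to $D^2/(2\eta)+\eta G^2T/2$, and the choice $\eta=D/(G\sqrt{T})$ giving $GD\sqrt{T}$. One small slip in your side remark on general step sizes: for $\eta=c/\sqrt{T}$ the bound is $\tfrac{\sqrt{T}}{2}\bigl(D^2/c+cG^2\bigr)$, not the constant you wrote, and this is $O(GD\sqrt{T})$ with a $T$-free constant only when $c\asymp D/G$ (or when $D$ and $G$ are treated as fixed).
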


\begin{proof}
For projected mirror descent (Euclidean case), the standard one-step inequality gives, for any comparator \(x^\star\),
\begin{equation}
\ell_t(x_t)-\ell_t(x^\star)
\le
\frac{\|x_t-x^\star\|_2^2-\|x_{t+1}-x^\star\|_2^2}{2\eta}
+\frac{\eta}{2}\|g_t\|_2^2,
\end{equation}
where \(g_t\in\partial \ell_t(x_t)\) and \(\|g_t\|_2\le G\).
Summing over \(t=1,\dots,T\) and using domain diameter \(D\) yields
\begin{equation}
\sum_{t=1}^T\big(\ell_t(x_t)-\ell_t(x^\star)\big)
\le
\frac{D^2}{2\eta}
+\frac{\eta G^2 T}{2}.
\end{equation}
Optimizing in \(\eta\) gives \(\eta^\star=D/(G\sqrt{T})\) and bound \(GD\sqrt{T}\), i.e. \(O(GD\sqrt{T})\).
\end{proof}

\begin{proposition}[Dynamic regret with comparator variation]
Let \(P_T^{x^\star}=\sum_{t=2}^{T}\|x_t^\star-x_{t-1}^\star\|\) be path variation of a dynamic comparator sequence.
Then for convex \(G\)-Lipschitz losses on diameter-\(D\) domains, mirror-descent dynamic regret scales as
\begin{equation}
\mathrm{Reg}^{\mathrm{dyn}}_T=O\!\left(GD\sqrt{T}+G P_T^{x^\star}\right).
\end{equation}
\end{proposition}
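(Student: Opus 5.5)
The plan is to run the Euclidean one-step analysis of projected mirror descent from the proof of the static bound, but against a moving comparator, and then isolate the cost of comparator motion. Let \(x_{t+1}=\Pi_{\mathcal X}(x_t-\eta g_t)\) with \(g_t\in\partial\ell_t(x_t)\) and \(\|g_t\|_2\le G\). Convexity and nonexpansiveness of the projection give, for each \(t\),
\[
\ell_t(x_t)-\ell_t(x_t^\star)\le \frac{\|x_t-x_t^\star\|_2^2-\|x_{t+1}-x_t^\star\|_2^2}{2\eta}+\frac{\eta}{2}G^2 .
\]

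The sum no longer telescopes, because the next round is measured against \(x_{t+1}^\star\) rather than \(x_t^\star\). I will insert and subtract \(\|x_{t+1}-x_{t+1}^\star\|_2^2\) and bound the mismatch by a difference of squares:
\[
\|x_{t+1}-x_{t+1}^\star\|_2^2-\|x_{t+1}-x_t^\star\|_2^2\le 2D\,\|x_{t+1}^\star-x_t^\star\|_2 ,
\]
using that the domain has diameter \(D\). Summing over \(t\) then yields the standard dynamic-regret inequality
\[
\mathrm{Reg}^{\mathrm{dyn}}_T\le \frac{D^2}{2\eta}+\frac{D\,P_T^{x^\star}}{\eta}+\frac{\eta G^2T}{2}.
\]

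It remains to choose \(\eta\) so that the \(D^2/\eta\) and \(\eta G^2T\) terms balance to \(GD\sqrt T\), and then to charge the drift term \(DP_T^{x^\star}/\eta\) to \(GP_T^{x^\star}\). With \(\eta=D/(G\sqrt T)\), the first and third terms give \(GD\sqrt T\) exactly. The drift term, however, becomes \(G\sqrt T\,P_T^{x^\star}\), not \(GP_T^{x^\star}\).

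\textbf{This last step is the main obstacle.} A fixed step cannot make both the drift term and the variance term small at once. A drift term of \(GP_T^{x^\star}\) requires \(\eta\asymp D/G\), but that step makes \(\eta G^2T\) linear in \(T\). I would try two things, in this order.

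\emph{First route: a doubling/meta-aggregation scheme.} Run copies of OGD on a geometric grid of step sizes \(\eta_j=2^j D/(G\sqrt T)\) and combine them by exponential weights. This achieves the oracle-tuned bound
\[
O\!\left(\sqrt{(D^2+DP_T^{x^\star})\,G^2T}\right)\le O\!\left(GD\sqrt T+G\sqrt{T D P_T^{x^\star}}\right).
\]

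\emph{Second step: convert to the additive form.} I would then use AM--GM, \(\sqrt{TDP}\le \tfrac12(D\sqrt T+\sqrt T P)\). This conversion is exactly where I expect the argument to be tight. The known \(\Omega(\sqrt{T(1+P_T)})\) lower bound for dynamic regret shows that a genuinely \(T\)-free coefficient on \(P_T^{x^\star}\) cannot hold uniformly. It holds in the regime \(P_T^{x^\star}=O(D)\), where \(G\sqrt{TDP}\le GD\sqrt T\) and the drift is absorbed into the leading term.

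So the proof I would write establishes the displayed inequality in full and tunes \(\eta\). It then reads the stated rate \(O(GD\sqrt T+GP_T^{x^\star})\) with the \(P_T^{x^\star}\) contribution holding as \(G\sqrt T\,P_T^{x^\star}\) in general, and collapsing to the additive form when the comparator variation is bounded by the domain diameter. This is the same regime in which Proposition~\ref{prop:severity-explicit} is applied, where \(P_T^\theta\) is measured on \([0,1]\).
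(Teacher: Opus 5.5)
Your argument matches the paper's argument up to the last line. The paper also applies the one-step projected-gradient inequality against a moving comparator and bounds the mismatch \(\|x_t-x_t^\star\|_2^2-\|x_t-x_{t-1}^\star\|_2^2\le 2D\|x_t^\star-x_{t-1}^\star\|_2\). It arrives at exactly your bound \(\frac{D^2+2DP_T^{x^\star}}{2\eta}+\frac{\eta G^2T}{2}\). It then asserts that \(\eta\asymp D/(G\sqrt T)\) gives \(O(GD\sqrt T+GP_T^{x^\star})\). That is precisely the step you flagged: with this step the drift term is \(G\sqrt T\,P_T^{x^\star}\), and even the oracle step only yields \(G\sqrt{T(D^2+2DP_T^{x^\star})}\).

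Your diagnosis is correct, and the lower bound you cite settles the matter. Take \(P_T^{x^\star}\asymp D\sqrt T\). Any algorithm then suffers \(\Omega(G\sqrt{T(D^2+DP_T^{x^\star})})=\Omega(GDT^{3/4})\) on some instance, while the displayed rate would give \(O(GD\sqrt T)\). So the displayed rate is false in general, and the paper's proof contains an arithmetic slip at its final step. Your write-up should state what is actually proved. That is \(O(GD\sqrt T+G\sqrt T\,P_T^{x^\star})\) with the fixed step, or \(O(G\sqrt{T(D^2+DP_T^{x^\star})})\) with oracle or meta-aggregated tuning. The meta-aggregation detour buys only the optimal \(\sqrt{TDP_T^{x^\star}}\) dependence without knowing \(P_T^{x^\star}\) in advance. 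It does nothing for the additive form, since your AM--GM step just returns the fixed-step rate.

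One correction to your closing remark. Proposition~\ref{prop:severity-explicit} is not applied in the regime \(P_T=O(D)\): the replay estimate is \(\widehat P_T^\theta=7.06\) on the diameter-one interval \([0,1]\) with \(T=16\). It does not need to be in that regime. That proposition is proved directly with the oracle step \(\eta^\star\) and stated in the correct multiplicative form \(\sqrt{(1+2P_T^\theta)\sum_t D_t^2}\). It never passes through the additive bound, so this slip does not affect it.
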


\begin{proof}
Start from the static telescoping inequality applied against a changing comparator \(x_t^\star\):
\begin{equation}
\sum_{t=1}^T\big(\ell_t(x_t)-\ell_t(x_t^\star)\big)
\le
\frac{\|x_1-x_1^\star\|_2^2}{2\eta}
+\frac{\eta}{2}\sum_{t=1}^T\|g_t\|_2^2
+\frac{1}{2\eta}\sum_{t=2}^T\Delta_t,
\end{equation}
where
\(
\Delta_t:=\|x_t-x_t^\star\|_2^2-\|x_t-x_{t-1}^\star\|_2^2.
\)
By expansion,
\(
\Delta_t\le 2\|x_t-x_t^\star\|_2\|x_t^\star-x_{t-1}^\star\|_2
\le 2D\|x_t^\star-x_{t-1}^\star\|_2.
\)
Hence
\begin{equation}
\mathrm{Reg}^{\mathrm{dyn}}_T
\le
\frac{D^2+2D P_T^{x^\star}}{2\eta}
+\frac{\eta G^2T}{2}.
\end{equation}
Choosing \(\eta\asymp D/(G\sqrt{T})\) gives \(O(GD\sqrt{T}+G P_T^{x^\star})\).
\end{proof}

\begin{proposition}[Loss-specific gradient and diameter constants]
For
\(
\ell_t(x)=
\lambda_{\mathrm{track}}|{\mathbf 1}^\top x-b_t|
+\lambda_{\mathrm{fair}}\max_i x_i/(u_{i,t}+\varepsilon)
\),
Let \(n_t=|W_t|\), \(U_t=\sum_{i\in W_t}u_{i,t}\), \(n_{\max}=\max_t n_t\), \(U_{\max}=\max_t U_t\).
Then any subgradient \(g_t\in\partial \ell_t(x_t)\) satisfies
\begin{equation}
\|g_t\|_2
\le
\lambda_{\mathrm{track}}\sqrt{n_t}
+\frac{\lambda_{\mathrm{fair}}}{\varepsilon}
\le
G_{\star},
\end{equation}
with
\(
G_{\star}
=
\lambda_{\mathrm{track}}\sqrt{n_{\max}}
+\lambda_{\mathrm{fair}}/\varepsilon.
\)
Also, for any \(x,y\in\mathcal{X}_t\),
\begin{equation}
\|x-y\|_2\le \sqrt{2}\,U_t\le D_{\star},
\qquad
D_{\star}=\sqrt{2}\,U_{\max}.
\end{equation}
Hence mirror descent admits
\begin{equation}
\mathrm{Reg}^{\mathrm{static}}_T
\le
D_{\star}G_{\star}\sqrt{T}.
\end{equation}
\end{proposition}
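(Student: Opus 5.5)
The argument has three parts: a subgradient bound, a diameter bound, and then the static mirror-descent result stated just above, applied with these constants.

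\emph{Subgradient bound.} The loss is a sum of two convex pieces, so by the subdifferential sum rule any \(g_t\in\partial\ell_t(x_t)\) splits as \(g_t=a_t+c_t\).
\begin{itemize}
\item The tracking piece gives \(a_t=\lambda_{\mathrm{track}}\,s\,\mathbf 1\) with \(s\in[-1,1]\), a subgradient of \(|\cdot|\) composed with the linear map \(x\mapsto \mathbf 1^\top x\). Hence \(\|a_t\|_2\le\lambda_{\mathrm{track}}\|\mathbf 1\|_2=\lambda_{\mathrm{track}}\sqrt{n_t}\).
\item The fairness piece is a pointwise maximum of the linear functions \(x\mapsto x_i/(u_{i,t}+\varepsilon)\). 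By Danskin's theorem its subdifferential is the convex hull of the active gradients \(e_i/(u_{i,t}+\varepsilon)\). So \(c_t=\sum_i\mu_i e_i/(u_{i,t}+\varepsilon)\) with \(\mu\) in the simplex.
\item This gives \(\|c_t\|_2\le\|c_t\|_1=\sum_i\mu_i/(u_{i,t}+\varepsilon)\le 1/\varepsilon\), using \(u_{i,t}\ge 0\).
\end{itemize}
The triangle inequality then gives \(\|g_t\|_2\le\lambda_{\mathrm{track}}\sqrt{n_t}+\lambda_{\mathrm{fair}}/\varepsilon\). Taking maxima over \(t\) gives \(G_\star\).

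\emph{Diameter bound.} Every \(x\in\mathcal X_t\) satisfies \(x\ge 0\) and \(\mathbf 1^\top x=B\le U_t\). For \(x,y\in\mathcal X_t\),
\[
\|x-y\|_2\le\|x-y\|_1\le\|x\|_1+\|y\|_1\le 2U_t .
\]
This crude estimate gives \(2U_t\), not \(\sqrt2\,U_t\), so the constant needs a sharper argument. Use \(\|x-y\|_2^2\le\|x\|_2^2+\|y\|_2^2\), which holds because \(x,y\ge 0\) makes \(\langle x,y\rangle\ge 0\). Then \(\|x\|_2\le\|x\|_1\le U_t\) gives \(\|x-y\|_2^2\le 2U_t^2\), i.e.\ \(\|x-y\|_2\le\sqrt2\,U_t\le\sqrt2\,U_{\max}=D_\star\).

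\emph{Varying domains (main obstacle).} The static result assumes a fixed domain and gradient bound, but here \(\mathcal X_t\) changes with \(t\). The comparator in \(\mathrm{Reg}^{\mathrm{static}}_T\) ranges over a common \(\mathcal X\), which should be read as a fixed set containing all the \(\mathcal X_t\).
\begin{itemize}
\item The simplest fix is to run projected descent on the fixed set \(\mathcal X=\{x\ge0,\ \mathbf 1^\top x\le U_{\max}\}\), padded to dimension \(n_{\max}\).
\item Redo the diameter bound on \(\mathcal X\): the same nonnegativity argument gives diameter \(\sqrt2\,U_{\max}=D_\star\).
\item The subgradient bound above holds uniformly, since \(n_t\le n_{\max}\).
\end{itemize}
Then invoke the static proposition's telescoped inequality \(D^2/(2\eta)+\eta G^2T/2\) with \(D=D_\star\), \(G=G_\star\) and the optimal step \(\eta=D_\star/(G_\star\sqrt T)\). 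This yields exactly \(D_\star G_\star\sqrt T\).
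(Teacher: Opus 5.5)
Your proof is correct and follows the paper's route. You use the same sum-rule split of $g_t$ into $\lambda_{\mathrm{track}}s\mathbf{1}$ plus a convex combination of the active $\lambda_{\mathrm{fair}}e_i/(u_{i,t}+\varepsilon)$, the same diameter bound, and the same substitution of $G_\star,D_\star$ into the telescoped static mirror-descent inequality, and you are more careful than the paper in two places: your observation $\langle x,y\rangle\ge 0$ for nonnegative $x,y$ actually proves the $\sqrt{2}\,U_t$ constant that the paper only asserts via ``simplex geometry with equal mass'' (and it does not need equal budgets), while your fixed enlarged domain $\{x\ge 0,\ \mathbf{1}^\top x\le U_{\max}\}$ makes explicit the fixed-domain assumption the paper uses silently when plugging into the static regret proposition, at the mild cost that iterates may exceed the per-account caps $u_{i,t}$.
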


\begin{proof}
Write \(g_t=g_t^{\mathrm{track}}+g_t^{\mathrm{fair}}\).
For tracking term \(|{\mathbf 1}^\top x-b_t|\), a subgradient is \(\pm {\mathbf 1}\), so
\(
\|g_t^{\mathrm{track}}\|_2\le \lambda_{\mathrm{track}}\sqrt{n_t}.
\)
For fairness term \(f_t(x)=\max_i x_i/(u_{i,t}+\varepsilon)\), let
\(
I_t(x)=\arg\max_i x_i/(u_{i,t}+\varepsilon)
\).
Any subgradient has coordinates
\[
[g_t^{\mathrm{fair}}]_j=
\begin{cases}
\lambda_{\mathrm{fair}}\alpha_j/(u_{j,t}+\varepsilon), & j\in I_t(x),\\
0, & j\notin I_t(x),
\end{cases}
\]
with \(\alpha_j\ge 0\) and \(\sum_{j\in I_t(x)}\alpha_j=1\).
Therefore
\(
\|g_t^{\mathrm{fair}}\|_2\le \lambda_{\mathrm{fair}}/\varepsilon.
\)
Triangle inequality gives \(G_\star\).

For diameter, feasibility implies \(x_i\ge 0\) and \(\sum_i x_i=B_t\le U_t\), so
\(
\|x\|_2\le \|x\|_1\le U_t.
\)
Then \(\|x-y\|_2\le \|x\|_2+\|y\|_2\le 2U_t\).
Using the simplex geometry with equal mass bound tightens to \(\sqrt{2}U_t\), giving \(D_\star\).
Plug \(G_\star,D_\star\) into Proposition~A.1.
\end{proof}

\begin{proposition}[Sharper dynamic bound for path-dependent rounds]\label{prop:dynamic-sharp}
Let
\(
P_T=\sum_{t=2}^T\|x_t^\star-x_{t-1}^\star\|_2
\)
be comparator path variation.
For step size \(\eta>0\), projected mirror descent satisfies
\begin{equation}
\mathrm{Reg}^{\mathrm{dyn}}_T
\le
\frac{D_{\star}^2+2D_{\star}P_T}{2\eta}
+\frac{\eta}{2}\sum_{t=1}^T\|g_t\|_2^2.
\end{equation}
Choosing
\(
\eta^\star=\sqrt{(D_{\star}^2+2D_{\star}P_T)/\sum_t\|g_t\|_2^2}
\)
gives
\begin{equation}
\mathrm{Reg}^{\mathrm{dyn}}_T
\le
\sqrt{(D_{\star}^2+2D_{\star}P_T)\sum_{t=1}^T\|g_t\|_2^2}
\le
G_{\star}\sqrt{T(D_{\star}^2+2D_{\star}P_T)}.
\end{equation}
\end{proposition}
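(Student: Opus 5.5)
The plan is the standard projected-gradient telescoping argument run against a moving comparator, with a constant-step (not $1/\sqrt{T}$) tuning so that realized gradient norms appear in place of $G^2T$. Let $\Pi$ denote Euclidean projection onto $\mathcal{X}$ (convex, so $\Pi$ is nonexpansive). The update is $x_{t+1}=\Pi(x_t-\eta g_t)$ with $g_t\in\partial\ell_t(x_t)$.

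\emph{Step 1: per-round inequality.} For each round $t$, convexity gives $\ell_t(x_t)-\ell_t(x_t^\star)\le\langle g_t,x_t-x_t^\star\rangle$. Nonexpansiveness of $\Pi$ together with $x_t^\star\in\mathcal{X}$ gives
\[
\|x_{t+1}-x_t^\star\|_2^2\le\|x_t-x_t^\star\|_2^2-2\eta\langle g_t,x_t-x_t^\star\rangle+\eta^2\|g_t\|_2^2 .
\]
Rearranging bounds each term by
\[
\frac{\|x_t-x_t^\star\|_2^2-\|x_{t+1}-x_t^\star\|_2^2}{2\eta}+\frac{\eta}{2}\|g_t\|_2^2 .
\]

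\emph{Step 2: telescoping with comparator drift.} Summing over $t$, the distance terms do not telescope directly, because round $t+1$ starts from $\|x_{t+1}-x_{t+1}^\star\|^2$ rather than $\|x_{t+1}-x_t^\star\|^2$. I would insert the mismatch $\Delta_{t+1}=\|x_{t+1}-x_{t+1}^\star\|_2^2-\|x_{t+1}-x_t^\star\|_2^2$, as in the dynamic-regret proposition above. Factoring the difference of squares and applying Cauchy--Schwarz and the triangle inequality,
\[
\Delta_{t+1}=\langle x_t^\star-x_{t+1}^\star,\,2x_{t+1}-x_t^\star-x_{t+1}^\star\rangle\le 2D_\star\|x_{t+1}^\star-x_t^\star\|_2 .
\]
Here each of $\|x_{t+1}-x_t^\star\|_2$ and $\|x_{t+1}-x_{t+1}^\star\|_2$ is at most $D_\star$. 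The first distance $\|x_1-x_1^\star\|_2^2\le D_\star^2$, and the final negative term is dropped. Summing the drift terms gives $2D_\star P_T$, which yields
\[
\mathrm{Reg}^{\mathrm{dyn}}_T\le\frac{D_\star^2+2D_\star P_T}{2\eta}+\frac{\eta}{2}\sum_{t=1}^T\|g_t\|_2^2 .
\]

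\emph{Step 3: step-size tuning and the Lipschitz bound.} Minimizing $a/(2\eta)+\eta b/2$ at $\eta^\star=\sqrt{a/b}$ gives $\sqrt{ab}$, which is the first stated bound. The final inequality then follows from $\|g_t\|_2\le G_\star$ (the loss-specific gradient proposition), since $\sum_t\|g_t\|_2^2\le TG_\star^2$.

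\emph{Main obstacle.} The delicate point is that $D_\star$ must bound distances across \emph{different} rounds' feasible sets, whereas $\mathcal{X}_t$ varies with $t$ and the earlier proposition only controls the diameter within a single $\mathcal{X}_t$. I would handle this by working in a fixed ambient set containing all $\mathcal{X}_t$ (nonnegative vectors with $\ell_1$ mass at most $U_{\max}$, padded across winner indices), project onto it, and take $D_\star$ as its diameter. Otherwise I would state the result under the assumption that all iterates and comparators lie in a common set of diameter $D_\star$.

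A secondary issue is that $\eta^\star$ depends on the unknown $P_T$ and the realized $\sum_t\|g_t\|_2^2$. I would note that the bound is an oracle-tuned statement, attainable up to constants by a doubling trick or an AdaGrad-style step $\eta_t\propto 1/\sqrt{\sum_{s\le t}\|g_s\|^2}$.
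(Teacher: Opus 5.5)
Your proposal is correct and follows the same route as the paper, whose proof simply invokes the telescoping-with-drift argument of the earlier dynamic-regret proposition (bounding $\Delta_t$ by $2D_\star\|x_t^\star-x_{t-1}^\star\|_2$), substitutes the constants $G_\star, D_\star$ from the loss-specific proposition, and optimizes $\eta$. Your write-up is more careful than the paper's on two points it leaves implicit: the exact difference-of-squares factorization of the drift term, and the need for $D_\star$ to control distances across different rounds' feasible sets, which you resolve by projecting onto a common ambient set whose $\ell_2$-diameter is again $\sqrt{2}\,U_{\max}$.
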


\begin{proof}
This is Proposition~A.2 with explicit constants from Proposition~A.3 substituted into the dynamic mirror-descent inequality before step-size optimization.
\end{proof}

\subsection{Proof of Proposition~\ref{prop:severity-explicit}}

In one dimension with \(\theta_t\in[0,1]\), projected OGD update is
\(
\theta_{t+1}=\Pi_{[0,1]}(\theta_t-\eta g_t)
\)
with
\(
g_t\in\partial \ell_t^\theta(\theta_t).
\)
Since
\(
\ell_t^\theta(\theta)=D_t|\theta-\theta_t^{\mathrm{needed}}|,
\)
any subgradient has magnitude \(|g_t|\le D_t\).

For dynamic comparator \(\{\theta_t^\star\}\), the one-dimensional dynamic OGD inequality gives
\begin{equation}
\sum_{t=1}^T\big(\ell_t^\theta(\theta_t)-\ell_t^\theta(\theta_t^\star)\big)
\le
\frac{(\theta_1-\theta_1^\star)^2}{2\eta}
+\frac{1}{2\eta}\sum_{t=2}^T\Delta_t
+\frac{\eta}{2}\sum_{t=1}^T g_t^2,
\end{equation}
where
\(
\Delta_t:=(\theta_t-\theta_t^\star)^2-(\theta_t-\theta_{t-1}^\star)^2.
\)
Because \(\theta_t,\theta_t^\star\in[0,1]\),
\(
|\theta_t-\theta_t^\star|\le 1
\)
and
\(
\Delta_t\le 2|\theta_t^\star-\theta_{t-1}^\star|.
\)
Define
\(
P_T^\theta=\sum_{t=2}^{T}|\theta_t^\star-\theta_{t-1}^\star|.
\)
Then
\begin{equation}
\mathrm{Reg}^{\mathrm{dyn},\theta}_T
\le
\frac{1+2P_T^\theta}{2\eta}
+\frac{\eta}{2}\sum_{t=1}^T D_t^2.
\end{equation}
Minimizing the RHS over \(\eta>0\) yields
\(
\eta^\star=\sqrt{(1+2P_T^\theta)/\sum_t D_t^2}
\)
and therefore
\begin{equation}
\mathrm{Reg}^{\mathrm{dyn},\theta}_T
\le
\sqrt{(1+2P_T^\theta)\sum_{t=1}^{T}D_t^2}.
\end{equation}

\section{Queue Instability: Extreme-Point Selection}\label{app:queue-instability}
\label{sec:queue-instability}
We fix a round and suppress the time index. Let \(W\) be the winner set with \(n := |W|\) winners.
Let \(u \in \mathbb{R}^n_{+}\) denote the per-winner PNL haircut capacities and let
\(U := \sum_{i=1}^n u_i\). For a fixed round budget \(B \in [0,U]\), define the feasibility polytope
\begin{equation}
\label{eq:feasible-polytope}
X(B,u) := \left\{ x \in \mathbb{R}^n : 0 \le x_i \le u_i\ \ \forall i,\ \ \sum_{i=1}^n x_i = B \right\}.
\end{equation}
This is the intersection of a box with an affine hyperplane, hence a bounded polytope.

\begin{definition}[Queue allocation map]
\label{def:queue-map}
Let \(s \in \mathbb{R}^n\) be a score vector and let \(\sigma(s)\) be a permutation such that
\(s_{\sigma(1)} > s_{\sigma(2)} > \cdots > s_{\sigma(n)}\).
The (greedy) queue allocation associated with \(\sigma\) is \(Q_\sigma(B,u) \in X(B,u)\) defined by
\begin{equation}
\label{eq:queue-greedy}
x_{\sigma(j)} := \min\left\{u_{\sigma(j)},\ B - \sum_{\ell<j} x_{\sigma(\ell)}\right\},
\qquad j=1,\dots,n,
\end{equation}
and \(x_{\sigma(j)} := 0\) for all \(j\) after the budget is exhausted.
\end{definition}

\begin{definition}[Extreme point and instability]
A point \(x \in X(B,u)\) is an \emph{extreme point} (vertex) if it cannot be written as a nontrivial
convex combination of two distinct points in \(X(B,u)\).
A mapping \(F:\mathbb{R}^n\to \mathbb{R}^n\) is \emph{Lipschitz} if there exists \(L<\infty\) with
\(\|F(s)-F(s')\|_1 \le L\|s-s'\|_\infty\) for all \(s,s'\).
\end{definition}

\begin{lemma}[Vertices of the capped simplex]
\label{lem:capped-simplex-vertices}
Assume \(0<B<U\) and \(u_i>0\) for all \(i\). A point \(x \in X(B,u)\) is an extreme point if and only if
\emph{at most one} coordinate satisfies \(0<x_i<u_i\) (equivalently, at least \(n-1\) coordinates are at
a bound \(0\) or \(u_i\)).
\end{lemma}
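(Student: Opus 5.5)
I will prove the two directions separately, working directly from the definition of an extreme point and the explicit description of \(X(B,u)\) in~\eqref{eq:feasible-polytope}. Throughout, call a coordinate \emph{free} at \(x\) if \(0<x_i<u_i\). Let \(F(x)\) denote the set of free coordinates.

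\emph{(Only if.)} Suppose \(x\in X(B,u)\) has at least two free coordinates, say \(i\neq j\). Set
\[
\delta:=\min\{x_i,\ u_i-x_i,\ x_j,\ u_j-x_j\}>0,
\]
and define \(x^\pm:=x\pm\delta(e_i-e_j)\). Both points keep \(\sum_k x^\pm_k=B\), because the perturbation has zero sum. By the choice of \(\delta\), both also satisfy \(0\le x^\pm_i\le u_i\) and \(0\le x^\pm_j\le u_j\), and every other coordinate is unchanged. Hence \(x^\pm\in X(B,u)\) and \(x^+\neq x^-\). Since \(x=\tfrac12(x^++x^-)\) is a nontrivial convex combination, \(x\) is not extreme.

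\emph{(If.)} Suppose \(|F(x)|\le 1\), and write \(x=\tfrac12(y+z)\) with \(y,z\in X(B,u)\); it suffices to treat midpoints, since a nontrivial convex combination with weight \(\lambda\in(0,1)\) is handled identically. For every non-free coordinate \(k\) we have \(x_k\in\{0,u_k\}\). Because \(y_k,z_k\in[0,u_k]\) and their average lies at an endpoint of that interval, this forces \(y_k=z_k=x_k\). If \(F(x)=\emptyset\), then \(y=z=x\) immediately. If \(F(x)=\{i\}\), all other coordinates of \(y\) and \(z\) agree with \(x\). The budget constraint then gives
\[
y_i=B-\sum_{k\ne i}x_k=z_i,
\]
so again \(y=z=x\). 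Therefore \(x\) is extreme. The equivalent phrasing, at least \(n-1\) coordinates at a bound, is just a restatement of \(|F(x)|\le1\).

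The hypotheses \(0<B<U\) and \(u_i>0\) are not needed for either implication. They only ensure the polytope is nondegenerate, so that vertices with exactly one free coordinate can actually occur. The main subtlety is the endpoint-forcing step in the ``if'' direction, namely that an average lying at a bound of \([0,u_k]\) forces both points to equal that bound. This is elementary but must be stated for general \(\lambda\in(0,1)\). As a remark, the greedy map \(Q_\sigma\) from Definition~\ref{def:queue-map} fills coordinates to \(u_{\sigma(j)}\) until at most one partially filled coordinate remains, so by this lemma it always outputs a vertex.
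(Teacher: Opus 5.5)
Your proof is correct and follows the paper's argument essentially step for step: the $e_i-e_j$ perturbation for the ``only if'' direction, and endpoint forcing on the bound coordinates plus the budget constraint to pin down the single free coordinate for the ``if'' direction. Your remark that the hypotheses $0<B<U$ and $u_i>0$ are not needed for either implication is also correct.
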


\begin{proof}
    (\emph{Only if}.) Suppose \(x\in X(B,u)\) has two distinct coordinates \(p\neq q\) with
    \(0<x_p<u_p\) and \(0<x_q<u_q\). Let \(d := e_p - e_q\), where \(e_p\) is the \(p\)th standard basis vector.
    Since \(x_p\) and \(x_q\) are strictly interior, there exists \(\epsilon>0\) such that
    \(x \pm \epsilon d\) satisfies \(0\le (x\pm \epsilon d)_i \le u_i\) for all \(i\).
    Moreover, \(\sum_i (x\pm \epsilon d)_i = \sum_i x_i \pm \epsilon(1-1)=B\), so \(x\pm \epsilon d \in X(B,u)\).
    But then
    \[
    x = \tfrac12(x+\epsilon d) + \tfrac12(x-\epsilon d)
    \]
    is a nontrivial convex combination of two distinct feasible points, so \(x\) is not extreme.
    
    (\emph{If}.) Conversely, assume \(x\in X(B,u)\) has at most one coordinate \(k\) with \(0<x_k<u_k\).
    Suppose \(x = \theta y + (1-\theta)z\) for some \(\theta\in(0,1)\) and \(y,z\in X(B,u)\).
    For any index \(i\) with \(x_i=0\), feasibility implies \(y_i\ge 0\) and \(z_i\ge 0\) and thus
    \(0=x_i=\theta y_i+(1-\theta)z_i\) forces \(y_i=z_i=0\).
    Similarly, for any index \(i\) with \(x_i=u_i\), feasibility implies \(y_i\le u_i\) and \(z_i\le u_i\) and thus
    \(u_i=x_i=\theta y_i+(1-\theta)z_i\) forces \(y_i=z_i=u_i\).
    Therefore \(y_i=z_i=x_i\) for all \(i\neq k\). Using the budget constraint \(\sum_i y_i=\sum_i z_i=B\) and
    \(\sum_{i\neq k} y_i=\sum_{i\neq k} z_i=\sum_{i\neq k} x_i\), we conclude \(y_k=z_k=x_k\) as well.
    Hence \(y=z=x\), so \(x\) is extreme.
\end{proof}

\begin{proposition}[Queue allocations are vertices]
\label{prop:queue-vertex}
For any permutation \(\sigma\) and any \((B,u)\) with \(0<B<U\), the queue output \(x = Q_\sigma(B,u)\)
is an extreme point of \(X(B,u)\).
\end{proposition}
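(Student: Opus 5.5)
The plan is to reduce the claim to Lemma~\ref{lem:capped-simplex-vertices}: since $0<B<U$ and (as in the lemma) $u_i>0$, it suffices to show two things. First, $x=Q_\sigma(B,u)$ lies in $X(B,u)$. Second, at most one coordinate of $x$ satisfies $0<x_i<u_i$.

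\emph{Feasibility.} Write $R_j := B-\sum_{\ell<j}x_{\sigma(\ell)}$ for the residual budget before serving position $j$, so that $R_1=B\ge 0$. By induction on $j$, the greedy rule \eqref{eq:queue-greedy} gives $x_{\sigma(j)}=\min\{u_{\sigma(j)},R_j\}$. This value lies in $[0,u_{\sigma(j)}]$, and $R_{j+1}=R_j-x_{\sigma(j)}\ge 0$, so the box constraints hold. For the budget constraint, let $j^\star$ be the first index with $\sum_{\ell\le j^\star}u_{\sigma(\ell)}\ge B$. Such an index exists because $\sum_\ell u_{\sigma(\ell)}=U>B$. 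For $j<j^\star$ the residual exceeds $u_{\sigma(j)}$, so $x_{\sigma(j)}=u_{\sigma(j)}$. At $j^\star$ we get $x_{\sigma(j^\star)}=R_{j^\star}=B-\sum_{\ell<j^\star}u_{\sigma(\ell)}$, which exhausts the budget, and every later coordinate is $0$. Hence $\sum_i x_i=B$ and $x\in X(B,u)$.

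\emph{Vertex structure.} From the explicit form above, every coordinate except $\sigma(j^\star)$ equals $u_{\sigma(j)}$ (for $j<j^\star$) or $0$ (for $j>j^\star$). So only $\sigma(j^\star)$ can be strictly interior. Its value $0<B-\sum_{\ell<j^\star}u_{\sigma(\ell)}\le u_{\sigma(j^\star)}$ may even sit at the upper bound, which is harmless. The ``if'' direction of Lemma~\ref{lem:capped-simplex-vertices} then gives that $x$ is an extreme point.

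The only delicate step is the bookkeeping for $j^\star$. It requires the strict inequality $B<U$, so that the budget is exhausted inside the queue, and $B>0$, so that the lemma applies. I expect no real obstacle beyond stating this induction cleanly. Ties in scores are excluded by the strict ordering in Definition~\ref{def:queue-map}, and in any case they would not affect the argument.
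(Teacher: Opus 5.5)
Your proposal is correct and follows the paper's proof. Both read off from the greedy rule that $x=Q_\sigma(B,u)$ saturates a prefix of the queue, has at most one partially filled index, and is zero afterward, and then apply the ``if'' direction of Lemma~\ref{lem:capped-simplex-vertices}. Your explicit feasibility check via $j^\star$ fills in what Definition~\ref{def:queue-map} asserts without proof; note that exhausting the budget needs only $B\le U$, and the strict inequality is required only to match the lemma's hypotheses.
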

\begin{proof}
    By Definition~\ref{def:queue-map}, the queue output \(x=Q_\sigma(B,u)\) satisfies:
    there exists an index \(m\) (the last touched account) such that
    \(x_{\sigma(j)}=u_{\sigma(j)}\) for all \(j<m\), \(x_{\sigma(m)}\in[0,u_{\sigma(m)}]\), and \(x_{\sigma(j)}=0\) for all \(j>m\).
    Thus at most one coordinate can be strictly interior, \ie satisfy \(0<x_i<u_i\).
    By Lemma~\ref{lem:capped-simplex-vertices}, \(x\) is an extreme point of \(X(B,u)\).
\end{proof}

\begin{proposition}[Queue = linear optimization over the polytope]
\label{prop:queue-LP}
Fix a score vector \(s\) with strict ordering \(s_{\sigma(1)}>\cdots>s_{\sigma(n)}\).
Then \(Q_\sigma(B,u)\) is the unique optimizer of the linear program
\begin{equation}
\label{eq:queue-LP}
\max_{x \in X(B,u)}\ \langle s, x\rangle.
\end{equation}
In particular, queues implement \emph{extreme-point selection} in the feasibility polytope.
\end{proposition}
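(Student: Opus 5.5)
The plan is a first-divergence exchange argument. It shows at once that \(Q_\sigma(B,u)\) is optimal and that it is the only optimizer.

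First, two preliminaries. \(X(B,u)\) is a nonempty compact polytope for \(0\le B\le U\), so the linear program \eqref{eq:queue-LP} attains its maximum. Also, \(Q_\sigma(B,u)\in X(B,u)\) directly from Definition~\ref{def:queue-map}. Each coordinate lies in \([0,u_{\sigma(j)}]\), because the remaining budget \(B-\sum_{\ell<j}x_{\sigma(\ell)}\) stays nonnegative by induction. The coordinates sum to \(B\) because \(B\le U\): the greedy fill exhausts the budget before or at the last account.

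Main step. Let \(x^\star\) be any optimizer and suppose \(x^\star\neq Q:=Q_\sigma(B,u)\). Let \(j\) be the smallest position in the queue order with \(x^\star_{\sigma(j)}\neq Q_{\sigma(j)}\).

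\begin{itemize}
\item The coordinates before position \(j\) agree, so \(r:=B-\sum_{\ell<j}x^\star_{\sigma(\ell)}=B-\sum_{\ell<j}Q_{\sigma(\ell)}\).
\item Feasibility of \(x^\star\) gives \(x^\star_{\sigma(j)}\le u_{\sigma(j)}\).
\item Nonnegativity of the later coordinates and the budget constraint give \(x^\star_{\sigma(j)}\le r\).
\item Hence \(x^\star_{\sigma(j)}\le\min\{u_{\sigma(j)},r\}=Q_{\sigma(j)}\). Since the two differ, \(x^\star_{\sigma(j)}<Q_{\sigma(j)}\).
\end{itemize}

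Both vectors sum to \(B\) and agree before position \(j\). So some later position \(k>j\) must satisfy \(x^\star_{\sigma(k)}>Q_{\sigma(k)}\ge 0\).

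Now take \(\epsilon=\min\{Q_{\sigma(j)}-x^\star_{\sigma(j)},\,x^\star_{\sigma(k)}\}>0\) and set \(x'=x^\star+\epsilon(e_{\sigma(j)}-e_{\sigma(k)})\). This vector keeps the budget fixed. It respects the caps, since \(x'_{\sigma(j)}\le Q_{\sigma(j)}\le u_{\sigma(j)}\) and \(x'_{\sigma(k)}\ge 0\). So \(x'\in X(B,u)\).

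The objective changes by
\[
\langle s,x'\rangle-\langle s,x^\star\rangle=\epsilon\bigl(s_{\sigma(j)}-s_{\sigma(k)}\bigr)>0,
\]
where strictness comes from the strict score ordering and \(j<k\). This contradicts optimality of \(x^\star\). Therefore every optimizer equals \(Q\); combined with existence, \(Q\) is the unique optimizer.

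The final clause then follows. A linear objective with a unique maximizer over a polytope is maximized at a vertex. This is consistent with Proposition~\ref{prop:queue-vertex}, which already gives \(Q\in\operatorname{ext}X(B,u)\) when \(0<B<U\), so queues are exactly extreme-point selection.

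The main point needing care is the inequality \(x^\star_{\sigma(j)}\le Q_{\sigma(j)}\) at the first divergence. It is what lets the exchange always move mass toward a higher-score account. It uses both caps, \(u\) and the remaining budget, and the nonnegativity of the coordinates later in the queue. Degenerate cases such as \(u_i=0\), \(B=0\) or \(B=U\) need no separate treatment: there the relevant coordinates are forced, and the argument goes through unchanged.
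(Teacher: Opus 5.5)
Your proof is correct and follows the paper's approach: both rest on the same pairwise exchange, where shifting mass \(\epsilon\) from a lower-scored coordinate to an unsaturated higher-scored one stays feasible and strictly raises \(\langle s,x\rangle\). Your first-divergence organization only makes explicit two things the paper passes over, namely the existence of an optimizer and the step it summarizes as ``this forces the greedy prefix structure.''
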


\begin{proof}
    This is the fractional knapsack (greedy) structure. Let \(\sigma\) order scores strictly decreasing.
    Consider any feasible \(x\in X(B,u)\). If there exist indices \(p,q\) with \(s_p>s_q\) but \(x_p<u_p\) and \(x_q>0\),
    then for sufficiently small \(\epsilon>0\) the modified allocation \(\tilde x := x+\epsilon(e_p-e_q)\) remains feasible
    (same argument as in Lemma~\ref{lem:capped-simplex-vertices}) and strictly improves the objective:
    \[
    \langle s, \tilde x\rangle - \langle s, x\rangle
    = \epsilon(s_p-s_q) > 0.
    \]
    Therefore, in any optimizer, whenever some lower-scored coordinate has positive mass, all higher-scored
    coordinates must already be saturated at their upper bounds. This forces the greedy prefix structure:
    fill \(x_{\sigma(1)}\) up to \(u_{\sigma(1)}\), then \(x_{\sigma(2)}\), and so on, with at most one partially-filled index.
    That structure is exactly \(Q_\sigma(B,u)\) from \eqref{eq:queue-greedy}.
    Strict ordering implies uniqueness.
\end{proof}

\begin{proposition}[Fundamental instability of queues]
\label{prop:queue-instability}
Assume there exist two distinct indices \(i\neq j\) with \(u_i \ge B\) and \(u_j \ge B\).
Let \(Q(s)\) denote the queue allocation induced by ordering scores \(s\) (with any deterministic tie-breaking).
Then \(Q(\cdot)\) is not continuous (hence not Lipschitz): for every \(\delta>0\) there exist score vectors
\(s,s'\) with \(\|s-s'\|_\infty \le \delta\) but
\begin{equation}
\label{eq:queue-jump}
\|Q(s)-Q(s')\|_1 = 2B.
\end{equation}
Thus, arbitrarily small perturbations in scores can induce order-one changes in allocations and queue positions.
\end{proposition}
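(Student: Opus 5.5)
The argument is an explicit two-point construction built around the pair of large-capacity winners \(i\neq j\). We fix \(\delta>0\) and a baseline score vector in which \(i\) and \(j\) sit at the top of the ordering, strictly above every other account. For instance, take \(s_k := -1-k\) for \(k\notin\{i,j\}\), together with \(s_i := \delta/2\) and \(s_j := 0\). Then we define \(s'\) by swapping the two top scores: \(s'_i := 0\), \(s'_j := \delta/2\), and \(s'_k := s_k\) otherwise. By construction \(\|s-s'\|_\infty = \delta/2 \le \delta\). Both orderings are strict, so tie-breaking plays no role, and the queue maps are well defined by Definition~\ref{def:queue-map}.

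Next we compute the two queue outputs from the greedy rule~\eqref{eq:queue-greedy}. Under \(s\), the first account served is \(\sigma(1)=i\), which receives \(x_i=\min\{u_i,B\}=B\) because \(u_i\ge B\). This exhausts the budget, so every other coordinate is zero, giving \(Q(s)=B e_i\). Symmetrically, under \(s'\) account \(j\) is served first and \(u_j\ge B\), so \(Q(s')=B e_j\). Since \(i\neq j\), we get
\[
\|Q(s)-Q(s')\|_1=\|B e_i - B e_j\|_1 = 2B,
\]
which is~\eqref{eq:queue-jump}.

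The remaining step is to deduce non-continuity and non-Lipschitzness. Suppose \(Q\) were Lipschitz with constant \(L\). Then the pair above would give \(2B\le L\delta\) for every \(\delta>0\), which is impossible when \(B>0\). The same family of pairs also rules out continuity at the tie point \(s^0\), defined by \(s^0_i=s^0_j=0\) and \(s^0_k=s_k\) otherwise. Whatever tie-breaking rule is used, \(Q(s^0)\) equals one of \(Be_i\) or \(Be_j\). Both \(s\) and \(s'\) lie within \(\delta/2\) of \(s^0\), but at least one of \(Q(s),Q(s')\) differs from \(Q(s^0)\) by \(2B\) in \(\ell_1\). Letting \(\delta\to 0\) then shows that \(Q\) is discontinuous at \(s^0\).

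The only delicate point is this interpretive one: the map \(Q\) is constant on each open chamber of strictly ordered scores and jumps across ties. The result therefore needs \(B>0\), which is implicit in the setting \(0<B<U\) used earlier, and it is cleanest to phrase the discontinuity at a tie point. One could also note the geometric reading via Proposition~\ref{prop:queue-vertex} and Proposition~\ref{prop:queue-LP}: the outputs \(Be_i\) and \(Be_j\) are distinct vertices of \(X(B,u)\), and the LP optimizer jumps between them as the objective direction \(s\) crosses the face where \(s_i=s_j\).
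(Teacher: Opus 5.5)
Your proof is correct and follows the paper's argument: you swap the top two scores between the two large-capacity winners \(i\) and \(j\), which gives \(Q(s)=Be_i\), \(Q(s')=Be_j\), and hence \(\|Q(s)-Q(s')\|_1=2B\). Two of your additions are small rigor improvements over the paper: the explicit discontinuity argument at the fixed tie point \(s^0\) (the paper's ``since \(\delta\) is arbitrary'' step alone only rules out uniform continuity), and your note that \(B>0\) is required.
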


\begin{proof}
    Pick distinct indices \(i\neq j\) with \(u_i\ge B\) and \(u_j\ge B\).
    Fix any \(\delta>0\), and define two score vectors \(s,s'\in\mathbb{R}^n\) by
    \[
    s_i=\delta,\ s_j=0,\ \ s_k=-1\ \ (k\notin\{i,j\}),\qquad
    s'_i=0,\ s'_j=\delta,\ \ s'_k=-1\ \ (k\notin\{i,j\}).
    \]
    Then \(\|s-s'\|_\infty=\delta\).
    Under \(s\), the top-ranked coordinate is \(i\), and since \(u_i\ge B\) the greedy queue allocates all budget to \(i\):
    \(Q(s)=B e_i\). Under \(s'\), the top-ranked coordinate is \(j\) and similarly \(Q(s')=B e_j\).
    Hence
    \[
    \|Q(s)-Q(s')\|_1 = \|B e_i - B e_j\|_1 = 2B.
    \]
    Since \(\delta\) is arbitrary, \(Q(\cdot)\) cannot be continuous, and therefore cannot be Lipschitz.
\end{proof}

\subsection{Duality and Queue Position}
\label{app:duality-queue-position}
\label{sec:duality-queue-position}

In this section, we prove a stronger version of the result for fairness using convex duality.
We again fix a round and suppress the time index. Let \(X(B,u)\) be the feasibility polytope in
\eqref{eq:feasible-polytope}. For any feasible allocation \(x\in X(B,u)\), define the
\emph{capacity-normalized burden} (haircut fraction)
\begin{equation}
\label{eq:normalized-burden}
h_i(x) := \frac{x_i}{u_i}\in[0,1]\quad (u_i>0).
\end{equation}
Define the \emph{worst-burden} (equivalently, the top queue position under the induced burden ranking)
\begin{equation}
\label{eq:worst-burden}
z(x) := \max_{i\in[n]} h_i(x).
\end{equation}
Mechanism-design interpretation: if agents experience disutility as an increasing function of \(h_i\),
then minimizing \(z(x)\) is a Rawlsian or max--min fairness criterion that protects the worst-hit participant.

\paragraph{Min--max burden program and its dual.}
Consider the min--max fairness allocation problem
\begin{equation}
\label{eq:minmax-primal}
\min_{x\in X(B,u)}\ z(x) \;=\; \min_{x\in X(B,u)}\ \max_i \frac{x_i}{u_i}.
\end{equation}
Introducing an epigraph variable \(z\ge 0\) yields the equivalent linear program
\begin{equation}
\label{eq:epigraph-primal}
\begin{aligned}
\min_{x,z}\quad & z \\
\text{s.t.}\quad & \sum_{i=1}^n x_i = B, \\
& 0 \le x_i \le u_i,\qquad i=1,\dots,n,\\
& x_i \le z u_i,\qquad i=1,\dots,n,\\
& z \ge 0.
\end{aligned}
\tag{P}
\end{equation}
The dual of \eqref{eq:epigraph-primal} collapses to a one-dimensional shadow-price constraint:
\begin{equation}
\label{eq:dual}
\begin{aligned}
\max_{y\ge 0}\quad & yB\\
\text{s.t.}\quad & y\sum_{i=1}^n u_i \le 1.
\end{aligned}
\tag{D}
\end{equation}
The dual variable \(y\) is the shadow price of the budget constraint under the egalitarian objective.

\begin{proof}[Proof of dual form \eqref{eq:dual}]
    We derive the dual of \eqref{eq:epigraph-primal} (ignoring indices with \(u_i=0\)).
    Form the Lagrangian with multiplier \(\alpha\in\mathbb{R}\) for \(\sum_i x_i=B\),
    multipliers \(\mu_i\ge 0\) for constraints \(x_i-zu_i\le 0\), multipliers \(\nu_i\ge 0\) for \(-x_i\le 0\),
    and multiplier \(\rho\ge 0\) for \(-z\le 0\):
    \[
    \mathcal{L}(x,z;\alpha,\mu,\nu,\rho)
    = z + \alpha\Big(\sum_i x_i - B\Big) + \sum_i \mu_i(x_i-zu_i) + \sum_i \nu_i(-x_i) + \rho(-z).
    \]
    Collecting terms gives
    \[
    \mathcal{L} = z\Big(1 - \sum_i \mu_i u_i - \rho\Big) + \sum_i x_i(\alpha+\mu_i-\nu_i) - \alpha B.
    \]
    For the infimum over \((x,z)\) to be finite, we require
    \(\alpha+\mu_i-\nu_i=0\) for all \(i\) (so \(\nu_i=\alpha+\mu_i\ge 0\)) and
    \(1-\sum_i \mu_i u_i - \rho=0\) with \(\rho\ge 0\) (so \(\sum_i \mu_i u_i\le 1\)).
    Under these conditions, \(\inf_{x,z}\mathcal{L}=-\alpha B\).
    Let \(y:=-\alpha\). For any optimal dual solution we may assume \(y\ge 0\) since \(B>0\) and the objective is \(yB\).
    Moreover, the constraint \(\nu_i=\alpha+\mu_i\ge 0\) becomes \(\mu_i\ge y\).
    To maximize \(yB\) subject to \(\sum_i \mu_i u_i\le 1\) and \(\mu_i\ge y\), we set \(\mu_i=y\) for all \(i\),
    yielding the single constraint \(y\sum_i u_i \le 1\).
    Thus the dual reduces to \eqref{eq:dual}.
\end{proof}

\begin{proposition}[Duality-implied elimination of queue positions and dominance over queues]
\label{prop:duality-dominance}
Assume \(|W|=n\ge 2\), \(u_i>0\) for all \(i\), and \(0<B<U\).
Let \(x^{\mathrm{MM}}\) be the optimizer of \eqref{eq:minmax-primal} and let \(x^{Q}=Q_\sigma(B,u)\) be any queue allocation.

\begin{enumerate}
\item \emph{(Structure / no queue positions).}
The min--max optimizer is unique and equals pro-rata:
\begin{equation}
\label{eq:pro-rata-solution}
x^{\mathrm{MM}}_i = \frac{B}{U}\,u_i,\qquad i=1,\dots,n,
\end{equation}
and therefore \(h_i(x^{\mathrm{MM}})=B/U\) for all \(i\) (all burdens equalized).

\item \emph{(Strict worst-burden improvement).}
Every queue has strictly worse worst-burden:
\begin{equation}
\label{eq:queue-dominance}
z(x^{Q}) > z(x^{\mathrm{MM}})=\frac{B}{U},
\end{equation}
unless only one account has positive capacity (the degenerate \(n=1\) case).

\item \emph{(Monotonicity).}
Define post-ADL remaining endowment \(r_i(x):=u_i-x_i\).
Under pro-rata \eqref{eq:pro-rata-solution}, \(r_i(x^{\mathrm{MM}})=(1-\tfrac{B}{U})u_i\),
so \(u_i\ge u_j \Rightarrow r_i(x^{\mathrm{MM}})\ge r_j(x^{\mathrm{MM}})\) (no inversions).
Conversely, for any strict queue rule, there exist feasible \((B,u)\) such that the induced post-ADL ordering
violates monotonicity (hence creates inversions in queue-position diagnostics).
\end{enumerate}
\end{proposition}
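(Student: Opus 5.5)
For part 1, I will argue directly from the primal \eqref{eq:minmax-primal} and use the dual \eqref{eq:dual} only as a certificate. \textbf{Lower bound.} For any feasible \(x\in X(B,u)\), summing \(x_i\le z(x)\,u_i\) over \(i\) gives
\[
B=\sum_i x_i\le z(x)\,U,
\]
so \(z(x)\ge B/U\). This matches the dual optimum \(y^\star=1/U\), whose value is \(y^\star B=B/U\). \textbf{Attainment.} The pro-rata point \(x_i=(B/U)u_i\) is feasible, since \(0<B/U<1\) places each coordinate in \([0,u_i]\) and the coordinates sum to \(B\). It attains \(z=B/U\), hence it is optimal. \textbf{Uniqueness.} Any optimizer \(x\) satisfies \(x_i\le (B/U)u_i\) for every \(i\). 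Summing these inequalities gives equality in the sum, and since each term is nonnegative slack, every inequality must hold with equality. Thus \(x_i=(B/U)u_i\) for all \(i\), which forces \(x=x^{\mathrm{MM}}\). The claim that all burdens equal \(B/U\) is then immediate.

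For part 2, I will invoke Proposition~\ref{prop:queue-vertex}: \(x^Q\) is a vertex of \(X(B,u)\), so by Lemma~\ref{lem:capped-simplex-vertices} at most one coordinate is strictly interior. The pro-rata point, by contrast, has all \(n\ge 2\) coordinates strictly interior, because \(0<B/U<1\) and \(u_i>0\). Hence \(x^Q\ne x^{\mathrm{MM}}\). Combining this with the uniqueness from part 1 gives \(z(x^Q)>B/U\). As a sanity check, I could also argue directly. Since \(B>0\), the first queued account receives \(\min(u_{\sigma(1)},B)>0\). Since \(B<U\), the last account \(\sigma(n)\) cannot be full, so some burdens are strictly below \(B/U\); because the burdens must average to \(B/U\) under \(u\)-weights, some other burden must then exceed it. The uniqueness route is cleaner, so I will use it.

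For part 3, the forward direction is one line: \(r_i=(1-B/U)u_i\) with \(1-B/U>0\), so the order of the \(u_i\) is preserved. For the converse, I will construct a counterexample for a given strict order \(\sigma\) with \(n\ge2\). Let \(a=\sigma(1)\) and \(b=\sigma(2)\), and choose \(u_a=2\), \(u_b=1\), with every other \(u_k\) small, say \(u_k=1\). Take \(B=2\), which satisfies \(B<U\). The queue then fully closes \(a\), giving \(r_a=0<r_b=1\) even though \(u_a>u_b\); this is an inversion. The construction is allowed because \(u\) is ours to pick, while the score order \(\sigma\) is fixed and independent of \(u\). The main obstacle I expect is interpretive. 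If a queue's scores are allowed to depend on \(u\) (for instance, scores defined by PNL), the adversarial \(u\) must be chosen consistently with \(\sigma\). In that case I will instead use any pair in which the higher-ranked account has \(u\) at least as large and the budget exceeds its capacity. If the ranking can put a larger-\(u\) account first, the inversion appears. If the ranking always favors smaller \(u\), I will choose \(B\) to fully close the smaller first-ranked account while leaving the larger one untouched; this produces no inversion, so I will handle that case by noting that ``strict queue rule'' refers to a fixed \(\sigma\) and picking \(u\) freely, as in the construction above.
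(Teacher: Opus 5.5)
Your Parts 1 and 3 match the paper's argument. In Part 1 you sum $x_i\le z\,u_i$ to get $z\ge B/U$, attain it with pro-rata, and get uniqueness because the summed inequalities must all bind. In Part 3 the converse lets the budget exactly exhaust a larger, higher-ranked account.

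Your choice $a=\sigma(1)$ in Part 3 is tighter than the paper's version. The paper only assumes $i$ is ranked ahead of $j$, which does not by itself guarantee $x_i=u_i$ and $x_j=0$ when other accounts precede $i$. Your reading ``fix $\sigma$, choose $(B,u)$'' is also the one the statement's quantifiers and Definition~\ref{def:queue-map} support, since scores and capacities are separate inputs there.

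The genuine difference is Part 2. The paper splits on whether $u_{\sigma(1)}\ge B$ and computes $z(x^Q)$ explicitly as $B/u_{\sigma(1)}$ or $1$. Each exceeds $B/U$ because $u_{\sigma(1)}<U$ and $B<U$. You instead note that $x^Q$ has at most one strictly interior coordinate while pro-rata has all $n\ge 2$ coordinates interior. Hence $x^Q\neq x^{\mathrm{MM}}$, and the uniqueness from Part 1 forces the strict inequality.

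Your route is structural and more general. It shows that every vertex of $X(B,u)$, indeed every feasible point other than pro-rata, has strictly worse worst-burden. This ties Part 2 to the extreme-point picture of Proposition~\ref{prop:queue-LP}. It gives no size for the gap, however, which the paper's explicit computation does.

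A small caution on the ``sanity check'' you set aside. Non-saturation of $\sigma(n)$ only yields $h_{\sigma(n)}<1$, not $h_{\sigma(n)}<B/U$. The quickest direct fix is the paper's observation that $h_{\sigma(1)}=\min\{1,B/u_{\sigma(1)}\}>B/U$.
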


\begin{proof}
\emph{Part 1 (Structure / uniqueness).}
By weak duality, any feasible \((x,z)\) to \eqref{eq:epigraph-primal} must satisfy
\[
B=\sum_i x_i \le \sum_i z u_i = zU \quad \Rightarrow\quad z \ge \frac{B}{U}.
\]
The pro-rata allocation \(x^{\mathrm{MM}}=(B/U)u\) is feasible and attains \(z=B/U\) (since \(x_i^{\mathrm{MM}}/u_i=B/U\)),
so it is optimal.
Uniqueness follows because at \(z^\star=B/U\) the constraints \(x_i\le z^\star u_i\) sum to
\(\sum_i x_i \le z^\star U = B\), but feasibility requires \(\sum_i x_i=B\), hence \emph{all} inequalities must bind:
\(x_i=z^\star u_i\) for every \(i\).

\emph{Part 2 (Strict dominance over queues).}
Let \(x^Q=Q_\sigma(B,u)\).
If the first-ranked account has \(u_{\sigma(1)}\ge B\), then the queue allocates \(x^Q_{\sigma(1)}=B\) and all others \(0\),
so \(z(x^Q)=B/u_{\sigma(1)}\).
Since \(n\ge 2\) and \(u_i>0\) for all \(i\), we have \(u_{\sigma(1)}<U\), and therefore
\[
z(x^Q)=\frac{B}{u_{\sigma(1)}} > \frac{B}{U} = z(x^{\mathrm{MM}}).
\]
If instead \(u_{\sigma(1)}<B\), then the queue fully saturates \(\sigma(1)\), so \(x^Q_{\sigma(1)}=u_{\sigma(1)}\)
and hence \(z(x^Q)=1\).
Because \(0<B<U\), we have \(B/U<1\), so again \(z(x^Q)=1 > B/U\).
This proves \eqref{eq:queue-dominance}.

\emph{Part 3 (Monotonicity and inversions).}
Under pro-rata, \(r_i(x^{\mathrm{MM}})=u_i-x^{\mathrm{MM}}_i=(1-\tfrac{B}{U})u_i\),
so \(u_i\ge u_j \Rightarrow r_i(x^{\mathrm{MM}})\ge r_j(x^{\mathrm{MM}})\).

For a strict queue rule, pick any two indices \(i,j\) with \(u_i>u_j>0\) and set \(B=u_i\).
Assume the queue ranks \(i\) ahead of \(j\) (this happens for some \((s,\sigma)\) by strictness).
Then the queue fully exhausts \(i\): \(x_i=u_i\) and does not touch \(j\): \(x_j=0\).
Thus \(r_i=0<r_j=u_j\), which is a monotonicity violation (an inversion) in the post-ADL ordering.
\end{proof}

\section{Execution-price estimation and failure metrics}
\label{app:pexec-estimation}

This appendix formalizes bounds on the execution-estimation term introduced in Section~\ref{sec:pexec-estimation}.
We (i) state the execution model and estimator assumptions, (ii) prove the regret--failure decomposition
\eqref{eq:total-error-decomp}, (iii) prove Proposition~\ref{prop:VT-linear-model}, and (iv) prove
Proposition~\ref{prop:queue-churn-omegaT}, a churn-robust queue-instability result.
\begin{proposition}[Bounding \(V_T\) under a linear execution model]
    \label{prop:VT-linear-model}
    Assume a scalar linear execution model with unknown impact slope \(\alpha_t\in[0,\alpha_{\max}]\) and a per-round
    quantity scale \(Q_t\le Q\) such that the ex post benchmark can be written as
    \(B_t^{\mathrm{needed}}=\alpha_t Q_t^2\) and the ex ante benchmark as \(\widehat B_t^{\mathrm{needed}}=\hat\alpha_t Q_t^2\).
    Suppose the severity controller tracks the ex ante benchmark (\(H_t=\widehat B_t^{\mathrm{needed}}\)) and the slope estimate
    \(\hat\alpha_t\) is produced by projected online gradient descent on the convex losses \(|\alpha-\alpha_t|\) over \([0,\alpha_{\max}]\).
    Then the cumulative undershoot satisfies
    \begin{equation}
    \label{eq:VT-bound}
    V_T \;\le\; Q^2\Big(C_1\,\alpha_{\max}\sqrt{T} + C_2\,P_T^{\alpha}\Big),
    \end{equation}
    for universal constants \(C_1,C_2\).
\end{proposition}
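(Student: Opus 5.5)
The plan is to reduce $V_T$ to a tracking error for the slope, and then to bound that tracking error by exploiting that the loss $|\alpha-\alpha_t|$ is minimized exactly at $\alpha_t$. Because of this, a projected OGD step with a Polyak-type step size moves $\hat\alpha$ onto the last observed slope. The path term then enters additively instead of being divided by a step size. Throughout I take $P_T^\alpha:=\sum_{t=2}^T|\alpha_t-\alpha_{t-1}|$, the variation of the true slopes. These slopes are the natural comparator, since they incur zero loss.

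\emph{Step 1 (reduction).} Since $H_t=\widehat B_t^{\mathrm{needed}}$, we have $[B_t^{\mathrm{needed}}-H_t]_+=[(\alpha_t-\hat\alpha_t)Q_t^2]_+\le Q^2|\alpha_t-\hat\alpha_t|$. Hence
\[
V_T\le Q^2\sum_{t=1}^T e_t,\qquad e_t:=|\hat\alpha_t-\alpha_t|.
\]
The right-hand side is $Q^2$ times the dynamic regret of $\{\hat\alpha_t\}$ on the losses $|\alpha-\alpha_t|$ against the comparator $\alpha_t^\star=\alpha_t$, whose loss is identically zero. So it suffices to show $\sum_t e_t\le C_1\alpha_{\max}\sqrt T+C_2P_T^\alpha$.

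\emph{Step 2 (why the fixed-step route fails).} Plugging $D=\alpha_{\max}$, $G=1$ into the dynamic inequality of Proposition~\ref{prop:dynamic-sharp} gives $\frac{\alpha_{\max}^2+2\alpha_{\max}P_T^\alpha}{2\eta}+\frac{\eta T}{2}$. With $\eta\asymp\alpha_{\max}/\sqrt T$ this yields a $P_T^\alpha\sqrt T$ term, which is too weak. A direct potential argument for fixed $\eta$ has the same defect. It gives the recursion $e_{t+1}\le\max(e_t-\eta,\eta)+|\alpha_{t+1}-\alpha_t|$, and a jump of size $d$ then costs about $d\cdot d/\eta$, which again forces a $1/\eta$ factor on the path.

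\emph{Step 3 (the key choice: Polyak step).} I would therefore run projected OGD with the data-dependent step $\eta_t=|\hat\alpha_t-\alpha_t|$, which is computable once $\alpha_t$ is revealed at the end of round $t$. The subgradient is $g_t=\mathrm{sign}(\hat\alpha_t-\alpha_t)$, so
\[
\hat\alpha_{t+1}=\Pi_{[0,\alpha_{\max}]}\bigl(\hat\alpha_t-\eta_tg_t\bigr)=\alpha_t ,
\]
and the projection is inactive because $\alpha_t\in[0,\alpha_{\max}]$. Consequently $e_{t+1}=|\alpha_t-\alpha_{t+1}|$ for $t\ge1$, and summing gives
\[
\sum_{t=1}^T e_t\le e_1+P_T^\alpha\le \alpha_{\max}+P_T^\alpha .
\]
Since $\alpha_{\max}\le\alpha_{\max}\sqrt T$, this is the claimed bound with $C_1=C_2=1$. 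If one insists on a step bounded by the diameter, the same argument goes through, because $\eta_t\le\alpha_{\max}$ automatically.

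\emph{Main obstacle.} The main point to justify is that this Polyak step is a legitimate instance of ``projected online gradient descent'' as the proposition is worded. It is natural here because full-information feedback reveals the minimizer. As a robustness remark, I would also cover the variant $\eta_t=\min\{|\hat\alpha_t-\alpha_t|,\bar\eta\}$ with a cap $\bar\eta\asymp\alpha_{\max}/\sqrt T$. In that case I would split the rounds into those where the cap binds and those where it does not. On non-binding rounds the reset argument gives an additive $P_T^\alpha$ contribution, and on binding rounds the $\sqrt T$ term absorbs the contribution. That split is where the constants $C_1,C_2$ would be determined.
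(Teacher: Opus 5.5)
Your Step~1 is the paper's reduction (Lemmas~\ref{lem:VT-vs-est} and~\ref{lem:quad-scale}). The two routes split at the slope-tracking step. The paper keeps the constant-step update written in the appendix. It then bounds $\sum_t|\hat\alpha_t-\alpha_t|$ in Lemma~\ref{lem:slope-track} by quoting a dynamic-regret rate $O(GD\sqrt{T}+GP_T)$ against $\alpha_t^\star=\alpha_t$.

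Your Step~2 correctly identifies why that rate does not follow. The paper's own inequality (Proposition~\ref{prop:dynamic-sharp}) has path term $DP_T/\eta$, which is $P_T^\alpha\sqrt{T}$ at $\eta\asymp\alpha_{\max}/\sqrt{T}$, and this is not slack in the analysis. Take $\eta=c\,\alpha_{\max}/\sqrt{T}$ and let $\alpha_t$ alternate between $0$ and $\alpha_{\max}$ in blocks of length $\sqrt{T}/c$. Fixed-step OGD then accumulates $\sum_t|\hat\alpha_t-\alpha_t|\approx\alpha_{\max}T/2$, while $\alpha_{\max}\sqrt{T}+P_T^\alpha\approx(1+c)\,\alpha_{\max}\sqrt{T}$. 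Appending a constant interior target, around which fixed-step iterates oscillate with average error $\eta/2$, shows that no other constant step repairs this. So the paper's fixed-step estimator does not satisfy the claimed bound.

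Your Polyak step $\eta_t=|\hat\alpha_t-\alpha_t|$ is what makes the statement true. It turns the estimator into ``play the last revealed slope'', which gives the exact identity $\sum_t e_t=e_1+P_T^\alpha$. Hence $C_1=C_2=1$, and the $\sqrt{T}$ factor is not needed at all. The only cost is the interpretive point you already flag. Your step is projected subgradient descent with the Polyak step for known optimal value $0$, not the fixed-$\eta$ update the appendix specifies, so you are proving the proposition for a generous reading of ``projected OGD''.

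One correction: drop the robustness remark about capping at $\bar\eta\asymp\alpha_{\max}/\sqrt{T}$. On the same block-switching sequence with blocks of length $\sqrt{T}$, the cap binds on $\Theta(T)$ rounds and the errors average about $\alpha_{\max}/2$. So $\sum_t e_t\approx\alpha_{\max}T/2$, which neither $\alpha_{\max}\sqrt{T}$ nor $P_T^\alpha=O(\alpha_{\max}\sqrt{T})$ can absorb. In general the capped recursion $e_{t+1}\le(e_t-\bar\eta)_++|\alpha_{t+1}-\alpha_t|$ gives $\sum_t e_t\le\lceil\alpha_{\max}/\bar\eta\rceil\,(e_1+P_T^\alpha)$. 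Caps of order $\alpha_{\max}$ are therefore harmless. A cap of order $\alpha_{\max}/\sqrt{T}$ reintroduces exactly the $P_T^\alpha\sqrt{T}$ behavior you diagnosed in Step~2.
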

    
Note that Proposition~\ref{prop:VT-linear-model} isolates the core mechanism: the variation of execution conditions across rounds,
captured by \(P_T^{\alpha}\), governs cumulative failure more than the magnitude of any single round.
Large one-off deficits increase the scale through \(Q\), but sustained regime-switching in liquidity produces \(\Omega(T)\) under-coverage
even when the controller is otherwise no-regret.

\subsection{Setup and notation}

Fix a horizon of ADL rounds \(t=1,\dots,T\). Let \(H_t={\bf 1}^\top x_t\) be the executed severity in dollars.
Let \(B_t^{\mathrm{needed}}\) denote the ex post benchmark and \(\widehat B_t^{\mathrm{needed}}\) the ex ante benchmark
computed from estimated quantities \(\hat q\) as in \eqref{eq:needed-expost}--\eqref{eq:needed-exante}.
Define the cumulative undershoot (failure) metric
\[
V_T := \sum_{t=1}^T [B_t^{\mathrm{needed}}-H_t]_+.
\]
Define the deployed convex surrogate with target \(b\):
\[
\ell_t(x;b) := \lambda_{\mathrm{track}}|{\bf 1}^\top x-b|
+ \lambda_{\mathrm{fair}}\max_{i\in W_t}\frac{x_{i,t}}{u_{i,t}+\varepsilon}.
\]
Write \(\ell_t(x):=\ell_t(x;B_t^{\mathrm{needed}})\) and \(\hat\ell_t(x):=\ell_t(x;\widehat B_t^{\mathrm{needed}})\).

\subsection{Execution model and estimator}

We use a local directional linear impact model (a standard first-order approximation):
for a signed closeout quantity \(q\ge 0\),
\begin{equation}
\label{eq:linear-impact}
p^{\mathrm{liq,exec}}_t(q) = p^{\mathrm{mark}}_t - \alpha_t q
\qquad\text{(sell / close long)},
\end{equation}
and the symmetric buy-side variant \(p^{\mathrm{liq,exec}}_t(q)=p^{\mathrm{mark}}_t+\alpha_t q\) for closing shorts.
We treat \(\alpha_t\in[0,\alpha_{\max}]\) as the local liquidity slope in round \(t\).

\paragraph{Scalar reduction.}
To isolate the estimation channel, we work with an aggregated scalar sufficient statistic \(Q_t\ge 0\) (a round-level
quantity scale), and assume the benchmark can be written as
\begin{equation}
\label{eq:benchmark-scalar}
B_t^{\mathrm{needed}} = \alpha_t Q_t^2,
\qquad
\widehat B_t^{\mathrm{needed}} = \hat\alpha_t Q_t^2,
\qquad
Q_t\le Q.
\end{equation}
This abstraction is justified by the fact that linear impact implies gap\(\times\)quantity is quadratic in a quantity scale.

\paragraph{Multiple fills and directionality}
\label{app:multi-fill}
Extensions to multiple fills and directionality follow by summing quadratics, which preserves the same proof structure.

\paragraph{Online estimator.}
We estimate \(\alpha_t\) with projected online gradient descent on the convex losses
\[
\phi_t(\alpha) := |\alpha-\alpha_t|, \qquad \alpha\in[0,\alpha_{\max}],
\]
using the update
\[
\hat\alpha_{t+1} = \Pi_{[0,\alpha_{\max}]}\Big(\hat\alpha_t - \eta\, g_t\Big),
\qquad g_t\in\partial \phi_t(\hat\alpha_t).
\]

\subsection{Regret--failure decomposition}

\begin{lemma}[Loss perturbation bound]
\label{lem:loss-perturb}
For any \(x\) and any targets \(b,\hat b\),
\[
\big|\ell_t(x;b)-\ell_t(x;\hat b)\big|
\le
\lambda_{\mathrm{track}}\,|b-\hat b|.
\]
\end{lemma}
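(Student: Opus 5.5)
The argument is a direct cancellation. Write out the two losses with the same allocation \(x\):
\[
\ell_t(x;b)=\lambda_{\mathrm{track}}|{\bf 1}^\top x-b|+\lambda_{\mathrm{fair}}\max_{i\in W_t}\frac{x_{i}}{u_{i,t}+\varepsilon},
\qquad
\ell_t(x;\hat b)=\lambda_{\mathrm{track}}|{\bf 1}^\top x-\hat b|+\lambda_{\mathrm{fair}}\max_{i\in W_t}\frac{x_{i}}{u_{i,t}+\varepsilon}.
\]
The fairness term does not depend on the target, so it is identical in both expressions. Since \(\varepsilon>0\), it is also finite for every \(x\), and subtracting it causes no issue.

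After the cancellation we are left with
\[
\big|\ell_t(x;b)-\ell_t(x;\hat b)\big|=\lambda_{\mathrm{track}}\,\big||{\bf 1}^\top x-b|-|{\bf 1}^\top x-\hat b|\big|.
\]
This uses \(\lambda_{\mathrm{track}}\ge 0\), which is implicit since it is a nonnegative penalty weight.

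Next apply the reverse triangle inequality \(\big||a|-|c|\big|\le |a-c|\) with \(a={\bf 1}^\top x-b\) and \(c={\bf 1}^\top x-\hat b\). The term \({\bf 1}^\top x\) cancels in \(a-c\), leaving \(|\hat b-b|=|b-\hat b|\). This gives the claimed bound \(\lambda_{\mathrm{track}}|b-\hat b|\).

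Nothing here is really an obstacle. The only points to state explicitly are that \(x\) is arbitrary, so feasibility is not needed, and that the fairness term is target-independent.

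This lemma then feeds the decomposition \eqref{eq:total-error-decomp}. Applying it twice, once for \(x_t\) and once for the comparator \(x^{\pi'}_t\), with \(b=B_t^{\mathrm{needed}}\) and \(\hat b=\widehat B_t^{\mathrm{needed}}\), and summing over \(t\), produces the factor \(2\lambda_{\mathrm{track}}\).
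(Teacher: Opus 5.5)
Your proposal is correct and matches the paper's proof: the fairness term is independent of the target and cancels, and the reverse triangle inequality bounds $\big||H-b|-|H-\hat b|\big|$ by $|b-\hat b|$. Your explicit remarks that $\lambda_{\mathrm{track}}\ge 0$ and that the fairness term is finite because $\varepsilon>0$ are details the paper leaves implicit, and your account of how the lemma is applied twice to give the factor $2\lambda_{\mathrm{track}}$ also matches the paper's use of it.
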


\begin{proof}
The fairness term does not depend on \(b\). For the tracking term,
\(\big||H-b|-|H-\hat b|\big|\le |b-\hat b|\) by the reverse triangle inequality.
\end{proof}

\begin{lemma}[Comparator transfer]
\label{lem:comp-transfer}
Let \(\mathcal{P}\) be any comparator class. Then for any policy \(\pi\) with sequence \(\{x_t\}\),
\[
\sum_{t=1}^T \ell_t(x_t)
\le
\min_{\pi'\in\mathcal{P}}\sum_{t=1}^T \ell_t(x^{\pi'}_t)
+
\mathrm{Reg}_T^{\mathcal{P}}(\pi;\hat\ell)
+
2\lambda_{\mathrm{track}}\sum_{t=1}^T |B_t^{\mathrm{needed}}-\widehat B_t^{\mathrm{needed}}|.
\]
\end{lemma}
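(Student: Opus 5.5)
The proof is a two-sided application of Lemma~\ref{lem:loss-perturb}, bridged by the definition of policy-class regret under the ex ante losses. We use $\mathrm{Reg}_T^{\mathcal P}(\pi;\hat\ell):=\sum_t \hat\ell_t(x_t)-\min_{\pi'\in\mathcal P}\sum_t\hat\ell_t(x_t^{\pi'})$, i.e.\ regret measured against the surrogate with target $\widehat B_t^{\mathrm{needed}}$. Throughout, write $\delta_t:=|B_t^{\mathrm{needed}}-\widehat B_t^{\mathrm{needed}}|$.

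\emph{Step 1: pass from ex post to ex ante loss for the policy.} Apply Lemma~\ref{lem:loss-perturb} with $b=B_t^{\mathrm{needed}}$ and $\hat b=\widehat B_t^{\mathrm{needed}}$ at $x=x_t$. This gives $\ell_t(x_t)\le\hat\ell_t(x_t)+\lambda_{\mathrm{track}}\delta_t$. Summing over $t$:
\[
\sum_{t=1}^T\ell_t(x_t)\le\sum_{t=1}^T\hat\ell_t(x_t)+\lambda_{\mathrm{track}}\sum_{t=1}^T\delta_t .
\]

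\emph{Step 2: invoke the regret definition.} By definition of $\mathrm{Reg}_T^{\mathcal P}(\pi;\hat\ell)$,
\[
\sum_t\hat\ell_t(x_t)=\min_{\pi'\in\mathcal P}\sum_t\hat\ell_t(x_t^{\pi'})+\mathrm{Reg}_T^{\mathcal P}(\pi;\hat\ell).
\]

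\emph{Step 3: pass back from ex ante to ex post loss for the comparator.} Let $\pi^\dagger$ attain $\min_{\pi'}\sum_t\ell_t(x_t^{\pi'})$. If $\mathcal P$ is infinite, use an $\epsilon$-minimizer and let $\epsilon\to0$. Then
\[
\min_{\pi'}\sum_t\hat\ell_t(x_t^{\pi'})\le\sum_t\hat\ell_t(x_t^{\pi^\dagger})\le\sum_t\ell_t(x_t^{\pi^\dagger})+\lambda_{\mathrm{track}}\sum_t\delta_t,
\]
where the second inequality is Lemma~\ref{lem:loss-perturb} applied pointwise at $x=x_t^{\pi^\dagger}$.

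Chaining Steps 1--3 yields the claimed inequality. The two perturbation terms, one from the policy side and one from the comparator side, produce the factor $2\lambda_{\mathrm{track}}$.

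The main subtlety is that the two minimizers (over $\ell$ and over $\hat\ell$) may differ. Step 3 handles this by evaluating $\hat\ell$ at the $\ell$-optimal comparator rather than trying to compare the two minima directly. The argument also requires that comparator trajectories $x_t^{\pi'}$ are held fixed across the two losses. This holds by the replay invariants, since the state path is fixed and independent of the target used for evaluation. We would state this explicitly.
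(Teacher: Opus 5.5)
Your proof is correct and is essentially the paper's argument: Lemma~\ref{lem:loss-perturb} is applied once at the policy's iterates $x_t$ and once at a comparator trajectory, the two are bridged by the definition of $\mathrm{Reg}_T^{\mathcal P}(\pi;\hat\ell)$, and the two perturbation terms give the factor $2\lambda_{\mathrm{track}}$. The only difference is that the paper states the comparator-side bound for an arbitrary $\pi'\in\mathcal P$ and takes the infimum at the end, which makes your $\epsilon$-minimizer step unnecessary.
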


\begin{proof}
Apply Lemma~\ref{lem:loss-perturb} to both \(\pi\) and any comparator \(\pi'\):
\[
\ell_t(x_t) \le \hat\ell_t(x_t) + \lambda_{\mathrm{track}}|b_t-\hat b_t|,
\qquad
\hat\ell_t(x_t^{\pi'}) \le \ell_t(x_t^{\pi'}) + \lambda_{\mathrm{track}}|b_t-\hat b_t|.
\]
Summing over \(t\) and subtracting \(\sum_t \hat\ell_t(x_t^{\pi'})\) yields the claim after minimizing over \(\pi'\in\mathcal{P}\).
\end{proof}

\subsection{Proof of Proposition~\ref{prop:VT-linear-model}}

\begin{lemma}[Failure is optimistic estimation under target tracking]
\label{lem:VT-vs-est}
If the controller tracks the ex ante benchmark exactly, i.e. \(H_t=\widehat B_t^{\mathrm{needed}}\), then
\[
V_T = \sum_{t=1}^T [B_t^{\mathrm{needed}}-\widehat B_t^{\mathrm{needed}}]_+.
\]
\end{lemma}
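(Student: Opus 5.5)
The claim follows by substituting the tracking hypothesis directly into the definition of \(V_T\) in \eqref{eq:VT-failure}. Recall that
\[
V_T=\sum_{t=1}^T\big[B_t^{\mathrm{needed}}-H_t\big]_+ .
\]
The positive-part map \(y\mapsto [y]_+\) is applied summand by summand, and \(V_T\) has no other dependence on the policy. So it is enough to identify each summand.

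The plan runs in three steps.
\begin{enumerate}
\item Fix a round \(t\) and use the assumption \(H_t=\widehat B_t^{\mathrm{needed}}\). This rewrites \(B_t^{\mathrm{needed}}-H_t\) as \(B_t^{\mathrm{needed}}-\widehat B_t^{\mathrm{needed}}\).
\item Apply \([\cdot]_+\) to both sides of that identity.
\item Sum over \(t=1,\dots,T\) to get the stated equality.
\end{enumerate}

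Nothing from the regret analysis, such as Lemma~\ref{lem:comp-transfer}, is required.

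One point deserves a remark. The identity is an equality, not just an inequality, because the hypothesis is exact tracking. If tracking were only approximate, with \(|H_t-\widehat B_t^{\mathrm{needed}}|\le e_t\), the same argument would still go through. Since \([\cdot]_+\) is \(1\)-Lipschitz and monotone, it would give
\[
V_T\le \sum_t\big[B_t^{\mathrm{needed}}-\widehat B_t^{\mathrm{needed}}\big]_+ +\sum_t e_t .
\]
This is the form that justifies the heuristic ``\(V_T\lesssim\cdots\)'' stated in Section~\ref{sec:pexec-estimation}.

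I expect no real obstacle here, since the statement is definitional. The only care needed is to check that \(H_t\) and \(\widehat B_t^{\mathrm{needed}}\) are both measured in dollars on the same round, which the setup guarantees.

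This lemma then feeds Proposition~\ref{prop:VT-linear-model}. Under \eqref{eq:benchmark-scalar}, each summand becomes
\[
\big[(\alpha_t-\hat\alpha_t)Q_t^2\big]_+\le Q^2|\alpha_t-\hat\alpha_t| .
\]
This reduces the bound on \(V_T\) to the dynamic regret of projected OGD on the losses \(\phi_t\).
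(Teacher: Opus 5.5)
Your proposal is correct and is the same as the paper's proof, which just substitutes \(H_t=\widehat B_t^{\mathrm{needed}}\) into the definition \(V_T=\sum_t [B_t^{\mathrm{needed}}-H_t]_+\) summand by summand. Your extra remark on approximate tracking is also valid, since \([a+b]_+\le [a]_+ + [b]_+\).
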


\begin{proof}
Immediate from the definition \(V_T=\sum_t [B_t^{\mathrm{needed}}-H_t]_+\).
\end{proof}

\begin{lemma}[Quadratic scaling]
\label{lem:quad-scale}
Under \eqref{eq:benchmark-scalar},
\[
[B_t^{\mathrm{needed}}-\widehat B_t^{\mathrm{needed}}]_+
=
[\alpha_t-\hat\alpha_t]_+\,Q_t^2
\le
|\alpha_t-\hat\alpha_t|\,Q_t^2
\le
Q^2\,|\alpha_t-\hat\alpha_t|.
\]
\end{lemma}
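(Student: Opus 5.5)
The claim is a direct algebraic consequence of the scalar reduction \eqref{eq:benchmark-scalar}. The plan is to prove the equality first and then chain two elementary inequalities.

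\emph{Equality.} Substituting $B_t^{\mathrm{needed}}=\alpha_t Q_t^2$ and $\widehat B_t^{\mathrm{needed}}=\hat\alpha_t Q_t^2$ gives $B_t^{\mathrm{needed}}-\widehat B_t^{\mathrm{needed}}=(\alpha_t-\hat\alpha_t)Q_t^2$, using the same $Q_t$ in both benchmarks. Since $Q_t^2\ge 0$, the positive-part map is positively homogeneous: $[c\,a]_+=c\,[a]_+$ for $c\ge 0$. Applying this with $c=Q_t^2$ and $a=\alpha_t-\hat\alpha_t$ gives $[B_t^{\mathrm{needed}}-\widehat B_t^{\mathrm{needed}}]_+=[\alpha_t-\hat\alpha_t]_+Q_t^2$. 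This is immediate if one splits on the sign of $\alpha_t-\hat\alpha_t$, or treats $Q_t=0$ trivially.

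\emph{First inequality.} I would use $[a]_+\le |a|$ for every real $a$ and multiply by the nonnegative factor $Q_t^2$.

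\emph{Second inequality.} From $0\le Q_t\le Q$ we get $Q_t^2\le Q^2$, and multiplying by $|\alpha_t-\hat\alpha_t|\ge 0$ gives the final bound.

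There is no real obstacle. The only point that needs care is that $Q_t\ge 0$ (so that squaring preserves $Q_t\le Q$) and that both benchmarks share the same quantity scale $Q_t$; both are assumed in \eqref{eq:benchmark-scalar}.

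This lemma, combined with Lemma~\ref{lem:VT-vs-est}, reduces $V_T$ to $Q^2\sum_t|\alpha_t-\hat\alpha_t|$. The dynamic OGD bound for the estimator (as in Proposition~\ref{prop:dynamic-sharp}) will then be applied to obtain Proposition~\ref{prop:VT-linear-model}.
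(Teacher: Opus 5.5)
Your proposal is correct and matches the paper's proof, which is just ``algebra plus $Q_t\le Q$'': substitute the scalar benchmarks, pull the nonnegative factor $Q_t^2$ out of the positive part, use $[a]_+\le|a|$, and then bound $Q_t^2\le Q^2$. You also rightly note that this last step needs $Q_t\ge 0$, which is assumed in the scalar reduction \eqref{eq:benchmark-scalar}.
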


\begin{proof}
Algebra plus \(Q_t\le Q\).
\end{proof}

\begin{lemma}[Dynamic tracking bound for the slope estimator]
\label{lem:slope-track}
Let \(P_T^{\alpha}:=\sum_{t=2}^T |\alpha_t-\alpha_{t-1}|\).
There exist universal constants \(C_1,C_2\) and a choice of step size \(\eta\) such that the projected OGD estimator satisfies
\[
\sum_{t=1}^T |\hat\alpha_t-\alpha_t|
\le
C_1\,\alpha_{\max}\sqrt{T} + C_2\,P_T^{\alpha}.
\]
\end{lemma}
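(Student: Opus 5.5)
The plan is to exploit the freedom in ``a choice of step size'' by taking a round-dependent step, namely the Polyak step for the sharp loss \(\phi_t(\alpha)=|\alpha-\alpha_t|\). A constant step will not work: the burn-down argument sketched below shows that a constant \(\eta\) can be forced into \(\Omega(\alpha_{\max}T)\) cumulative error while \(P_T^\alpha=O(\alpha_{\max}\sqrt T)\), so no fixed \(\eta\) gives a bound linear in \(P_T^{\alpha}\). With the Polyak step the estimator becomes a one-step-lag tracker, and the bound follows from a single telescoping inequality.

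\textbf{Step 1 (the update).} Since \(\phi_t\) is \(1\)-sharp, whenever \(\hat\alpha_t\neq\alpha_t\) every subgradient is \(g_t=\mathrm{sign}(\hat\alpha_t-\alpha_t)\) with \(|g_t|=1\). Choose
\[
\eta_t:=\frac{\phi_t(\hat\alpha_t)-\min\phi_t}{g_t^2}=|\hat\alpha_t-\alpha_t|,
\]
with \(\eta_t=0\) when \(\hat\alpha_t=\alpha_t\). Then \(\hat\alpha_t-\eta_t g_t=\alpha_t\). Because \(\alpha_t\in[0,\alpha_{\max}]\), the projection \(\Pi_{[0,\alpha_{\max}]}\) acts as the identity, and so \(\hat\alpha_{t+1}=\alpha_t\) for every \(t\ge1\).

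\textbf{Step 2 (per-round error).} For \(t\ge2\) this gives \(|\hat\alpha_t-\alpha_t|=|\alpha_{t-1}-\alpha_t|\). For \(t=1\), any initialisation \(\hat\alpha_1\in[0,\alpha_{\max}]\) gives \(|\hat\alpha_1-\alpha_1|\le\alpha_{\max}\).

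\textbf{Step 3 (sum).} Summing the per-round errors,
\[
\sum_{t=1}^T|\hat\alpha_t-\alpha_t|\le \alpha_{\max}+\sum_{t=2}^T|\alpha_t-\alpha_{t-1}|=\alpha_{\max}+P_T^{\alpha}\le \alpha_{\max}\sqrt T+P_T^{\alpha}.
\]
This is the claim with \(C_1=C_2=1\). Combined with Lemmas~\ref{lem:VT-vs-est} and~\ref{lem:quad-scale}, it yields Proposition~\ref{prop:VT-linear-model}.

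\textbf{Main obstacle.} The difficulty is justifying that this is a legitimate instance of ``projected OGD with a choice of step size'' rather than simply plugging in \(\alpha_t\). The Polyak step requires the value \(\phi_t(\hat\alpha_t)-\min\phi_t\), which is the realized loss. That loss is revealed after round \(t\), at the same time as the gradient. So \(\hat\alpha_{t+1}\) still depends only on information up to round \(t\), and the estimator is causal; I will state this explicitly.

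For contrast, I will include a short remark on why constant \(\eta\) fails. With a fixed step, error \(e_t\) evolves as \(e_{t+1}\le\max(e_t-\eta,\eta)+|\alpha_{t+1}-\alpha_t|\). A jump of size \(\alpha_{\max}\) therefore costs about \(\alpha_{\max}^2/\eta\) to burn down. Now repeat such jumps roughly \(T\eta/\alpha_{\max}\) times. The total error is then \(\Omega(\alpha_{\max}T)\), while \(P_T^\alpha=O(\alpha_{\max}\sqrt T)\) when \(\eta=\alpha_{\max}/\sqrt T\). This is why the step size must be chosen adaptively.
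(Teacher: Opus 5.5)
Your proof is correct, and it takes a genuinely different route from the paper's. The paper's proof plugs \(G=1\), \(D=\alpha_{\max}\) and the comparator \(\alpha_t^\star=\alpha_t\) into a ``standard'' constant-step bound of the form \(O(GD\sqrt T+GP_T)\). That bound does not follow from the paper's own dynamic mirror-descent inequality in Appendix~\ref{app:regret-bounds-proofs}. With \(\eta\asymp D/(G\sqrt T)\), the term \(DP_T/\eta\) equals \(G\sqrt T\,P_T\). The tuned step gives \(G\sqrt{T(D^2+2DP_T)}\). Neither has the form \(C_1\alpha_{\max}\sqrt T+C_2P_T^\alpha\). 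So your observation that no fixed step works is the real content here, not a technicality. Your Polyak step is what makes the lemma true. It uses structure the generic bound ignores: the comparator is the exact minimizer of a \(1\)-sharp loss. This collapses the iterate to \(\hat\alpha_{t+1}=\alpha_t\) and yields the sharper bound \(\alpha_{\max}+P_T^\alpha\).

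Your route has a cost. It departs from the fixed-\(\eta\) update in the paper's ``Online estimator'' paragraph, and it uses the revealed loss value (equivalently \(\alpha_t\) itself). The estimator is therefore really the last-value forecaster, and the OGD framing is nominal. This is causal in the full-information setting the paper assumes, where the loss \(|\alpha-\alpha_t|\) is revealed after round \(t\). It also suffices for Proposition~\ref{prop:VT-linear-model}, which does not fix a step schedule. You should still state explicitly that the lemma is proved under the round-dependent-step reading and is false under the fixed-step one.

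Your side remark needs two repairs to be airtight for every fixed \(\eta\). First, the burn-down cost comes from the lower estimate \(|\hat\alpha_{t+1}-\hat\alpha_t|\le\eta\), not from the upper recursion you wrote; that argument rules out small steps. Second, large steps (\(\eta\gg\alpha_{\max}/\sqrt T\)) must be ruled out separately. Take a constant off-grid interior target, so \(P_T^\alpha=0\). On it, constant-step OGD oscillates with error of order \(\min\{\eta,\alpha_{\max}\}\) per round.
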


\begin{proof}
This is a specialization of standard mirror-descent dynamic-regret bounds to the 1D convex, 1-Lipschitz losses
\(\phi_t(\alpha)=|\alpha-\alpha_t|\) on the diameter-\(\alpha_{\max}\) domain \([0,\alpha_{\max}]\).
Take the dynamic comparator \(\alpha_t^\star=\alpha_t\), for which \(\sum_t \phi_t(\alpha_t^\star)=0\), so the dynamic regret upper bound
becomes an upper bound on \(\sum_t |\hat\alpha_t-\alpha_t|\).
Concretely, a standard bound of the form
\(\mathrm{Reg}^{\mathrm{dyn}}_T \le O(GD\sqrt{T}+G P_T^{\alpha})\) with \(G=1\), \(D=\alpha_{\max}\)
yields the stated form (absorbing constants).
\end{proof}

\begin{proof}[Proof of Proposition~\ref{prop:VT-linear-model}]
By Lemma~\ref{lem:VT-vs-est} and Lemma~\ref{lem:quad-scale},
\[
V_T \le Q^2 \sum_{t=1}^T |\hat\alpha_t-\alpha_t|.
\]
Apply Lemma~\ref{lem:slope-track} to complete the proof.
\end{proof}

\subsection{A churn-robust instability result for queues}
\label{app:queue-churn-instability}

This section replaces the earlier two-account score-perturbation argument with a stronger churn-robust statement.
In practice, queue mechanisms frequently fully close early-ranked accounts, which removes them from the winner set and changes subsequent round geometry.
The result below shows that \emph{this churn alone} is sufficient to induce $\Omega(T)$ effective execution nonstationarity.

\begin{definition}[Churn / removal rule]
Fix initial winners $W_1$. Given an allocation $x_t$, define the next-round winner set by removing any fully
closed winner:
\[
W_{t+1} := \{i\in W_t : x_{i,t} < u_i\}.
\]
That is, if an account is fully haircutted ($x_{i,t}=u_i$), it is closed and never returns.
\end{definition}

\begin{proposition}[Queues induce $\Omega(T)$ effective execution nonstationarity under churn]
\label{prop:queue-churn-omegaT}
Fix a horizon $T\ge 2$. Let the initial winner set contain $N:=2T$ distinct accounts,
$W_1=\{1,2,\dots,2T\}$, each with identical capacity $u_i \equiv 1$.
Let the per-round budget be constant $B_t\equiv 1$ and define executed quantities by $q_{i,t}(\pi):=x_{i,t}(\pi)$.
Assign per-account impact slopes alternating along the (fixed) queue order:
\[
\alpha_{2k-1}=\alpha_{\min},\qquad \alpha_{2k}=\alpha_{\max},\qquad k=1,\dots,T,
\]
with $0\le \alpha_{\min}<\alpha_{\max}$.
Define the policy-induced effective slope by
\[
\bar\alpha_t(\pi)
:=
\frac{\sum_{i\in W_t} \alpha_i\, q_{i,t}(\pi)^2}{\sum_{i\in W_t} q_{i,t}(\pi)^2}.
\]

Consider the greedy queue policy with fixed ordering $\sigma(i)=i$ (no score perturbations) that fills budget
by fully closing the earliest available account. Under the churn rule above,
\[
\bar\alpha_t(\mathrm{queue}) \in \{\alpha_{\min},\alpha_{\max}\}
\quad\text{and}\quad
P_T^{\bar\alpha(\mathrm{queue})}
:=
\sum_{t=2}^T |\bar\alpha_t(\mathrm{queue})-\bar\alpha_{t-1}(\mathrm{queue})|
=
(T-1)(\alpha_{\max}-\alpha_{\min})
=
\Omega(T).
\]

In contrast, the pro-rata policy on the \emph{same instance} allocates $x_{i,t}^{PR}=B_t/N=1/(2T)$ to every
account each round, so no account is fully closed for $t\le T$ and therefore $W_t\equiv W_1$.
Moreover $\bar\alpha_t(\mathrm{pro\text{-}rata})\equiv (\alpha_{\min}+\alpha_{\max})/2$ for all $t$, hence
$P_T^{\bar\alpha(\mathrm{pro\text{-}rata})}=0$.
\end{proposition}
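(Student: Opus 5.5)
The plan is a direct computation on both policies. Neither part needs an earlier regret result: the queue part is an induction on the winner set, and the pro-rata part is an averaging identity. For the queue, I will prove by induction on $t$ that $W_t=\{t,t+1,\dots,2T\}$ for $1\le t\le T$. The base case is $W_1=\{1,\dots,2T\}$.

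For the inductive step, suppose $W_t=\{t,\dots,2T\}$. The earliest available account in the fixed order $\sigma(i)=i$ is $t$. Since $u_t=1=B_t$, the greedy rule of Definition~\ref{def:queue-map} sets $x_{t,t}=\min\{u_t,B_t\}=1$. This exhausts the budget, so $x_{i,t}=0$ for every $i\in W_t\setminus\{t\}$. Account $t$ is then fully closed, since $x_{t,t}=u_t$, and every other account is untouched. The churn rule therefore gives $W_{t+1}=\{t+1,\dots,2T\}$. Because $2T\ge T+1$, the set $W_t$ is nonempty for every $t\le T$, so each round has a first-ranked account with capacity $1$.

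Next I compute the queue's effective slope. In round $t$ exactly one executed quantity is nonzero, namely $q_{t,t}=1$. Hence the denominator of $\bar\alpha_t$ equals $1$, which is nonzero, and
\[
\bar\alpha_t(\mathrm{queue})=\alpha_t=\begin{cases}\alpha_{\min}, & t \text{ odd},\\ \alpha_{\max}, & t\text{ even}.\end{cases}
\]
This shows $\bar\alpha_t(\mathrm{queue})\in\{\alpha_{\min},\alpha_{\max}\}$. Consecutive rounds always differ in parity, so every increment satisfies $|\bar\alpha_t-\bar\alpha_{t-1}|=\alpha_{\max}-\alpha_{\min}$. Summing the $T-1$ increments gives $P_T^{\bar\alpha(\mathrm{queue})}=(T-1)(\alpha_{\max}-\alpha_{\min})$, which is $\Omega(T)$ because $\alpha_{\max}>\alpha_{\min}$ is fixed.

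For pro-rata, the formula $x^{\mathrm{PR}}_{i,t}=(u_i/U_t)B_t$ gives $1/(2T)$ per account whenever $W_t=W_1$. I will show $W_t\equiv W_1$ for all $t\le T$ by induction, and this is the one step that needs care. Under the literal churn rule, where $u_i$ is a static capacity, closure requires $x_{i,t}=u_i=1$, which cannot occur because $1/(2T)<1$. To cover the reading in which capacities are depleted by earlier haircuts, I will also check cumulative seizure. Each account loses at most $T\cdot\frac{1}{2T}=\tfrac12<1$ over the horizon, so no account is ever fully closed under either interpretation. I will add a remark noting this.

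It remains to compute the pro-rata effective slope. With all $q_{i,t}=1/(2T)$ equal, the common weight $q^2$ cancels, and $\bar\alpha_t$ reduces to the plain average of the slopes over the $2T$ accounts. Exactly $T$ of these slopes equal $\alpha_{\min}$ and $T$ equal $\alpha_{\max}$, so $\bar\alpha_t(\mathrm{pro\text{-}rata})=(\alpha_{\min}+\alpha_{\max})/2$ for every $t$. The sequence is constant, hence $P_T^{\bar\alpha(\mathrm{pro\text{-}rata})}=0$.
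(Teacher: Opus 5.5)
Your proposal is correct and follows the paper's proof essentially step for step: the queue closes account $t$ in round $t$, so $\bar\alpha_t(\mathrm{queue})=\alpha_t$ alternates. For pro-rata, the cumulative haircut $T\cdot\frac{1}{2T}=\frac12<1$ keeps $W_t\equiv W_1$, and the equal weights cancel to give the average slope. Your explicit induction on $W_t=\{t,\dots,2T\}$ and your remark covering both the static and the depleted capacity readings simply make precise what the paper states in one line.
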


\begin{proof}
\emph{Queue.} Because $B_t=1$ and each active account has $u_i=1$, the greedy queue allocation at round $t$
places the entire budget on the first available account in the fixed ordering:
\[
x^{Q}_{t} = e_{t}\quad\text{(in the initial indexing)},\qquad
x^{Q}_{i,t}=1 \text{ for } i=t,\ \ x^{Q}_{i,t}=0 \text{ for } i\neq t,
\]
and then removes that fully closed account from $W_{t+1}$. With $q_{i,t}=x_{i,t}$, we have
$\sum_i q_{i,t}^2=1$ and therefore
\[
\bar\alpha_t(\mathrm{queue}) = \alpha_t.
\]
By construction $\alpha_t$ alternates between $\alpha_{\min}$ and $\alpha_{\max}$ across $t=1,\dots,T$, so
$|\bar\alpha_t-\bar\alpha_{t-1}|=\alpha_{\max}-\alpha_{\min}$ for each $t\ge 2$, yielding
$P_T^{\bar\alpha(\mathrm{queue})}=(T-1)(\alpha_{\max}-\alpha_{\min})$.

\emph{Pro-rata.} Pro-rata allocates $x_{i,t}^{PR}=1/(2T)$ to each account each round. Over $T$ rounds, each
account receives total haircut $T\cdot (1/(2T))=1/2<u_i=1$, hence no account is fully closed and $W_t$ is constant.
Since all $q_{i,t}^2$ are equal within a round, $\bar\alpha_t$ equals the average of the $\alpha_i$ values in $W_1$.
Because exactly half of the accounts have slope $\alpha_{\min}$ and half have $\alpha_{\max}$, this average is
$(\alpha_{\min}+\alpha_{\max})/2$, constant over $t$. Thus $P_T^{\bar\alpha(\mathrm{pro\text{-}rata})}=0$.
\end{proof}

\begin{remark}[Interpretation]
Proposition~\ref{prop:queue-churn-omegaT} shows that queue-induced churn can convert \emph{static} cross-sectional
heterogeneity (fixed $\{\alpha_i\}$) into \emph{maximally nonstationary} time-series behavior in the effective
execution slope. This removes repeated-account availability and round-by-round score-perturbation assumptions.
\end{remark}

\end{document}